\def\a{{\alpha}}
\def\l{{\lambda}}
\def\s{{\sigma}}
\def\t{{\tau}}
\def\g{{\gamma}}
\newcommand{\be}{\begin{equation}}
\newcommand{\ee}{\end{equation}}
\newcommand{\bea}{\begin{eqnarray}}
\newcommand{\eea}{\end{eqnarray}}
\newcommand{\beas}{\begin{eqnarray*}}
\newcommand{\eeas}{\end{eqnarray*}}
\newtheorem{theorem}{Theorem}[section]
\newtheorem{definition}[theorem]{Definition}
\newtheorem{proposition}[theorem]{Proposition}
\newtheorem{corollary}[theorem]{Corollary}
\newtheorem{lemma}[theorem]{Lemma}
\newtheorem{remark}[theorem]{Remark}
\newtheorem{example}[theorem]{Example}
\newtheorem{examples}[theorem]{Examples}
\newtheorem{foo}[theorem]{Remarks}
\newenvironment{Example}{\begin{example}\rm}{\end{example}}
\newenvironment{Remark}{\begin{remark}\rm}{\end{remark}}
\newenvironment{proof}{\addvspace{\medskipamount}\par\noindent{\it Proof}.}
{\unskip\nobreak\hfill$\Box$\par\addvspace{\medskipamount}}
\newcommand{\E}[1]{{\mathbb{E}}\left[#1\right]}
\newcommand{\Ehatf}[1]{{\hat{\mathbb{E}}_{\F_0}}\left[#1\right]}
\newcommand{\Ebarf}[1]{{\bar{\mathbb{E}}_{\F_0}}\left[#1\right]}
\newcommand{\EbarStau}[1]{{\mathbb{E}}_{\bar{\F}^S_\tau}\left[#1\right]}
\newcommand{\EFn}[1]{{\mathbb{E}_{\F^n}}\left[#1\right]}
\newcommand{\Et}[1]{{\mathbb{E}_{\F_t}}\left[#1\right]}
\newcommand{\EFS}[1]{{\mathbb{E}_{\F^S}}\left[#1\right]}
\newcommand{\EbarQSs}[1]{{\mathbb{E}^{\mathbb{Q}}_{\bar{\F}^S_s}}\left[#1\right]}
\newcommand{\Ejh}[1]{{\mathbb{E}}_{\F_{jh}}\left[#1\right]}
\newcommand{\Ei}[1]{{\mathbb{E}}_{\F_{ih}}\left[#1\right]}
\newcommand{\EQti}[1]{{\mathbb{E}}^{\mathbb{Q}^h}_{\F_{ih}}\left[#1\right]}
\newcommand{\EQsigma}[1]{{\mathbb{E}}^{\mathbb{Q}}_{\F_{\s}}\left[#1\right]}
\newcommand{\EQs}[1]{{\mathbb{E}}^{\mathbb{Q}}_{\F_{s}}\left[#1\right]}
\newcommand{\EQt}[1]{{\mathbb{E}}^{\mathbb{Q}}_{\F_{t}}\left[#1\right]}
\newcommand{\ESip}[1]{{\mathbb{E}}^{\mathbb{Q}^h}_{\F^{S^h}_{(i+1)h}}\left[#1\right]}
\newcommand{\Enip}[1]{{\mathbb{E}}_{\F^{S^h}_{(i+1)h}}\left[#1\right]}
\newcommand{\ESjp}[1]{{\mathbb{E}}_{\F^{S^h}_{(j+1)h}}\left[#1\right]}
\newcommand{\EG}[1]{{\mathbb{E}}_{\G}\left[#1\right]}
\newcommand{\EbarG}[1]{{\mathbb{E}}_{\G}\left[#1\right]}
\newcommand{\Ebarr}[1]{{\mathbb{\bar{E}}}_{\G}\left[#1\right]}
\newcommand{\EQG}[1]{{\mathbb{E}}^{\mathbb{Q}_{\G}}\left[#1\right]}
\newcommand{\EQh}[1]{{\mathbb{E}}^{\mathbb{Q}^h}_{\F_{ih}}\left[#1\right]}
\newcommand{\EQj}[1]{{\mathbb{E}}^{\mathbb{Q}^h}_{\F_{jh}}\left[#1\right]}
\newcommand{\EQz}[1]{{\mathbb{E}}^{\mathbb{Q}}\left[#1\right]}
\newcommand{\EGS}[1]{{\mathbb{E}}_{\F^S}\left[#1\right]}
\newcommand{\Ebar}[1]{{\bar{\mathbb{E}}}\left[#1\right]}
\newcommand{\EtildeQG}[1]{{\mathbb{E}}^{\tilde{\mathbb{Q}}_{\G}}\left[#1\right]}
\newcommand{\EQst}[1]{{\mathbb{E}}^{\bar{\mathbb{Q}}_{\G}}\left[#1\right]}
\DeclareMathOperator{\esssup}{ess\,sup}
\DeclareMathOperator{\essinf}{ess\,inf}
\def\F{\mathcal{F}}
\def\G{\mathcal{G}}
\begin{document}
\title{Time-Consistent and Market-Consistent Evaluations}

\author{\textbf{Mitja Stadje}\\
Tilburg University, CentER \& Netspar\\
Dept.~of Econometrics and Operations Research\\
P.O. Box 90153\\
5000 LE Tilburg\\
The Netherlands\\
Email:~m.a.stadje@uvt.nl\\
\\
\textbf{Antoon Pelsser}\thanks{The authors would like to thank
Damir Filipovic, the participants of the AFMath2010 conference,
the DGVFM 2010 Scientific Day, and seminar participants at the
Institut Henri Poincar\'e for useful comments and suggestions
on an earlier version of this paper. We also want to thank the
referee for constructive suggestions
which led to several
improvements of the paper.}\\
Maastricht University \& Netspar\\
Dept.~of Quantitative Economics and Dept.~of Finance\\
P.O.~Box 616\\
6200 MD Maastricht\\
The Netherlands\\
Email:~a.pelsser@maastrichtuniversity.nl}

\date{%
First version: April 15, 2010\\
This version: \today}

\maketitle

\begin{abstract}
We consider evaluation methods for payoffs with an inherent
financial risk as encountered for instance for portfolios held
by pension funds and insurance companies. Pricing such payoffs
in a way  consistent to market prices typically involves
combining actuarial techniques with methods from mathematical
finance. We propose to extend standard actuarial principles by
a new market-consistent evaluation procedure which we call `two
step market evaluation.' This procedure preserves the structure
of standard evaluation techniques and has many other appealing
properties. We give a complete axiomatic characterization for
two step market evaluations. We show further that in a dynamic
setting with continuous stock prices every evaluation which is
time-consistent and market-consistent is a two step market
evaluation. We also give characterization results and examples
in terms of $g$-expectations in a Brownian-Poisson setting.

Keywords: Actuarial valuation principles, financial risk,
market-consistency, time-consistency.
\end{abstract}


\newpage

\section{Introduction}
We investigate evaluation methods for payoffs with an inherent
financial risk and propose a new market-consistent procedure to
evaluate these payoffs. Our procedure yields the extension of
many standard actuarial principles into both time-consistent
and market-consistent directions.

Many insurance companies sell products which depend on
financial as well as non-financial risk. Typical examples are
equity-linked insurance contracts or catastrophe insurance
futures and bonds. Pricing such payoffs in a way consistent to
market prices usually involves combining actuarial techniques
with methods from mathematical finance. The minimal conditions
which any market-consistent evaluation should satisfy is that a
purely financial replicable payoff should be equal to the
amount necessary to replicate it.

Standard actuarial premium principles are typically based on a
pooling argument which justifies applying the law of large
numbers to price using the expectation with respect to the
physical measure and then to take an additional risk load. With
these principles one usually considers a static premium
calculation problem: what is the price today of an insurance
contract with payoff at time $T$? See for example the textbooks
by \citet{buhlmann1970mathematical},
\citet{gerber1979introduction}, or \citet{Kaas:MART}. Also, the
concept of convex risk measures and the closely related one of
monetary utility functions have been studied in such a static
setting. See for example \citet{FoSchie:2002:Convexmeasuresof},
\citet{FrittRosaz:2002:Puttingorderin},
\citet{jouini2008optimal}, and
\citet{filipovic2008equilibrium}.

In financial pricing one usually considers a ``dynamic''
pricing problem: how does the price evolve over time until the
final payoff date $T$? This dynamic perspective is driven by
the focus on hedging and replication. This literature started
with the seminal paper of \citet{bs:optprice} and has been
immensely generalized to broad classes of securities and
stochastic processes; see \citet{Delbaen:Schachermayer:94}.

In recent years, researchers have begun to investigate risk
measures in a dynamic setting, the central question being the
construction of time-consistent (``dynamic'') risk measures.
See \citet{riedel2004dynamic:coherent},
\citet{roorda2005coherent},
\citet{cher:delb:kupp:2006:dynamic}, \citet{gianin2006risk},
\citet{artzner2007coherent}. In a dynamical context
\emph{time-consistency} is a natural approach to glue together
static evaluations. It means that the same value is assigned to
a financial position regardless of whether it is calculated
over two time periods at once or in two steps backwards in
time.  In a recent paper \citet{jobert:rogers:2008:valuations}
show how time-consistent valuations can be constructed via
backward induction of static one-period risk-measures (or
``valuations''). See also
\citet{HardyWirch:2005:IteratedCTE--aDynamic} for an example
with the risk measure given by Average Value at Risk.

An important branch of literature considers risk
measures/valuations in a so-called market-consistent setting.
This started with the pricing of contracts in an
incomplete-market setting, where one seeks to extend the
arbitrage-free pricing operators (which are only defined in a
complete-market setting) to the larger space of (partially)
unhedgeable contracts. One approach to evaluate the payoff in
this situation is by utility indifference pricing: the investor
pays the amount such that he is no worse off in expected
utility terms than he would have been without the claim. The
paper by \citet{HodgesNeuberger1989optimal} is often cited for
the root-idea of this stream of literature. For other
contributions in this direction, see for instance
\citet{Henderson:02}, \citet{Young:02},
\citet{hobson2004q-optimal}, \citet{Musiela:04},
\citet{monoyios2006characterisation}, and the recent book by
\citet{carmona2009indifference}.

Several papers deal with the extension of the arbitrage-free
pricing operators using (local) risk-minimisation techniques
and the related notion of minimal martingale measures; see
\citet{follmer1989hedging}, \citet{schweizer1995minimal},
\citet{delbaen1996variance}. A rich duality theory has been
developed that establishes deep connections between utility
maximisation and minimisation over martingale measures; see
\citet{cvitanic1992convex}, \citet{kramkov1999asymptotic}. A
 very elegant summary is given by \citet{rogers2001duality}.
Another stream of the literature, where the class of martingale
measures considered is restricted, is given by the works on
good-deal bound pricing, see \citet{CochSaaR2000GDBincom},
\citet{CernHodg2002GDB}, and \citet{BjorkSlink2006TheoGDB}.

Using utility-indifference (and duality) methods, the
market-consistency of pricing operators is automatically
induced. However, an explicit formal definition of
market-consistent pricing operators has only begun to emerge
recently; see \citet{kupper2008DynaRiskMeas},
\citet{malamud2008market},
\citet{barrieu:elkaroui2005infconvolution,
barrieu:elkaroui:2009:carmona}, and \citet{knispel2011black}.

In this paper we investigate well-known actuarial premium
principles such as the variance principle and the
standard-deviation principle, and study their extension into
both time-consistent and market-consistent directions. To do
this, we introduce the concept of \emph{two step market
evaluations} and study their properties. Two step market
evaluations convert any evaluation principle into a
market-consistent one by applying the actuarial principle to
the residual risk which remains after having conditioned on the
future development of the stock price. This operator splitting
preserves the structure and the computationally tractability of
the original actuarial evaluation. Furthermore, we get some
appealing properties like numeraire invariance. We are able to
give a complete axiomatic characterization for two step market
evaluations and show that these axioms are satisfied in a
setting where the stock process is continuous and the insurance
process is revealed at fixed time instances (or more generally
has predictable jumps). This provides a strong argument for the
use of two step market evaluations. We also consider some
time-consistent extensions of our market-consistent evaluations
to continuous time in a Brownian-Poisson setting. For this we
 need some results from the theory of backward stochastic
differential equations (BSDEs), also called $g$-expectations.
For background material on BSDEs we refer to
\citet{KarouiPengQuenez1997BSDE-Finance}.

The paper is organized as follows. In Section 2 we define
conditional evaluations, give some background material, and
recall some of the most standard actuarial principles. In
Section 3 the notion of market-consistency is defined and two
step market evaluations are introduced and motivated. We give a
complete axiomatic characterization for two step market
evaluations. In Section 4 it is shown that in a dynamic setting
with continuous stock prices every evaluation which is
time-consistent and market-consistent can be viewed as a two
step market evaluation. In Section 5 we extend our evaluations
to a continuous-time setup with processes with jumps. Section 6
gives a summary and conclusions. Section 7 contains the proofs
of our results.
\section{Conditional Evaluations}
Let $(\Omega,\F,\mathbb{P})$ be a probability space. Equalities
and inequalities between random variables are understood in the
$\mathbb{P}$-almost sure sense unless explicitly stated
otherwise. The space of bounded random variables will be
denoted by $L^\infty(\Omega, \F, \mathbb{P})$ ($ L^\infty(\F)$
for short). The space of bounded, non-negative random variables
will be denoted by $L^\infty_+(\F).$ The space of random
variables which are integrable with respect to $\mathbb{P}$
will be denoted by $L^1(\Omega, \F, \mathbb{P})$ ($ L^1(\F)$
for short). \emph{Financial and insurance positions} are
represented by random variables $H\in L^\infty(\F)$ where
$H(\omega)$ is the discounted net loss of the position at
maturity under the scenario $\omega.$
Now given a $\s$-algebra $\G\subset\F,$ with information
available to the agent, we can define a \emph{conditional
evaluation}:
\begin{definition}
A mapping $\Pi_{\G}:L^\infty(\F) \to L^\infty(\G)$ is called a
$\G$-conditional evaluation if the following axioms hold:
\begin{itemize}
\item {\it Normalization}: $\Pi_{\G}(0)=0$.
\item $\G${\it -Cash Invariance}:
    $\Pi_{\G}(H+m)=\Pi_{\G}(H)+m$ for $H\in L^\infty(\F)$
    and $m\in  L^\infty(\G)$.
\item $\G${\it -Convexity}: For $H_1,H_2\in L^\infty(\F)$
    $\Pi_{\G}(\l H_1+(1-\l)H_2)\leq \l
    \Pi_{\G}(H_1)+(1-\l)\Pi_{\G}(H_2)$ for all $\l\in
    L^\infty(\G)$ with $0\leq \l\leq 1$.
\item $\G${\it-Local Property:} $\Pi_{\G}(I_A
    H_1+I_{A^c}H_2)=I_A\Pi_{\G}(H_1)+I_{A^c}\Pi_{\G}(H_2)$
    for all $H\in L^\infty(\F)$ and $A\in \G.$
\item {\it Fatou property}: For any bounded sequence
    $(H_n)$ which converges a.s. to $H$
$$\Pi_{\G}(H)\leq \liminf_n\Pi_{\G}(H_n).$$
\end{itemize}
\end{definition}
Normalization guarantees that the null position does not
require any capital reserves. If $\Pi$ is not normal then the
agent can consider the operator $\Pi(H)-\Pi(0)$ without
changing his preferences. Convexity, which under cash
invariance is equivalent to quasiconvexity, says that
diversification should not be penalized.
Cash invariance gives the interpretation of $\Pi(H)$ as a
capital reserve. The local property is motivated in the
following way. Since the agent has the information given by
$\G$ he knows if the event $A$ has happened or not and should
adjust his evaluation accordingly. If $\Pi$ satisfies
\begin{itemize}
\item {\it Monotonicity}: For $H_1,H_2\in L^\infty(\F)$
    with $H_1\leq H_2$ $\Pi_{\G}(H_1)\leq \Pi_{\G}(H_2)$
\end{itemize}
then we will call $\Pi$ a monotone conditional evaluation.
Monotonicity postulates that if in a.s. any scenario $H_2$
causes a greater higher loss than $H_1$ then the premium
charged for $H_2$ (or the capital reserve held) should be
greater than for $H_1.$ Note that if $\Pi_\G$ is monotone then
$\rho_\G(H):=\Pi_\G(-H)$ defines a conditional convex risk
measure and $U_\G(H):=-\rho_\G(H)$ defines a conditional
monetary utility function. For the definition of a convex risk
measure, see \citet{FoSchie:2002:Convexmeasuresof}, or
\citet{FrittRosaz:2002:Puttingorderin}. In particular, all
results in this paper also hold (with obvious change of signs)
for conditional convex risk measures and conditional monetary
utility functions.

\begin{Remark}
It has been shown in \citet{cher:delb:kupp:2006:dynamic} that
the local property must be satisfied if $\Pi_\G$ is monotone
and cash invariant. Moreover, the local property is also
implied by convexity. Indeed, if $\Pi_\G$ is convex, then we
have for $A\in\G$ that clearly $\Pi_\G(1_A H_1 + 1_{A^c} H_2)
\le 1_A \Pi_\G (H_1) + 1_{A^c} \Pi_\G(H_2).$ In particular, $
 1_A \Pi_\G(1_A
H_1 + 1_{A^c} H_2) \le 1_A \Pi_\G(H_1) .$ The other direction
follows by setting $\tilde{H} = 1_A H_1 + 1_{A^c} H_2$ then as
before
$$
 1_A \Pi_\G(H_1)=1_A \Pi_\G(1_A
\tilde{H} + 1_{A^c} H_2) \le 1_A \Pi_\G(\tilde{H}) .$$ Switching
the role of $H_1$ and $H_2$ yields then the desired conclusion.
\end{Remark}

Other possible axioms which we will consider in a dynamic
setting are as follows:
\begin{itemize}
\item {\it $\G$-Positive Homogeneity}: For $H\in
    L^\infty(\F)$ $\Pi_{\G}(\l H)= \l \Pi_{\G}(H)$ for all
    $\l\in L^\infty_+(\G).$
\item {\it Continuity}: For any bounded sequence $(H_n)$
    which converges a.s. to $H$
$$\Pi_{\G}(H)= \lim_n\Pi_{\G}(H_n).$$
\item {\it $p$-norm boundedness}: There exists
    $p\in(1,\infty),$ $\l\in L^\infty_+(\G)$, and a measure
    $\bar{\mathbb{P}}$ having the same zero sets as
    $\mathbb{P}$ such that for $H\in L^\infty(\F)$
$$\Pi_{\G}(H)\leq \l \int (|H|+|H|^p) d\bar{\mathbb{P}}_{\G}.$$
\end{itemize}
We will also refer to the continuity axiom as `continuity with
respect to a.s. bounded convergence,' if there is any
ambiguity.
If $\Pi$ is a $\G$-conditional evaluation which is additionally
assumed to be positively homogeneous then we call $\Pi$ a
$\G$-conditional \emph{coherent} evaluation. For a further
discussion of these axioms see also \citet{ADEH:1999coherent}.
Note that many similar axioms for premium principles can be
found in the literature, see for instance
\citet{DepreGerbe:1985:convexprinciplesof} or
\citet{Kaas:MART}. Conditional evaluations in a dynamic setting
have been considered for instance in
\citet{frittelli2004dynamic}, \citet{roorda2005coherent},
\citet{ruszczynski2006conditional},
\citet{Delb2006mStableSets}, \citet{artzner2007coherent},
\citet{KlSchwe:2007:Dynamicindifferencevaluation},
\citet{jobert:rogers:2008:valuations},
\citet{barrieu:elkaroui:2009:carmona}, and
\citet{cheridito2006composition}. Classical examples of
(conditional) evaluations are, see for instance
\citet{Kaas:MART}:
\begin{examples}
\label{exdc}
\begin{itemize}
\item {\it Conditional Mean-Variance principle:}
$$\Pi^v_{\G}(H)=\EbarG{H}+\frac{1}{2}\alpha Var_{\G}[H], \ \ \a\geq 0. $$
\item {\it Conditional Standard-Deviation principle:}
$$\Pi^{st}_{\G}(H)=\EbarG{H}+\beta\sqrt{ Var_{\G}[H]}, \ \ \beta\geq 0.$$
\item {\it Conditional Semi-Deviation principle:}
$$\Pi^s_{\G}(X)=\EbarG{H}+\l \bigg|\EbarG{(H-\EbarG{H})^q_+}\bigg|^{1/q}, \ \ \l\geq 0,\,\,\,\,
\ \  q\in [1,\infty),$$ where $x_+$ is $0$ if $x<0$ and $x$
else.
\item {Conditional Average Value at Risk principle:}
$$\Pi^{AV@R}_{\G}(H)=\EbarG{H}+\delta AV@R^\alpha_{\G}(H-\EbarG{H}), \ \ \delta\geq 0$$
where $AV@R^\alpha_{\G}(H)=\dfrac{1}{\alpha}\int_0^\alpha
V@R^\lambda_{\G}(H)d\lambda, \ \ \alpha \in (0,1]$ and
$V@R^\lambda_{\G}(H)$ corresponds to computing the Value at
Risk of $H$ at the confidence level $\lambda$ with the
available information $\G.$
\item {\it Conditional Exponential principle:}
$$\Pi^v_{\G}(H)=\gamma \log\Big(\EbarG{\exp\{H/\gamma\}}\Big), \ \ \gamma> 0. $$
\end{itemize}
\end{examples}
Apart from the exponential principle the evaluations above are
generally not monotone. However, they are continuous,
$\G$-convex, $\G$-cash invariant and satisfy the local
property. In particular, they are $\G$-conditional evaluations.
The Standard-Deviation principle, the Average Value at Risk
principle and the Semi-Deviation principle additionally satisfy
$\G$-positive homogeneity while the Mean-Variance principle is
$p$-norm bounded (with $\bar{\mathbb{P}}=\mathbb{P}$). The
Average Value at Risk principle and the Semi-Deviation
principle are monotone if $\l$ or $\delta$ are in $[0,1].$

We will need some duality results. For a $\s$-algebra
$\G\subset \F,$ denote $\mathcal{Q}_{\G}=\{\xi\in
L^1(\F)|\EbarG{\xi}=1\}$ and $\mathcal{Q}^+_{\G}:=\{\xi\in
L^1_+(\F)|\EbarG{\xi}=1\}.$ In other words, given the
information $\G$, $\mathcal{Q}_{\G}$ is the set of all signed
measures and $\mathcal{Q}^+_{\G}$ is the set of all probability
measures. Therefore, conditional on our starting information
$\G$, we may identify every $\xi\in \mathcal{Q}_{\G}$ with a
signed measure, and every $\xi\in \mathcal{Q}^+_{\G}$ with a
probability measure. For instance, for $\xi\in
\mathcal{Q}^+_{\G}$ we can define the corresponding conditional
probability measure $\mathbb{Q}^\xi_{\G}(A):=\EbarG{\xi I_A},$
and its conditional density as
$\frac{d\mathbb{Q}^{\xi}_{\G}}{d\mathbb{P}_{\G}}:=\xi.$

Recall that by standard duality results we have that $\Pi_{\G}$
is coherent if and only if \be \label{coherent}
\Pi_{\G}(H)=\esssup_{\xi\in M}\EbarG{\xi H}, \ee for a unique
closed, convex set $M\subset \mathcal{Q}_{\G}.$ For the precise
definition of the essential supremum, see the Appendix. $\xi$
is often interpreted as a weighting function for the different
scenarios $\omega$, or as a test or stress measure. By taking
the supremum, a worst-case approach is being taken. For
instance in the good-deal bound literature mentioned in the
introduction the supremum is taken over all pricing kernels
with a density admitting a variance smaller than a certain
constant.

 Generally it holds for all
conditional evaluations that for $\Pi^*_{\G}$
defined by $$\Pi^*_{\G}(\xi)=\esssup_{H\in L^\infty(\F)}\{
\EbarG{\xi H}-\Pi_{\G}(H)\}$$ we have that \be \label{duality}
\Pi_{\G}(H)=\esssup_{\xi\in \mathcal{Q}_{\G}}\{ \EbarG{\xi
H}-\Pi^*_{\G}(\xi)\}, \ee see for instance
\citet{Delb2006mStableSets} or
\citet{cheridito2006composition}. $\Pi^*_{\G}$ is also called
the penalty function corresponding to $\Pi_{\G},$ and
$-\Pi^*_{\G}(\xi)$ may be seen as the plausibility of the
density $\xi$. The dual representations (\ref{coherent}) and
(\ref{duality}) are often interpreted as robust expectations
with respect to different priors, see
\citet{hansen:sargent2001robust:AER}, or
\citet{hansen:sargent2007robustness}.

Of course our results also hold without conditioning on $\G$.
In this case $\G$ is chosen to be the trivial $\s$-algebra.
However, for our dynamic analysis in the later sections it will
simplify matters if we do our analysis conditional on some
information available to the agent. \setcounter{equation}{0}
\section{Market-Consistent Pricing}
\subsection{Market-consistency and two step market evaluations}
\label{markcon} Let $(\Omega,\F,\mathbb{P})$ be the  underlying
probability space. Let $\G\subset \F$ be a $\sigma$-algebra
whose information is initially available to the agent. Let
$S=(S^1,\ldots,S^n)$ be the $n$-dimensional price process of
$n$ traded stocks and denote by $\bar{\F}^{S} \subset \F$ the
$\sigma$-algebra generated by $S$. Furthermore, we denote by
$\F^{S}$ the $\s$-algebra given by the stock process and our
starting information $\G$, i.e., $\F^{S}:=\bar{\F}^{S}\vee
\G:=\sigma(\bar{\F}^{S}, \G).$ The financial market given by
the $n$-dimensional stock process $S$ should be arbitrage free
and complete, i.e., all derivatives which conditional on $\G$
only depend on $S$ are perfectly hedgeable and there exists a
unique probability measure $\mathbb{Q}_\G\in \mathcal{Q}^+_\G$
such that $S$ is a martingale under $\mathbb{Q}_\G$
(componentwise). Furthermore, $\mathbb{Q}_\G$ is assumed to be
a.s. equivalent to $\mathbb{P}_\G$, in the sense that its
conditional density is positive.

Since $\F^S$ is in general a strict subset of $\F$, the market
given by all $\F$-measurable payoffs is incomplete. For
instance, we could have an untraded insurance process which is
correlated with the traded assets $S$ but not perfectly
replicable.


In the financial market the martingale measure $\mathbb{Q}_\G$
defines the (linear) no-arbitrage pricing operator
$\Pi^{f}_{\G}:L^\infty(\F^S)\to L^\infty(\G)$ given by
$$\Pi^{f}_{\G}(H^S):=\EQG{H^S}:=\int_{\Omega} H^S(\omega) \mathbb{Q}_{\G}(d\omega)=\EG{\xi^{ \mathbb{Q}_{\G}}H},$$
where $\xi^{ \mathbb{Q}_{\G}}$ is the density in
$\mathcal{Q}^+_{\G}$ with which $\mathbb{Q}_{\G}$ may be
identified. Note that only the martingale measure on $\F^S$ is
unique while on the filtration $\F$ there can be infinitely
many martingale measures.

The next definition extends the notion of cash invariance to
all assets traded in the financial market. For identical or
similar notions of market-consistency see also \citet{cont06},
\citet{kupper2008DynaRiskMeas}, \citet{malamud2008market}, or
\citet{artzner2010supervisory}.
\begin{definition}
An evaluation is called market-consistent if for any financial
payoff $H^S\in L^\infty(\F^S)$ and $H\in L^\infty(\F)$
$$\Pi_{\G}(H^S+H)=\EQG{H^S}+\Pi_{\G}(H).$$
\end{definition}
Market-consistency postulates that liquidly traded assets and
payoffs replicable by them should not carry any risk as they
can be converted to cash at any time. It follows immediately
from the definition that a market-consistent evaluation cannot
be `improved' by hedging.
\begin{Remark}
Our definition of market-consistency requires that we have
liquidly traded assets constituting a complete financial
market. If the financial market is not complete then there are
two possibilities which may still validate our approach:
\begin{itemize}
\item[(1.)] One could use certain financial derivatives as
    additional hedging instruments to make the financial
    market complete. For results in this direction, see
    \citet{jacod2010risk} and the references therein. Note
that in our setting $S$ has to be finite-dimensional.
Therefore, only finitely many additional hedging
    instruments are allowed. However, many stochastic
    volatility models like the Heston model can be
    completed in this way.
    \item[(2.)] One could remove certain financial assets
        as possible hedging instruments. In some cases the
        remaining assets might constitute a complete
        financial market.
\end{itemize}
\end{Remark}

The next proposition shows that market-consistency is already
implied by the assumption that \emph{purely} hedgeable
derivatives should be valued according to the amount of capital
necessary to replicate them. Furthermore, it shows that, in
case that $\Pi_{\G}$ is monotone, market-consistency is
equivalent with the no-arbitrage principle in the entire
market.
\begin{proposition}
\label{prop} For a conditional evaluation
$\Pi_{\G}:L^\infty(\F)\to L^\infty(\G)$ the following
statements are equivalent:
\begin{itemize}
\item[(i)] $\Pi_{\G}(H^S)=\EQG{H^S}$ for any financial
    payoff $H^S\in L^\infty(\F^S).$
\item[(ii)]
There exists a penalty function\footnote{A function $c$ is
called a penalty function if it is convex,
lower-semicontinuous and $\essinf c=0.$} $c_\G:
\mathcal{Q}_{\F^S}=\big \{Z\in
L^1(\F)|\EGS{Z}=1\big\}\to\mathbb{R}\cup \{\infty\}$ such
that we have for every $H\in L^\infty(\F)$ \beas
\Pi_{\G}(H)=\esssup_{Z\in \mathcal{Q}_{\F^S}}\{ \EQG{Z
H}-c_\G(Z)\}. \eeas
\item[(iii)] $\Pi_{\G}$ is market-consistent.
\end{itemize}
Furthermore, in the case that $\Pi_{\G}$ is additionally
assumed to be monotone, market-consistency is equivalent with
any payoff being evaluated between its super- and
sub-replication price.
\end{proposition}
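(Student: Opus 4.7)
The plan is to prove the cycle (iii) $\Rightarrow$ (i) $\Rightarrow$ (ii) $\Rightarrow$ (iii), after which the super-/sub-replication characterization follows as a short corollary. The direction (iii) $\Rightarrow$ (i) is immediate: set $H=0$ in the market-consistency identity and invoke normalization. For (ii) $\Rightarrow$ (iii), I would first observe that whenever $Z\in\mathcal{Q}_{\F^S}$ and $H^S\in L^\infty(\F^S)$, the tower property $\EbarG{\cdot}=\EbarG{\EGS{\cdot}}$ combined with the $\F^S$-measurability of $\xi^{\mathbb{Q}_\G}$ and $H^S$ yields
$$\EQG{Z H^S} \;=\; \EbarG{\xi^{\mathbb{Q}_\G} H^S \,\EGS{Z}} \;=\; \EbarG{\xi^{\mathbb{Q}_\G} H^S} \;=\; \EQG{H^S}.$$
Splitting $\EQG{Z(H+H^S)}=\EQG{ZH}+\EQG{H^S}$ inside the essential supremum in (ii) then gives $\Pi_\G(H+H^S)=\Pi_\G(H)+\EQG{H^S}$.

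The main work is (i) $\Rightarrow$ (ii). Start from the general dual representation~(\ref{duality}), $\Pi_\G(H)=\esssup_{\xi\in\mathcal{Q}_\G}\{\EbarG{\xi H}-\Pi^*_\G(\xi)\}$, and fix any $\xi\in\mathcal{Q}_\G$ for which $\Pi^*_\G(\xi)$ is finite on a $\G$-set of positive probability. For every $H^S\in L^\infty(\F^S)$ and every scalar $\l\in\mathbb{R}$, the definition of $\Pi^*_\G$ together with hypothesis (i) gives
$$\Pi^*_\G(\xi) \;\geq\; \EbarG{\xi\,\l H^S}-\Pi_\G(\l H^S) \;=\; \l\bigl(\EbarG{\xi H^S}-\EQG{H^S}\bigr).$$
Driving $\l\to\pm\infty$ forces $\EbarG{\xi H^S}=\EQG{H^S}$ pointwise a.s.\ on $\{\Pi^*_\G(\xi)<\infty\}$; testing this against indicators $I_A$ with $A\in\F^S$ (using $\G\subset\F^S$) then upgrades it to $\EGS{\xi}=\xi^{\mathbb{Q}_\G}$. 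Setting $Z:=\xi/\xi^{\mathbb{Q}_\G}$, which is well defined by equivalence of $\mathbb{Q}_\G$ and $\mathbb{P}$ on $\F^S$, gives $Z\in\mathcal{Q}_{\F^S}$ and $\EbarG{\xi H}=\EQG{ZH}$, so that $c_\G(Z):=\Pi^*_\G(\xi^{\mathbb{Q}_\G}Z)$ produces the representation in (ii). The main subtlety is handling the two-sided scaling argument pointwise on the $\G$-measurable level sets of $\Pi^*_\G(\xi)$; verifying that $c_\G$ is convex, lower-semicontinuous and has essential infimum zero is then routine from normalization and the corresponding properties of $\Pi^*_\G$.

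For the final assertion, define $\pi^+_\G(H):=\essinf\{\EQG{H^S}:H^S\in L^\infty(\F^S),\,H^S\geq H\text{ a.s.}\}$, and $\pi^-_\G$ symmetrically. If $\Pi_\G$ is monotone and market-consistent, then any super-replicant $H^S\geq H$ satisfies $\Pi_\G(H)\leq\Pi_\G(H^S)=\EQG{H^S}$ by monotonicity and (i); taking $\essinf$ gives $\Pi_\G(H)\leq\pi^+_\G(H)$, and the lower bound is symmetric. Conversely, if $\pi^-_\G(H)\leq\Pi_\G(H)\leq\pi^+_\G(H)$ holds for every $H\in L^\infty(\F)$, then applying the bounds to any $H^S\in L^\infty(\F^S)$---which trivially coincides with its own super- and sub-replication prices---yields $\Pi_\G(H^S)=\EQG{H^S}$, i.e., (i), and the already established (i)$\Leftrightarrow$(iii) closes the loop.
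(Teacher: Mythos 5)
Your cycle (iii)$\Rightarrow$(i)$\Rightarrow$(ii)$\Rightarrow$(iii) is correct and, for the three equivalences, essentially reproduces the paper's argument: your two-sided scaling in $\l$ is just the hands-on computation behind the paper's remark that the conjugate of the linear functional $H^S\mapsto\EQG{H^S}$ on $L^\infty(\F^S)$ is the indicator of $\{d\mathbb{Q}_\G/d\mathbb{P}_\G\}$, and your reparametrization $\xi=\xi^{\mathbb{Q}_\G}Z$, $c_\G(Z):=\Pi^*_\G(\xi^{\mathbb{Q}_\G}Z)$ is exactly the paper's; the $\G$-measurable level-set issue you flag is glossed over in the paper as well, so you are not below its level of rigor there. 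Where you genuinely diverge is the final monotone assertion. The paper proves ``market-consistent $+$ monotone $\Rightarrow$ between super- and sub-replication prices'' through the dual representation of (ii): monotonicity forces the penalty to live on $\mathcal{Q}^+_{\F^S}$, the set $\frac{d\mathbb{Q}_\G}{d\mathbb{P}_\G}\mathcal{Q}^+_{\F^S}$ is identified with the local martingale measures $\mathcal{M}_\G$, the upper bound follows from $c_\G\geq 0$, and the lower bound needs the concatenation property to produce a sequence with $c_\G(Z^n)\to 0$. You instead argue primally: any bounded $\F^S$-measurable super-replicant $H^S\geq H$ gives $\Pi_\G(H)\leq\Pi_\G(H^S)=\EQG{H^S}$ by monotonicity and (i), and you take the essential infimum. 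This is shorter and avoids the duality machinery entirely; the only caveat is that you are using the primal definition of the super-/sub-replication price, whereas the paper works with the dual expression $\esssup_{\bar{\mathbb{P}}_\G\in\mathcal{M}_\G}\Ebarr{H}$. In this complete-financial-market setting the two coincide (the cheapest bounded $\F^S$-measurable super-replicant is the conditional essential supremum of $H$ given $\F^S$, whose $\mathbb{Q}_\G$-expectation equals the dual quantity), but if the statement is read with the dual definition you should say this identification explicitly; your converse direction (bounds for all $H$ imply (i), hence (iii)) matches the paper's.
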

In particular, in the case of monotonicity, we could have also
defined market-consistency by stating that the evaluation of
every payoff should respect the no-arbitrage principle.

Examples of market-consistent evaluations arise if an agent
\emph{starts} with a (usually non market-consistent) evaluation
$\Pi_\G$ and then tries to reduce his risk by hedging.
Specifically denote by $\mathcal{A}$ all admissible hedging
strategies $\pi$ (defined in an appropriate way), by
$\mathcal{M}_{\G}\subset
\mathcal{Q}^+_{\G}:=\mathcal{Q}_{\G}\cap L^1_+(\F)$ the set of
all local martingale measures, by $\tilde{S}$ the discounted
stock process, and by $\Pi^*_\G$ the penalty function
corresponding to $\Pi_\G$. It can be shown, see for instance
\citet{barrieu:elkaroui:2009:carmona}, or
\citet{toussaint2011framework}, that under appropriate
assumptions \be \label{hedging}
\bar{\Pi}_\G(H):=\essinf_{\pi\in \mathcal{A}}\Pi_\G(H+(\pi\cdot
\tilde{S})_T)=\esssup_{\bar{\mathbb{P}}_\G\in \mathcal{M}_{\G}}
\{\Ebarr{H}-\Pi^*_\G(\frac{d\bar{\mathbb{P}}_\G}{d\mathbb{P}_\G})\}.\ee
Note that by definition $\bar{\Pi}_\G$ is market-consistent.
Therefore, one way of obtaining market-consistent evaluations
is to intersect the test measures in the dual representations
(\ref{coherent}) and (\ref{duality}) above with local
martingale measures. This is a class of examples which arises
naturally when starting with an evaluation $\Pi_\G$.
However, the new evaluation $\bar{\Pi}_\G$ may be hard to
compute, and the structure and interpretation of the original
evaluation $\Pi_\G$ is lost. For instance, if payoffs are
evaluated using the mean-variance principle then
$\bar{\Pi}_\G(H)=\esssup_{\bar{\mathbb{P}}_\G\in\mathcal{M}_{\G}}\{\Ebarr{H}-\frac{1}{2\a}C_\G(\bar{\mathbb{P}}_\G|\mathbb{P}_\G)\},
$ where $C_\G$ is the relative Gini index defined by
$C_\G(\bar{\mathbb{P}}_\G|\mathbb{P}_\G)=\EG{(\frac{d\bar{\mathbb{P}}_\G}{d\mathbb{P}_\G})^2-1}.$
For an overview about mean-variance hedging, see
\citet{schweizer2010mean} and the reference therein. Now note
that two important reasons for the popularity of the
Mean-variance principle are: (a) it has a straight-forward
interpretation; (b) it is easy to compute. However, while the
new evaluation, $\bar{\Pi}_\G(H)$ is market-consistent (since
it uses risk adjusted probabilities), it is neither easy to
compute nor does it directly relate to the variance of the
payoff $H$ from which we started.

Consequently, in this paper we propose a new class of
market-consistent evaluations which we will call two step
market evaluations. Extending standard actuarial principles
with two step market evaluations will have the advantage that
the extensions can be computed easily and that the
interpretation of our starting principles can be preserved.
Furthermore, we will show that two step market evaluations in
general have many other appealing properties. A strong argument
for the use of two step market evaluations will be provided in
the Chapter 4. There we show that any insurance company which
wants to apply a market-consistent and time-consistent
evaluation, has to use a two step market evaluation, in a
setting where the stock process is continuous and the insurance
process is only revealed at fixed time instances (or more
generally at predictable stopping times).

We will start with evaluations like the one from our Examples
\ref{exdc}. Then we will give the corresponding
market-consistent evaluations which do not arise from hedging
but from operator splitting.
Namely, in a first step we compute the value of the position
$H$ by replacing the measure $\mathbb{P}|\G$ with the measure
$\mathbb{P}|\F^S$, i.e., we compute Mean-Variance principle,
the Standard-Deviation principle, etc., of $H,$ conditional on
$\G$ \emph{and} the values of the stocks $S$. Then for every
different future value of the stock price we get a different
evaluation. However, since payoff depending only on the stock
prices are perfectly hedgeable one could argue that these
remaining evaluations do not contain any risk. Therefore, the
total value of the position $H$ should be equal to the initial
capital needed to hedge the different evaluations, obtained in
the first step, which depend on $S$. This corresponds to taking
in a second step the expectation with respect to the risk
adjusted probability measure $\mathbb{Q}_\G$ coming from the
financial market. This procedure is computationally tractable
and preserves the evaluation principles considered in the
beginning. These principles are applied to the risk which
remains after having conditioned on $S.$ We then get the
following market-consistent examples:
\begin{examples}
\label{exdc2}
\begin{itemize}
\item {\it Two step Mean-Variance principle:}
$$\Pi^v_{\G}(H)=\EQG{\EGS{H}+\frac{1}{2}\alpha Var_{\F^S}[H]}, \ \ \a\geq 0. $$
\item {\it Two step Standard-Deviation principle:}
$$\Pi^s_{\G}(H)=\EQG{\EGS{H}+\beta\sqrt{ Var_{\F^S}[H]}}, \ \ \beta\geq 0.$$
\item {\it Two step Semi-Deviation principle:}
$$\Pi^s_\G(X)=\EQG{\EGS{H}+\l \bigg|\EGS{(H-\EGS{H})^q_+}\bigg|^{1/q}}, \ \ \l\geq 0,\,\,\,\, q\in [1,\infty).$$
\item {Two step Average Value at Risk principle:}
$$\Pi^{AV@R}_{\G}(H)=\EQG{\EGS{H}+\delta AV@R^\alpha_{\F^S}(H-\EGS{H})}, \ \ \delta\geq 0.$$
\item {\it Two step Exponential principle:}
$$\Pi_{\G}(H)=\EQG{\gamma \log\Big(\EGS{\exp\{H/\gamma\}}\Big)}, \ \ \gamma> 0. $$
\end{itemize}
\end{examples}
A standard deviation principle, which is different than the one
above but is also obtained by first conditioning on the future
stock price, is considered in \citet{Moller:2002:Onvaluation}.
The last example is known in the literature as the indifference
price of $H$ under an exponential utility function. It arises
in an incomplete market when an agent maximizes his exponential
utility through dynamic trading, see for instance
\citet{musiela2004valuation}. The indifference price for a
claim $H$ is then defined as the amount of cash the agent is
willing to pay for the right to receive $H$ such that he is no
worse off in expected utility terms than he would have been
without the claim. For references on indifference pricing see
the introduction. The examples above motivate the following
definition:
\begin{definition}
We call a $\Pi_{\G}:L^\infty(\F)\to L^\infty(\G)$ a two step
market evaluation if there exists an $\F^S$-conditional
valuation $\Pi_{\F^S}:L^\infty(\F)\to L^\infty(\F^S)$ such that
\be \label{rep} \Pi_{\G}(H)=\EQG{\Pi_{\F^S}(H)}. \ee
\end{definition}
Note that in case that there is no financial market, i.e.,
$S=0,$ our two step evaluations reduce to the standard
actuarial principles. On the other hand if $\F^S=\F$, i.e., if
the financial market gives the entire filtration, of course
$\Pi_\G(H)=\EQG{H}.$
\begin{Example}\label{GDB}
Another example for a two step evaluation defined above arises
when combining hedging with an Average Value at Risk principle.
Specifically set
$$\Pi_\G(H):=\essinf_{\pi\in\mathcal{A}}\Pi^{AV@R}_\G(H+(\pi\cdot\tilde{S})_T).$$
Define
$$M:=\bigg\{\bar{\mathbb{Q}}_\G\in \mathcal{M}_\G\Big|1-\delta\leq
\frac{d\bar{\mathbb{Q}}_\G}{d\mathbb{P}_\G}\leq
1+\delta\frac{1-\a}{\a}\bigg\},$$ where $\delta>0$ and the
risk level $\a\in(0,1]$ have been introduced in the Examples
\ref{exdc}. (Note that if $\delta=0$ then
$\Pi^{AV@R}_\G(H)=\EG{H}.$)

From (\ref{hedging}) and the dual representation of the
Average-Value at Risk principle (see for instance
\citet{follmer2004stochastic}) it follows that \beas
\Pi_\G(H)=\esssup_{\bar{\mathbb{Q}}_\G\in M}\EQst{H}.\eeas Let
$$M':=\bigg\{Z \in \mathcal{Q}^+_{\F^S} \Big| ( 1-\delta)
 \frac{d\mathbb{P}_\G}{d\mathbb{Q}_\G}
\leq Z \leq \frac{d\mathbb{P}_\G}{d\mathbb{Q}_\G}(
1+\delta\frac{1-\a}{\a})\bigg\}.$$ Note that $
\frac{d\mathbb{Q}_\G}{d\mathbb{P}_\G}M'=M$. Furthermore, $M'$
satisfies the concatenation property on $\F^S$, also called
rectangular property or $m$-stability\footnote{The rectangular
property or $m$-stability on a $\sigma$-algebra $\bar{\G}$
postulates that for every $A\in \F^S$ and $Z_1,Z_2\in M'$ we
have that $I_A Z_1+I_{A^c} Z_2\in M',$ see for instance
\citet{ChenEpste:2002:Ambiguityriskand} or
\citet{Delb2006mStableSets}. }. It may be seen that as a result
\beas  \Pi_{\F^S}(H):=\esssup_{Z\in M'}\EGS{ZH}\eeas is an
$\F^S$-conditional evaluation and it holds that
$$\Pi_\G(H)=\EQG{\Pi_{\F^S}(H)},$$ compare also with Theorem
\ref{main} below. In particular, combining hedging with the
Average-Value at Risk principle gives an example of the two
step procedure we explained above.
\end{Example}
\begin{Example}
Our last example is given by the super-replication price of a
contingent claim. The super-replication price is given by
$$\Pi_\G(H):=\esssup_{\bar{\mathbb{Q}_\G}\in \mathcal{M}_\G}\EQst{H},$$
It is straight-forward to check that
$\mathcal{M}_\G=\frac{d\mathbb{Q}_\G}{d\mathbb{P}_\G}\mathcal{Q}^+_{\F^S}.$
Clearly, \beas \Pi_{\F^S}(H):=\esssup_{Z\in
\mathcal{Q}^+_{\F^S}}\EGS{ZH}\eeas is an $\F^S$-conditional
evaluation. It computes the essential supremum conditional on
$\F^S.$ Since $\mathcal{Q}^+_{\F^S}$ is $m$-stable it can be
seen that
$$\Pi_\G(H)=\EQG{\Pi_{\F^S}(H)}.$$
In particular, the super-replication price is a two step
evaluation.
\end{Example}
\begin{Remark}
Note that equity linked insurance payoffs are typically of the
form $H=Y_T^{(n)} f(S_T)$ where $f(S_T)$ is a financial
derivative and $Y_T^{(n)}$ are the number of policy holder who
survived up to time $T$ (out of an initial cohort of $n$), see
for instance \citet{Moller:2002:Onvaluation}. In the special
case that the financial and the mortality risk are independent
a two step market evaluation would yield
$$\Pi_\G(H)=\EQG{\Pi_{\F^S}(f(S_T)Y_T^{(n)} )}=\EQG{f(S_T)}\Pi_\G(Y^{(n)}_T) .$$
Note that the structure obtained in this special case is
similar to the market-consistent valuation method suggested in
Chapter 2.6 in \citet{bookmarketcon}.
\end{Remark}

Two step market valuations provide a rich class of
market-consistent evaluations with a clear underlying
intuition. They appear as indifference-price with an
exponential utility, in hedging with an Average Value at Risk
principle, and in the super-replication price of an contingent
claim. Two step market valuations are also useful in
optimization problems since the maximum may be computed by a
two step procedure using Bellman's principle: first one can
compute a value function conditioned on the stock process, and
then in a second step one can compute the optimum by maximizing
the value function obtained in Step 1 under the pricing
measure.

Note that contrary to other evaluations two step market
evaluations can be directly converted into an equivalent
evaluation which takes the stock process as the numeraire. That
is, assume payoffs $\tilde{H}$ are expressed in units of the
$i$-th stock, $S^i$ for $i\in\{1,\ldots,n\}$, i.e.,
$\tilde{H}=H/S^i_T$. An agent who wants to use the evaluation
$\Pi_\G$ but wants to express everything in units of $S^i$
obviously should use the evaluation $\tilde{\Pi}_\G$ which
satisfies $S^i_0\tilde{\Pi}_\G(\tilde{H})=\Pi_\G(H).$ The
evaluation $\tilde{\Pi}_\G$ often might not be easy to
characterize directly, for instance with a dual representation
and a penalty function. However, a two step market evaluation
remains a two step market evaluation under the change of
numeraire. Moreover, one can just define the new penalty
function of $\tilde{\Pi}_{\F^S}$ as the penalty function of
$\Pi_{\F^S}$ in units of $S^i_T$. That is we set
$$\tilde{\Pi}^*_{\F^S}(\xi):= \frac{\Pi^*_{\F^S}(\xi)}{S^i_T}\mbox{ and }\tilde{\Pi}_{\F^S}(\tilde{H}):=\esssup_{\xi\in L^1(\F)}
\Big\{\EFS{\tilde{H}}-\tilde{\Pi}^*_{\F^S}(\xi)\Big\}.$$ Denote
by $\tilde{\mathbb{Q}}_{\G}$ the unique equivalent martingale
measure on $\F^S$ with numeraire $S^i$ that is
$$\frac{d\tilde{\mathbb{Q}}_{\G}}{d\mathbb{P}_{\G}}:=\frac{S^i_T}{S^i_0}\frac{d\mathbb{Q}_{\G}}{d\mathbb{P}_{\G}}.$$
Then we obtain
$$\tilde{\Pi}_\G(\tilde{H})=\frac{\Pi_\G(H)}{S^i_0}=\EQG{\frac{S^i_T}{S^i_0}\esssup_{\xi\in L^1(\F)}
\Big\{\EFS{\tilde{H}}-\tilde{\Pi}^*_{\F^S}(\xi)\Big\}}=\EtildeQG{\tilde{\Pi}_{\F^S}(\tilde{H})}.$$
We will summarize the last paragraph in the following
proposition:
\begin{proposition}
If $\Pi_\G$ is a two step market evaluation then
$\tilde{\Pi}_\G$ is a two step market evaluation as well.
Furthermore, the penalty function of $\tilde{\Pi}_{\F^S}$ is
given by the penalty function of $\Pi_{\F^S}$ converted into
units of stock $i$.
\end{proposition}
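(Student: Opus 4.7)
The plan is to formalize the sequence of identities sketched in the paragraph preceding the proposition. Writing $\tilde H = H/S^i_T$, by the defining relation of the change of numeraire we have $\tilde{\Pi}_\G(\tilde H) = \Pi_\G(H)/S^i_0$, so the task reduces to rewriting the right-hand side as $\EtildeQG{\tilde{\Pi}_{\F^S}(\tilde H)}$ for a suitable $\F^S$-conditional evaluation $\tilde{\Pi}_{\F^S}$, and then checking that the $\tilde{\Pi}_{\F^S}$ one reads off is indeed an $\F^S$-conditional evaluation whose penalty function equals $\Pi^*_{\F^S}/S^i_T$.

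First I would use the two step representation $\Pi_\G(H) = \EQG{\Pi_{\F^S}(H)}$ of the hypothesis and expand $\Pi_{\F^S}(H)$ via the dual representation (\ref{duality}). Substituting $H = S^i_T\tilde H$ and pulling the strictly positive, $\F^S$-measurable factor $S^i_T$ out of the inner expectation and out of the essential supremum, one obtains
$$\Pi_{\F^S}(H) \;=\; \esssup_{\xi\in\mathcal{Q}_{\F^S}}\Big\{\EFS{\xi\,S^i_T\tilde H}-\Pi^*_{\F^S}(\xi)\Big\} \;=\; S^i_T\,\esssup_{\xi\in\mathcal{Q}_{\F^S}}\Big\{\EFS{\xi\tilde H}-\tilde{\Pi}^*_{\F^S}(\xi)\Big\} \;=\; S^i_T\,\tilde{\Pi}_{\F^S}(\tilde H),$$
where $\tilde{\Pi}^*_{\F^S}(\xi) := \Pi^*_{\F^S}(\xi)/S^i_T$ is exactly the converted penalty function announced in the statement.

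Next I would perform the change of numeraire on the outer expectation. Because $S^i$ is a $\mathbb{Q}_\G$-martingale, $\EQG{S^i_T/S^i_0}=1$, and by construction $d\tilde{\mathbb{Q}}_\G/d\mathbb{Q}_\G = S^i_T/S^i_0$ as a $\mathbb{Q}_\G$-density. Hence
$$\tilde{\Pi}_\G(\tilde H) \;=\; \frac{\Pi_\G(H)}{S^i_0} \;=\; \frac{1}{S^i_0}\EQG{S^i_T\,\tilde{\Pi}_{\F^S}(\tilde H)} \;=\; \EtildeQG{\tilde{\Pi}_{\F^S}(\tilde H)},$$
which is the desired two step market representation in the numeraire $S^i$.

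The remaining and main technical point — the only place where one has to do actual work rather than algebra — is to verify that $\tilde{\Pi}_{\F^S}$ really is an $\F^S$-conditional evaluation, so that the decomposition above qualifies as a two step market evaluation in the sense of the definition. Since $S^i_T$ is strictly positive and $\F^S$-measurable, multiplying the penalty by $1/S^i_T$ preserves convexity and lower-semicontinuity of the penalty, and $\essinf \tilde{\Pi}^*_{\F^S}=0$ still holds. Normalization and $\F^S$-cash invariance of $\tilde{\Pi}_{\F^S}$ follow from the same property of $\Pi_{\F^S}$ together with the $\F^S$-measurability of $S^i_T$ (which commutes through $\EFS{\cdot}$ and through any $\F^S$-measurable shift). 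The $\F^S$-local property and $\F^S$-convexity transfer from $\Pi_{\F^S}$ because the factor $S^i_T$ is common to both sides of each inequality, and the Fatou property transfers because $\tilde H_n\to\tilde H$ a.s.\ with $(\tilde H_n)$ bounded is equivalent to the same for $H_n = S^i_T\tilde H_n$ on the event $\{S^i_T>0\}=\Omega$. This verification is routine but essential, and constitutes the principal (mild) obstacle in the argument.
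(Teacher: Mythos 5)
Your argument is correct and follows essentially the same route as the paper, which proves the proposition by exactly this computation in the paragraph preceding the statement: substitute $H=S^i_T\tilde H$ into the dual representation of $\Pi_{\F^S}$, pull out the positive $\F^S$-measurable factor $S^i_T$ to identify $\tilde{\Pi}^*_{\F^S}=\Pi^*_{\F^S}/S^i_T$, and absorb $S^i_T/S^i_0$ into the numeraire-changed measure $\tilde{\mathbb{Q}}_\G$. Your additional verification that $\tilde{\Pi}_{\F^S}$ is indeed an $\F^S$-conditional evaluation is a detail the paper leaves implicit, so no discrepancy arises.
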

Note that for evaluations which are not two step an easy
conversion of the penalty function as given in the last
proposition usually only works if the numeraire is
deterministic.

It is not hard to see that a two step market valuation is
\emph{always} market-consistent, see the theorems below. On the
other hand our Example \ref{example1} below shows that a
market-consistent valuation is not necessarily a two step
market valuation.

\subsection{An axiomatic characterization of two step market evaluations}
Notice that the Examples \ref{exdc2} all satisfy the
\emph{market local property}, i.e.:
\begin{itemize}
\item For every $A\in \F^S$ and $H\in L^\infty_+(\F)$ \be
    \label{localone} \Pi_{\G}(H)=\Pi_{\G}(I_A
    H)+\Pi_{\G}(I_{A^c}H). \ee
\end{itemize}
The binary options $I_A H$ and $I_{A^c}H$ can be seen as
insurances against the events $A$ or $A^c$ respectively. For
example if $A$ happens then the owner of the option $I_A H$
gets a non-negative amount $H$ which possibly depends on the
insurance process. If the evaluation $\Pi_\G$ is assumed to be
sublinear, the value of the insurance contract $I_A H$
\emph{plus} the value of the insurance contract $I_{A^c}H$
should be larger than the value of $H=I_A H+I_{A^c}H$. The
economic reason is that if the valuation of $H$ is decomposed
into the sum of the evaluations of the binary insurance
contracts then the additional uncertainty given by the event
$A$ for each binary contract, should lead to an increase of the
total insurance premium. The local property however postulates
that the uncertainty added to the payoff $H$ by an event from
the financial market should not carry any extra premium. We
will see in Section \ref{timecon} that, in a setting with a
correlated stock and insurance process, the local property is
satisfied for a time-consistent and market-consistent
evaluation if the stock process in continuous and the value of
the insurance process for is revealed at fixed time instances
(or more generally at predictable stopping times).

Now the question in which we are interested in the remainder of
this chapter, is the following: Given a conditional evaluation
$\Pi_{\G}$ does then its market-consistency and the market
local property imply that every position $H$ \emph{has to be
priced} with a two step market valuation? Our results below
will actually show that this statement holds.


The following theorem shows that market-consistency and the
market local property in the coherent case are equivalent to
two step market evaluations. Furthermore, it gives an explicit
formula for $\Pi_{\F^S}.$
\begin{theorem}
\label{main} The following statements are equivalent:
\begin{itemize}
\item[(i)]$\Pi_{\G}$ is a coherent market-consistent
    $\G$-conditional evaluation which satisfies the market
    local property.
\item[(ii)] There exists an $\F^S$-conditional coherent
    evaluation $\Pi_{\F^S}:L^\infty(\F) \to L^\infty(\F^S)$
    such that $\Pi_{\G}(H)=\EQG{\Pi_{\F^S}(H)}.$
    Furthermore, $\Pi_{\F^S}(H)=\esssup_{Z\in M'}\EFS{ZH},$
   with \be \label{M}
M':=\Big(\frac{d\mathbb{Q}_{\G}}{d\mathbb{P}_{\G}}\Big)^{-1}M\subset
\{Z\in L^1(\F)|\EGS{Z}=1\}=\mathcal{Q}_{\F^S}, \ee
\end{itemize}
and $M$ given by (\ref{coherent}).
\end{theorem}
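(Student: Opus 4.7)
For the direction (ii) $\Rightarrow$ (i) the verification is routine: coherence of $\Pi_\G$ is inherited from $\Pi_{\F^S}$ and the linearity of $\EQG{\cdot}$; market-consistency follows from $\F^S$-cash invariance of $\Pi_{\F^S}$ (so that any $H^S\in L^\infty(\F^S)$ factors out) combined with linearity of $\EQG{\cdot}$; and the market local property comes from applying the $\F^S$-local property of $\Pi_{\F^S}$ to $I_A H$ for $A\in\F^S$. The real content lies in (i) $\Rightarrow$ (ii).

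Start from the coherent dual representation (\ref{coherent}) to write $\Pi_\G(H)=\esssup_{\xi\in M}\EG{\xi H}$ for a unique closed convex $M\subset\mathcal{Q}_\G$. Market-consistency gives $\Pi_\G(H^S)=\EQG{H^S}$ for every $H^S\in L^\infty(\F^S)$; applying this to both $H^S$ and $-H^S$ pins the essential supremum and infimum to the same linear functional, forcing $\EG{\xi H^S}=\EG{(d\mathbb{Q}_\G/d\mathbb{P}_\G)H^S}$ for every $\xi\in M$ and every $H^S\in L^\infty(\F^S)$. By the characterization of conditional expectation this is equivalent to $\EFS{\xi}=d\mathbb{Q}_\G/d\mathbb{P}_\G$, and hence $M':=(d\mathbb{Q}_\G/d\mathbb{P}_\G)^{-1}M\subset\mathcal{Q}_{\F^S}$, proving the inclusion in (\ref{M}). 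To obtain $\F^S$-stability of $M$, feed the market local property into the dual form, using that the sum of two independent essential suprema equals the joint essential supremum, to get for every $H\in L^\infty_+(\F)$ and $A\in\F^S$,
\[
\esssup_{\xi\in M}\EG{\xi H}=\esssup_{\xi_1,\xi_2\in M}\EG{(I_A\xi_1+I_{A^c}\xi_2)H},
\]
and extend to $H\in L^\infty(\F)$ by cash invariance. The right-hand side is the support functional of $\tilde M:=\{I_A\xi_1+I_{A^c}\xi_2:\xi_1,\xi_2\in M,\ A\in\F^S\}\subset\mathcal{Q}_\G$, so uniqueness of the representing closed convex set forces $\tilde M\subset M$; equivalently, $M'$ is $\F^S$-stable.

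Now define $\Pi_{\F^S}(H):=\esssup_{Z\in M'}\EFS{ZH}$. Normalization, $\F^S$-cash invariance, $\F^S$-convexity, positive homogeneity and the Fatou property are immediate from this dual form, while the $\F^S$-local property reduces, via $\EFS{Z(I_A H_1+I_{A^c}H_2)}=I_A\EFS{ZH_1}+I_{A^c}\EFS{ZH_2}$, to concatenating maximizers inside $M'$, which is permitted by the $\F^S$-stability established above. Finally, the substitution $\xi=(d\mathbb{Q}_\G/d\mathbb{P}_\G)Z$ yields $\Pi_\G(H)=\esssup_{Z\in M'}\EQG{\EFS{ZH}}$, and the same $\F^S$-stability shows that $\{\EFS{ZH}\}_{Z\in M'}$ is upward directed: for $Z_1,Z_2\in M'$ and $B:=\{\EFS{Z_1H}\ge\EFS{Z_2H}\}\in\F^S$, the concatenation $I_B Z_1+I_{B^c}Z_2\in M'$ realises the pointwise maximum. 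Monotone convergence then interchanges the essential supremum with $\EQG{\cdot}$, delivering $\Pi_\G(H)=\EQG{\Pi_{\F^S}(H)}$. The principal obstacle is the stability step: one must convert the two independent essential suprema coming from $\Pi_\G(I_A H)$ and $\Pi_\G(I_{A^c}H)$ into a joint supremum over pairs, and then rely on the uniqueness of the representing closed convex set in the $\G$-conditional coherent representation, both of which require care in the essential-supremum / $\sigma(L^1,L^\infty)$ setting.
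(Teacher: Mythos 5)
Your proposal is correct and follows essentially the same route as the paper: deduce from market-consistency that every $\xi\in M$ has $\EGS{\xi}=\frac{d\mathbb{Q}_\G}{d\mathbb{P}_\G}$ (so $M'\subset\mathcal{Q}_{\F^S}$), use the market local property to get concatenation ($\F^S$-stability) of $M'$, define $\Pi_{\F^S}(H)=\esssup_{Z\in M'}\EFS{ZH}$, and interchange the essential supremum with $\EQG{\cdot}$ via upward-directedness and monotone convergence. The only (harmless) variation is in the stability step, where you invoke equality of conditional support functions plus uniqueness of the representing closed convex set, whereas the paper's Lemma A.6 verifies concatenation directly by showing the $\{0,\infty\}$-valued penalty $\Pi^*_\G$ vanishes at $\frac{d\mathbb{Q}_\G}{d\mathbb{P}_\G}(Z_1I_A+Z_2I_{A^c})$ using the additivity from the market local property.
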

The assumptions of the theorem are satisfied for all our
examples above except for the Mean-Variance and the Exponential
market-consistent principles. The reason is that these do not
satisfy positive homogeneity. However, in the case that the
filtration $\F^S$ is generated by countably many sets, and
$\Pi_\G$ is continuous, and monotone or $p$-norm bounded, we
can prove that an evaluation has to be a two step market
evaluation \emph{without the assumption of positive
homogeneity.}
\begin{theorem}
\label{main2}
 Suppose that $\F^S$ is generated by countably many sets.
Then the following statements are equivalent:
\begin{itemize}
\item[(i)] $\Pi_{\G}$ is a monotone, continuous
    market-consistent $\G$-conditional evaluation which
    satisfies the market local property.
\item[(ii)] There exists a monotone, continuous
    $\F^S$-conditional evaluation
    $\Pi_{\F^S}:L^\infty(\F)\to L^\infty(\F^S)$ such that
$$\Pi_{\G}(H)=\EQG{\Pi_{\F^S}(H)}.$$
Furthermore, if $\Pi_\G$ is additionally assumed to be
$p$-norm bounded then the statement also holds without the
monotonicity assumption in (i) and (ii).
\end{itemize}
\end{theorem}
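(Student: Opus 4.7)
The direction $(ii) \Rightarrow (i)$ is a direct verification: given $\Pi_\G(H) := \EQG{\Pi_{\F^S}(H)}$, the $\F^S$-cash invariance of $\Pi_{\F^S}$ immediately yields market-consistency because $\Pi_{\F^S}(H^S + H) = H^S + \Pi_{\F^S}(H)$ for $H^S \in L^\infty(\F^S)$, giving $\Pi_\G(H^S + H) = \EQG{H^S} + \Pi_\G(H)$. The market local property follows from the $\F^S$-local property applied with one summand equal to zero, and monotonicity, continuity, convexity and the remaining axioms all transfer through $\EQG{\cdot}$ by linearity and positivity.

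For the harder direction $(i) \Rightarrow (ii)$, I would first extend the market local property from $L^\infty_+(\F)$ to all of $L^\infty(\F)$. For $A \in \F^S$ and $c > \|H\|_\infty$ the payoffs $cI_A$ and $cI_{A^c}$ are financial, so market-consistency plus cash invariance gives $\Pi_\G(I_A H) = \Pi_\G(I_A(H+c)) - c\,\mathbb{Q}_\G(A)$; applying the local property to the nonnegative $H + c$ then yields $\Pi_\G(H) = \Pi_\G(I_A H) + \Pi_\G(I_{A^c} H)$ for every $H \in L^\infty(\F)$.

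Next, I would exploit the countable generation of $\F^S$ to pick a filtration $(\F^S_n)_n$ with $\bigvee_n \F^S_n = \F^S$ such that each $\F^S_n$ is generated (together with $\G$) by a finite partition $\{\bar A^n_j\}_j$, and set
\begin{equation*}
\tilde\Pi_n(H) := \sum_j I_{\bar A^n_j}\,\frac{\Pi_\G(I_{\bar A^n_j} H)}{\EQG{I_{\bar A^n_j}}}
\end{equation*}
with convention $0/0 = 0$. A direct computation based on the extended local property shows that $(\tilde\Pi_n(H))_n$ is a $\mathbb{Q}_\G$-martingale with respect to $(\F^S_n)$ and that $\EQG{\tilde\Pi_n(H)} = \Pi_\G(H)$ for every $n$. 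In the monotone case $|\tilde\Pi_n(H)| \leq \|H\|_\infty$ uniformly, so martingale convergence delivers an $\F^S$-measurable limit $\Pi_{\F^S}(H) := \lim_n \tilde\Pi_n(H)$ in $L^1(\mathbb{Q}_\G)$ (and $\mathbb{P}$-a.s.) with $\Pi_\G(H) = \EQG{\Pi_{\F^S}(H)}$. Normalization, $\F^S$-cash invariance, the $\F^S$-local property, $\F^S$-convexity and monotonicity are each verified already at the level of $\tilde\Pi_n$ by translating the corresponding axiom of $\Pi_\G$, and then pass to the limit.

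The hard part will be establishing continuity (and the Fatou property) of $\Pi_{\F^S}$, especially in the $p$-norm bounded setting without monotonicity. In the monotone case the uniform $\|H\|_\infty$-bound on $\tilde\Pi_n$ together with continuity of $\Pi_\G$ allows a sandwich argument. For the $p$-norm bounded case without monotonicity, $\G$-convexity combined with the $p$-norm bound gives $|\Pi_\G(I_A(H_n - H))| \leq \l \int I_A |H_n - H|^p\, d\bar{\mathbb{P}}_\G$, which tends to zero for bounded a.s.\ convergent $H_n \to H$ and yields $\EQG{I_A \Pi_{\F^S}(H_n)} \to \EQG{I_A \Pi_{\F^S}(H)}$ for every $A \in \F^S$; the countable generation of $\F^S$ then upgrades this to the pointwise convergence needed to verify continuity of $\Pi_{\F^S}$.
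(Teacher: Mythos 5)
Your construction and your monotone case track the paper's own proof closely: the same normalized restriction $\sum_k I_{B^n_k}\,\Pi_\G(I_{B^n_k}H)/\mathbb{Q}_\G(B^n_k)$ on the finite partitions generating $\F^S$, the same uniformly bounded martingale and martingale-convergence limit, and the same characteristic identity $\Pi_\G(I_AH)=\EQG{I_A\Pi_{\F^S}(H)}$ extended from $\bigcup_n\F^n$ to $\F^S$ by countable additivity; your preliminary extension of the market local property from $L^\infty_+(\F)$ to $L^\infty(\F)$ via cash invariance is correct and plays the role of Lemma \ref{additive}. In the monotone case the sandwich argument does work, because monotonicity makes $\Pi_{\F^S}(\sup_{m\ge n}H_m)$ and $\Pi_{\F^S}(\inf_{m\ge n}H_m)$ converge a.s.\ and the characteristic equation (with continuity of $\Pi_\G$) identifies their limits.

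The gap is in the $p$-norm bounded case without monotonicity. First, you never establish uniform integrability of $\tilde\Pi_n(H)$ there, which is needed for the $L^1(\mathbb{Q}_\G)$ convergence behind the characteristic equation; the paper obtains it from the estimate $|\Pi_{\F^n}(H)|\le \l\,\|H\|^p_\infty\,\EQG{\frac{d\bar{\mathbb{P}}_\G}{d\mathbb{Q}_\G}\big|\F^n}$, a closed martingale. Second, and decisively, your final step fails: from $\EQG{I_A\Pi_{\F^S}(H_n)}\to\EQG{I_A\Pi_{\F^S}(H)}$ for every $A\in\F^S$ one cannot conclude a.s.\ convergence $\Pi_{\F^S}(H_n)\to\Pi_{\F^S}(H)$, even on a countably generated $\sigma$-algebra --- setwise convergence of the integrals does not force a.s.\ convergence of the integrands (Rademacher functions $r_n$ on $[0,1]$ satisfy $\int_A r_n\to 0$ for every Borel $A$ yet converge nowhere), and without monotonicity there is no sandwich to rescue this. (Your intermediate bound $|\Pi_\G(I_A(H_n-H))|\le \l\int I_A|H_n-H|^p\,d\bar{\mathbb{P}}_\G$ also does not control $\Pi_\G(I_AH_n)-\Pi_\G(I_AH)$ for a nonlinear $\Pi_\G$; the convergence of these values comes from the assumed continuity of $\Pi_\G$, but that still only yields the weak-type convergence above. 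The same difficulty propagates to your plan for $\F^S$-convexity with general $\F^S$-measurable weights, which likewise presupposes continuity of $\Pi_{\F^S}$.) This is precisely where the paper argues differently: it derives the pointwise bound $|\Pi_{\F^S}(H)|\le\l\,\EQG{\frac{d\bar{\mathbb{P}}_\G}{d\mathbb{Q}_\G}\big|\F^S}\int|H|^p\,d\bar{\mathbb{P}}_{\F^S}$ via the conditional Bayes formula (Lemma \ref{A.12}), extends $\Pi_{\F^S}$ to $L^p(\bar{\mathbb{P}}_{\F^S})$, and invokes the convex-analysis fact that a real-valued convex functional bounded near zero is continuous (Theorem 2.2.9 in \citet{zalinescu2002convex}), giving pathwise $L^p$-continuity and hence continuity under bounded a.s.\ convergence. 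You would need this, or a genuinely different argument, to close the non-monotone case.
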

The Mean-Variance principle is not monotone but it is $p$-norm
bounded. In particular, Theorems \ref{main}-\ref{main2} include
all our examples. Theorem \ref{main2} further applies to the
Exponential principle, and the Average Value at Risk and the
Semi-Deviation principle if $\l$ and $\delta$ only take values
between $0$ and $1$.

The proof of the last theorem heavily relies on the assumed
continuity of our evaluations which was not needed in Theorem
\ref{main}. The evaluation $\Pi_{\F^S}$ will be obtained only
as an a.s. limit using the martingale convergence theorem
without an explicit formula.

Note that the market-consistency of an evaluation $\Pi$ does
not necessarily imply that we get a representation as in
Theorem \ref{main} and Theorem \ref{main2}, since there are
market-consistent evaluations not satisfying the market local
property. This can be seen from the following example:
\begin{Example}
\label{example1} Assume that $\G$ is trivial and let $Z_1$ and
$Z_2$ be densities independent of $S$ with $Z_1\ne Z_2$. Then
$Z_i$ are also independent of
$\frac{d\mathbb{Q}}{d\mathbb{P}}$. Now suppose that the agent
is not sure if he should trust the density
$\frac{d\mathbb{Q}}{d\mathbb{P}}Z_1$ or
$\frac{d\mathbb{Q}}{d\mathbb{P}}Z_2.$ Therefore, he decides to
take a worst-case approach over all convex combinations of
$\frac{d\mathbb{Q}}{d\mathbb{P}}Z_1$ or
$\frac{d\mathbb{Q}}{d\mathbb{P}}Z_2.$ That is \be
\label{counter} \Pi(H):=\max_{\bar{\mathbb{P}}\in
M}\Ebar{H}=\max_{i=1,2}\E{\frac{d\mathbb{Q}}{d\mathbb{P}}Z_i
H}, \ee with $M=\{\l
\frac{d\mathbb{Q}}{d\mathbb{P}}Z_1+(1-\l)\frac{d\mathbb{Q}}{d\mathbb{P}}Z_2|\l\in[0,1]\}.$
It is straight-forward to check using the independence of $Z_i$
and $S$ that $\Pi$ is a market-consistent, coherent evaluation.
Let $H\geq 0$ be $\F$- but not $\F^S$-measurable. Assume
without loss of generality that the maximum in (\ref{counter})
is attained in $i=1.$ Now choose a set $A\in \F^S$ such that
\be \label{Z2} \E{ \frac{d\mathbb{Q}}{d\mathbb{P}}Z_2 I_A
H}>\E{ \frac{d\mathbb{Q}}{d\mathbb{P}}Z_1 I_A H}.\ee Since the
maximum in (\ref{counter}) is attained in $i=1$ we must have
that
 $$\E{ \frac{d\mathbb{Q}}{d\mathbb{P}}Z_2 I_{A^c} H}<\E{ \frac{d\mathbb{Q}}{d\mathbb{P}}Z_1 I_{A^c} H}.$$
 But then we get
\begin{align}
\label{Z1}
\Pi(HI_A)+\Pi(HI_{A^c})&=\E{ \frac{d\mathbb{Q}}{d\mathbb{P}}Z_2 I_A H}+\E{ \frac{d\mathbb{Q}}{d\mathbb{P}}Z_1 I_{A^c} H}\nonumber\\
&>
\E{ \frac{d\mathbb{Q}}{d\mathbb{P}}Z_1 I_A H}+\E{ \frac{d\mathbb{Q}}{d\mathbb{P}}Z_1 I_{A^c} H}=\Pi(H).
\end{align}
One possible choice for $A$ and $H$ would be to define $H=I_A
I_{\{Z_2>Z_1\}}+I_{A^c} I_{\{Z_1>Z_2\}}.$ Then (\ref{Z2})
always holds if $A$ is chosen to be a non-zero set with
$A\subset \{Z_2> Z_1\}$. Furthermore, one can ensure that
$\Pi(H)=\E{Z_1 H}$ by choosing $A$ such that $\mathbb{Q}(A^c)$
is sufficiently close to one. In particular, (\ref{Z1}) holds
and therefore $\Pi$ defined in (\ref{counter}) does not satisfy
the market local property. By the direction
(ii)$\Rightarrow$(i) of Theorem \ref{main} this implies that
$\Pi$ is not a two step market evaluation.
\end{Example}
\setcounter{equation}{0}
\section{Dynamic Evaluations in continuous time: Time-consistency and Market-consistency}
\label{timecon} In this section we will give sufficient
conditions in which the market local property, holds in a
\emph{dynamic} setting. Specifically we will obtain that in
certain dynamic settings time-consistency and
market-consistency imply that all evaluations have to be two
step market evaluations. Time-consistency in a dynamic setting
often has strong implications. For instance, for general
preferences the indifference price of an agent with
time-consistent dynamic preferences are recursive if and only
if the preferences are cash-invariant, see Theorem 3.4,
\citet{cheridito2006time}.

Subsequently, we fix a finite time horizon $T>0.$ Throughout
the rest of the paper we assume that additional to the stocks
$(S_t)_{0\leq t\leq T}=((S^1_t,\ldots,S^n_t))_{0\leq t\leq T}$,
we have an untraded insurance process $(Y_t)_{0\leq t\leq T}$.
For the sake of simplicity let us assume that the insurance
process is one-dimensional. (The generalization is
straight-forward.) Let $\bar{\F}^S$ be filtration generated by
$S$, and let $\bar{\F}^Y$ be the filtration generated by $Y$.
We again assume that the financial market is complete while the
entire market is incomplete. Denote by $\mathbb{Q}$ the unique
martingale measure on $\bar{\F}^S_T$ with density
$\frac{d\mathbb{Q}}{d\mathbb{P}}.$

Define the total information which is available as
$\F:=\bar{\F}^S\vee \bar{\F}^Y:=\s(\bar{\F}^S\cup \bar{\F}^Y).$
Setting $\mathbb{Q}(A):=\E{\frac{d\mathbb{Q}}{d\mathbb{P}}I_A}$
for $A\in\F_T$, we can extend $\mathbb{Q}$ canonically to the
whole filtration. We call a collection of mappings
$(\Pi_{\s})_{0\leq \s\leq T},$ a continuous-time \emph{dynamic
evaluation} if it has the following properties:
\begin{itemize}
\item For all stopping times $\s$, $\Pi_\s$ is an
    $\F_\s$-conditional evaluation.
\item \emph{Time-Consistency:} For every $H\in
    L^\infty(\F_T)$: $$ \Pi_{\s}(H)=
    \Pi_{\s}(\Pi_\t(H))\mbox{ for all stopping times
    }\s\leq \t.$$
\end{itemize}
In a dynamical context time-consistency is a natural assumption
to glue together static risk measures. It means that the same
risk is assigned to a financial position regardless of whether
it is calculated over two time periods at once or in two steps
backwards in time. For a general analysis of weaker notions of
time-consistency see e.g. \citet{roorda2005coherent}.
\begin{Remark}
Alternative names for our definition of time-consistency
would have been `recursiveness' or `tower property'. Note that
in the literature often the following notion of
time-consistency is used: if an asset $H_1$ is preferred to an
asset $H_2$ under all possible scenarios at some time $\t$ then
$H_1$ should also have been preferred at every time $\s$ before
$ \t.$ Let us call the latter definition of time-consistency
property (TC). Now if $\Pi$ is monotone it is well known that our notion of time-consistency
is in fact equivalent to property (TC). Furthermore, since
(TC) implies the monotonicity of $\Pi$, our notion of
time-consistency includes (TC) as a special case, but it can
also be applied to non-monotone evaluations.
\end{Remark}

 Time-consistent evaluations have been discussed in
continuous time by \citet{peng2004filtration},
\citet{frittelli2004dynamic}, \citet{gianin2006risk},
\citet{Delb2006mStableSets},
\citet{KlSchwe:2007:Dynamicindifferencevaluation},
\citet{bion2008dynamic}, \citet{bion2009time}, and
\citet{barrieu:elkaroui:2009:carmona}.
\citet{DuffiEpste:1992:Stochasticdifferentialutility},
\citet{ChenEpste:2002:Ambiguityriskand}, and
\citet{maccheroni2006dynamic} deal with dynamic preferences
using similar notions of time-consistency.

 Given a dynamic evaluation $(\Pi_{\s})$
we define $\Pi_{\s,\t}$ to be equal to $\Pi_\s$ restricted to
$L^\infty(\F_\t),$ i.e., $\Pi_{\s,\t}=\Pi_\s |L^\infty(\F_\t).$
Next we will assume that the insurance process $Y$ a.s.
\emph{jumps only at finitely many predictable time instances},
say $0\leq \tau_1\leq \tau_2\leq \ldots$ and there is no
additional randomness added between the jumps, i.e.,
$\bar{\F}^Y_{\t_{i+1}-}=\bar{\F}^Y_{\t_{i}}$ for all $i$.
Recall that a stopping time $\t$ is predictable if there exists
a sequence of stopping times $\tau^n<\t$ such that
$\t^n\uparrow \t.$ Hitting times of continuous processes are
predictable while jump times of L\'evy processes (or more
general, strong Markov Feller processes) are not. On the set
where $Y$ does not jump at all we set
$\t_1=T.$\footnote{Similarly we can set some $\t_i(\omega)$
equal to $T$ if not all paths, $(Y_t(\omega))_t$, have the same
number of jumps.} One example could be given by a setting in
which the insurance process $Y$ is only updated at finitely
many \emph{fixed} time instances, $t_1<\ldots<t_k.$ Another
possibility could be that damages occur at \emph{unpredictable}
stopping times $\bar{\t}_i,$ but the amount of money the
insurance will have to pay is not clear right away. Instead the
insurance needs some additional time, say $\varepsilon>0$, to
agree to a certain amount and to pay it out at
$\bar{\t}_i+\epsilon$, respectively.

We will use the following definition in a dynamic setting.
\begin{definition}
We will say that a conditional continuous-time evaluation
$(\Pi_\s)_{\s\in [0,T]}$ is market-consistent if for every
stopping time $\s$ and every financial payoffs $H^S\in
L^\infty(\bar{\F}^S_T\vee \F_\s)$ and $H\in L^\infty(\F_T)$
$$\Pi_{\s}(H^S+H)=\EQsigma{H^S}+\Pi_{\s}(H).$$
\end{definition}
Note that this definition of market-consistency coincides with
the definition in the static case with $\G=\F_\s$ and
$\bar{\F}^S=\bar{\F}^S_T.$

Now for every stopping time $\s$ we define $\t_\s$  to be the
time of the next jump after $\s$, i.e., $\t_\s:=\inf\{t>
\s|\Delta Y_t>0\}\wedge T.$ Denote
$\F^S_{\t_\s}:=\bar{\F}^S_{\t_\s}\vee \F_{\s}.$ That is
$\F^S_{\t_\s}$ is the $\s$-algebra which includes all the
information (of both stock and insurance process) up to time
$\s$ and additionally the information of the stock process up
to the time of the next jump of $Y$.
\begin{theorem}
\label{predictable} Suppose that $S$ is continuous and the
insurance process is as described above. Let $(\Pi_\s)_{0\leq
\s\leq T}$ be a time-consistent and market-consistent
evaluation such that for every $\s$, $\Pi_\s(\cdot)$ is
continuous. Then for every stopping time $\s$, we have that
$\Pi_{\s,\t_\s}$ satisfies the market-local property. In
particular, if $(\Pi_\s)$ is additionally assumed to be either
monotone, $p$-norm bounded, or positively homogeneous then for
every stopping time $\s$ there exists an
$\F^S_{\t_\s}$-conditional evaluation
$\Pi_{\F^S_{\t_\s}}:L^\infty(\F_{\t_\s})\to
L^\infty(\F^S_{\t_\s})$ such that
$$\Pi_{\s,\t_\s}(H)=\EQsigma{\Pi_{\F^S_{\t_\s}}(H)}.$$
\end{theorem}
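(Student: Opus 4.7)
The plan is to reduce the final ``in particular'' assertion to the market local property and then to invoke Theorems \ref{main} and \ref{main2} applied to the $\F_\s$-conditional evaluation $\Pi_{\s,\t_\s}$ acting on $L^\infty(\F_{\t_\s})$ with distinguished $\sigma$-algebra $\F^S_{\t_\s}$. Market-consistency of $\Pi_{\s,\t_\s}$ is immediate from the dynamic definition, and monotonicity, $p$-norm boundedness, or positive homogeneity is inherited from $\Pi_\s$. So the real task is to prove that for every $A \in \F^S_{\t_\s}$ and $H \in L^\infty_+(\F_{\t_\s})$,
$$\Pi_\s(H) = \Pi_\s(I_A H) + \Pi_\s(I_{A^c} H).$$

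By predictability of $\t_\s$, pick stopping times $\t^n \uparrow \t_\s$ with $\t^n < \t_\s$. The stagnation hypothesis $\bar{\F}^Y_{\t_\s -} = \bar{\F}^Y_{\s}$ gives $\bar{\F}^Y_{\t^n} = \bar{\F}^Y_\s$, so $\F_{\t^n} = \bar{\F}^S_{\t^n} \vee \F_\s = \F^S_{\t^n}$; continuity of $S$ yields $\bar{\F}^S_{\t_\s} = \bar{\F}^S_{\t_\s -}$ and hence $\F^S_{\t_\s} = \bigvee_n \F_{\t^n}$. For any $A_n \in \F_{\t^n}$ the $\F_{\t^n}$-local property gives $\Pi_{\t^n}(I_{A_n} H) = I_{A_n}\Pi_{\t^n}(H)$, which is then $\F^S_{\t^n}$-measurable. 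Time-consistency followed by market-consistency produces
$$\Pi_\s(I_{A_n} H) = \Pi_\s\bigl(I_{A_n}\Pi_{\t^n}(H)\bigr) = \EQsigma{I_{A_n}\Pi_{\t^n}(H)},$$
and adding the analogue for $A_n^c$ makes the cross terms reassemble into
$$\Pi_\s(I_{A_n} H) + \Pi_\s(I_{A_n^c} H) = \EQsigma{\Pi_{\t^n}(H)} = \Pi_\s(\Pi_{\t^n}(H)) = \Pi_\s(H).$$

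Finally, since $\bigcup_n \F_{\t^n}$ is an algebra generating $\F^S_{\t_\s}$, for any $A \in \F^S_{\t_\s}$ there exist $A_n \in \F_{\t^n}$ with $\mathbb{P}(A \triangle A_n) \to 0$; passing to a subsequence we have $I_{A_n} H \to I_A H$ a.s. and uniformly bounded. The assumed continuity of $\Pi_\s$ then yields $\Pi_\s(I_{A_n} H) \to \Pi_\s(I_A H)$ and the analogue for complements, and taking $n \to \infty$ in the identity above completes the proof. The main obstacle is precisely this approximation-and-limit step: it forces the joint use of the predictable-jump structure of $Y$ and the continuity of $S$ (to obtain $\F^S_{\t_\s} = \bigvee_n \F_{\t^n}$) together with the continuity assumption on $\Pi_\s$; remove any one of these and the reduction of $\F^S_{\t_\s}$-level information to pre-$\t_\s$ information breaks down.
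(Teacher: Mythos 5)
Your proposal is correct and follows essentially the same route as the paper's proof: time-consistency along an announcing sequence for $\t_\s$, the stagnation of $\bar{\F}^Y$ between jumps plus market-consistency to evaluate pre-$\t_\s$ quantities linearly under the pricing measure, continuity of $\Pi_\s$ together with the fact that the pre-$\t_\s$ $\sigma$-algebras generate $\F^S_{\t_\s}$ to pass to general $A$, and finally Theorems \ref{main} and \ref{main2} for the ``in particular'' part. The only step you take for granted is that $\t_\s$ is itself predictable with an announcing sequence dominating $\s$; the paper proves this explicitly by patching the sequences $\t^n_i\vee\s$ together on the $\F_\s$-measurable events $\{\t_i=\t_\s\}$, a routine verification that does not change the argument.
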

This theorem shows in particular that in a setting where the
agent just observes the insurance process at finitely many time
instances, \emph{every} market-consistent and time-consistent
evaluation has to admit a representation of the form
 $\EQsigma{\Pi_{\F^S_{\t_\s}}(H)}$ at every stopping time $\s.$
 In other words, an agent who wants to use a time-consistent and market-consistent evaluation
 \emph{has to} apply a two step market evaluation.

Theorem \ref{predictable} also yields the following corollary:
\begin{corollary}
\label{predictable2} Suppose that $S$ is continuous, and that
the insurance process $Y$ a.s. jumps only at finitely many
fixed time instances, say $0\leq t_1\leq t_2\leq \ldots \leq
t_k,$ and $\bar{\F}^Y_{t_{i+1}-}=\bar{\F}^Y_{t_i}.$ If
$(\Pi_\s)_{0\leq \s\leq T}$ is a time-consistent and
market-consistent evaluation which is either monotone, $p$-norm
bounded or positively homogeneous then for every $s\in[0,T]$
there exists an $\F^S_{t_i}$-conditional evaluation
$\Pi_{\F^S_{t_i}}:L^\infty(\F_{t_i})\to L^\infty(\F^S_{t_i})$
such that
$$\Pi_{s,t_{i}}(H)=\EQs{\Pi_{\F^S_{t_i}}(H)},$$
where $t_i$ is the next time instance after $s$ at which the
insurance process jumps.
\end{corollary}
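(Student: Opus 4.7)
The plan is to derive the corollary as a direct specialization of Theorem \ref{predictable} to the deterministic-jump-time setting, with essentially no new analytic content required. The three substantive ingredients are: (a) recognizing that deterministic times are predictable, (b) computing $\t_s$ explicitly, and (c) translating $\F^S_{\t_s}$ into the notation of the corollary.

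First I would verify that every hypothesis of Theorem \ref{predictable} holds. The stock process $S$ is continuous by assumption. Each deterministic time $t_i$ is announced by the sequence $(t_i - 1/n)_{n\geq 1}$, hence is a predictable stopping time, so the jump times of $Y$ are predictable. The assumption $\bar{\F}^Y_{t_{i+1}-} = \bar{\F}^Y_{t_i}$ is precisely the ``no additional randomness added between jumps'' condition required by Theorem \ref{predictable}. Time-consistency, market-consistency, and one of the three strengthening properties (monotonicity, $p$-norm boundedness, or positive homogeneity) are inherited verbatim from the statement of the corollary.

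Second, I would fix an arbitrary $s\in[0,T]$ and view it as a trivial stopping time. Since the jumps of $Y$ occur only at the fixed points $t_1<\cdots<t_k$, the next-jump time
$$\t_s := \inf\{t>s \,:\, \Delta Y_t>0\}\wedge T$$
is a.s.\ equal to the deterministic time $t_i := \min\{t_j : t_j>s\}\wedge T$. Consequently $\F^S_{\t_s} = \bar{\F}^S_{\t_s}\vee \F_s = \bar{\F}^S_{t_i}\vee \F_s = \F^S_{t_i}$.

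Finally, invoking the second part of Theorem \ref{predictable} with $\s = s$ yields an $\F^S_{\t_s}$-conditional evaluation $\Pi_{\F^S_{\t_s}}$ such that $\Pi_{s,\t_s}(H) = \EQs{\Pi_{\F^S_{\t_s}}(H)}$; substituting $\t_s=t_i$ and $\F^S_{\t_s}=\F^S_{t_i}$ yields the claimed representation. The only point that merits a moment's attention is the continuity of $\Pi_\s$ that appears in Theorem \ref{predictable} but is not explicitly restated in the corollary; I would either treat it as inherited from the ambient setup (the Fatou property combined with monotonicity gives continuity on bounded sequences), or verify it separately in each of the three listed subcases. Beyond this bookkeeping, no genuine obstacle arises: the corollary is an immediate application of the theorem.
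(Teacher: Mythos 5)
Your proposal matches the paper's own treatment exactly: the corollary is presented as an immediate specialization of Theorem \ref{predictable} (deterministic jump times are announced by $t_i-1/n$ and hence predictable, the next-jump time $\t_s$ coincides a.s.\ with the next fixed instant $t_i$, and $\F^S_{\t_s}=\bar{\F}^S_{t_i}\vee\F_s=\F^S_{t_i}$), with no further argument given. One caution on your side remark: the Fatou property together with monotonicity yields only lower semicontinuity, not continuity with respect to bounded a.s.\ convergence (the conditional essential supremum is a counterexample), so the continuity hypothesis of Theorem \ref{predictable} should simply be read as implicitly carried over to the corollary rather than derived from the listed properties.
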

We have restricted our analysis to payoffs rather than payment
streams. However, if $H$ is a discrete payment streams which
pays cash amounts $(H_{\eta_1},\ldots,H_{\eta_k})$ at stopping
times $\eta_1<\eta_2<\ldots<\eta_k$, we could consider
evaluations $\Pi_\s$ mapping payment streams, starting paying
amounts from time $\s$ on, to $L^\infty(\F_\s).$ In this case
time-consistency could be defined as \beas \Pi_{\s}(H)=
\Pi_{\s}(H I_{[0,\t)}+\Pi_\t(H I_{[\t,T]})) \mbox{ for all }
\s\leq \t, \eeas see also \citet{cher:delb:kupp:2006:dynamic},
or \citet{jobert:rogers:2008:valuations}.
By a proof analogue
to the one for Theorem \ref{predictable} one can  then show
that for an evaluation satisfying  similar properties as above
time-consistency and market-consistency entail
$$\Pi_{\s,\t_\s}(H I_{[\s,\t_\s]})=\EQsigma{\Pi_{\F^S_{\t_\s}}(HI_{[\s,\t_\s]})}=\EQsigma{\Pi_{\F^S_{\t_\s}}(H_{\t_\s})}+  \EQsigma{ \sum_{\s \le \eta < \t_\s} H_\eta}.$$

\setcounter{equation}{0}
\section{Market-consistent evaluations in continuous time}
\subsection{Results on market-consistent BSDEs}
\label{g} In a continuous time Brownian-Poisson setting we can
provide examples of time-consistent and market-consistent
evaluations by $g$-expectations. It is well known that
$g$-expectations induce time-consistent evaluations, see for
instance \citet{peng2004filtration},
\citet{frittelli2004dynamic}, \citet{gianin2006risk},
\citet{bion2008dynamic}, \citet{barrieu:elkaroui:2009:carmona},
or \citet{el2009cash}. In this section we will give a complete
characterization of $g$-expectations which are
market-consistent.

Suppose the filtration $\F$ is generated by the following
independent processes: an $n$-dimensional standard Brownian
motion $W^f$, a $d$-dimensional standard Brownian motion $W$,
and a Poisson random counting measure $N(ds,dx)$ defined on
$[0,T]\times\mathbb{R}\setminus\{0\}.$ We denote the
corresponding compensator by
$$ \hat{N}(ds,dx)=\nu(dx)ds. $$
We assume that the measure $\nu(dx)$ is non-negative and
satisfies
$$\int_{\mathbb{R}\setminus\{0\}} (|x|^2\wedge 1) \nu(dx)<\infty.$$
Denote $\tilde{N}(ds,dx):=N(ds,dx)-\hat{N}(ds,dx).$

Suppose that we have a bond $B$ with $B_0=1$ and $dB_t=rB_tdt.$
We assume that the stock process $S=(S^1,\ldots,S^n)$ is given
by
$$dS^i_t=S^i_t(\mu^i(t,S_t)dt+\tilde{\s}^i(t,S_t)dW^f_t), \,\,\,\,\,S^i_0=s^i_0,$$
with $s^i_0>0$ and $\mu^i:[0,T]\times \mathbb{R}^n\to
\mathbb{R}$ and $\tilde{\s}^i:[0,T]\times \mathbb{R}^n\to
\mathbb{R}^n$ for $i=1,\ldots,n.$ Note that we used vector
notation for the stochastic integral. Define the vector
$\mu=(\mu^i)_{i=1,\ldots,n}$ taking values in $\mathbb{R}^n.$
The rows of the matrix $\tilde{\sigma}:[0,T]\times
\mathbb{R}^n\to \mathbb{R}^{n\times n}$ should be given by
$(\tilde{\sigma}^i)_{i=1,\ldots,n}.$ We assume that $\mu$ and
$\tilde{\s}$ are uniformly bounded. Furthermore, $\tilde{\s}$
should be invertible, and uniformly elliptic, i.e., there
exists $K_1,K_2>0$ such that $K_1 I_n\preceq \tilde{\s}
\tilde{\s}^\intercal \preceq K_2 I_n.$ Furthermore, we need
standard measurability and Lipschitz continuity assumptions on
$\mu$ and $\tilde{\s}$. Then it is well known that a solution
$S$ for the SDE above exists and that the corresponding
financial market, consisting of $(S,B)$, is complete. Generally
payoffs can depend on $(W^f,W,\tilde{N})$ and may not be
replicable.

Let $\bar{\F}^S$ be the filtration generated by $S.$ Denote the
negative market price of uncertainty by
$\theta_t=-\tilde{\s}^{-1}(t,S_t)(\mu(t,S_t)-r e),$ where $e$ is
an $n$-dimensional vector consisting of ones. As in the
sections before we will denote by $H$ hedgeable and unhedgeable
discounted payoffs.

We will consider evaluations of $H$ given by the solutions of
backward stochastic differential equations (BSDEs). Denote by
$\mathcal{P}$ the predictable $\s$-algebra on the entire
filtration. Let
$$\mathcal{H}^2_m:=\Big\{Z=(Z^1,\ldots,Z^m)\in\mathcal{P}\Big|\E{\int_0^T |Z_s|^2ds}<\infty\Big\},$$
where we denote by $|\cdot|$ the Euclidean norm.

Let $\mathcal{S}^2$ be the space of all one-dimensional
optional processes whose path-maximum is square integrable with
respect to $\mathbb{P}.$ Let $L^2(\nu(dx))$ be the space all
$\mathcal{B}(\mathbb{R}\setminus\{0\})$-measurable functions
mapping from $\mathbb{R}\setminus\{0\}$ to $\mathbb{R}$, which
are square integrable with respect to $\nu$, where, as usual,
two functions are identified if they are equal $\nu$ a.s.
Define $L^2(\nu(dx)\times d\mathbb{P}\times ds )$ as all
$\mathcal{P}\otimes
\mathbb{B}(\mathbb{R}\setminus\{0\})$-measurable functions
which are square-integrable with respect to $\nu(dx)\times
d\mathbb{P}\times ds.$ Now suppose that we have a suitably
measurable function $g:[0,T]\times
\Omega\times\mathbb{R}^2\times L^2(\nu(dx))\to \mathbb{R}.$

A solution of the BSDE with driver $g(t,\omega
,z^f,z,\tilde{z})$ and terminal condition $H\in L^\infty(\F_T)$
is a quadruple  of processes $(Y(H),Z^f,Z,\tilde{Z})\in
\mathcal{S}^2\times \mathcal{H}^2_n\times\mathcal{H}^2_d\times
L^2(\nu(dx)\times  d\mathbb{P}\times ds)$ such that \beas
dY_t(H)=-g(t,Z^f_t,Z_t,\tilde{Z}_t)dt+Z^f_t dW^f_t+Z_t
dW_t+\int_{\mathbb{R}\setminus\{0\}}\tilde{Z}_t(x)
\tilde{N}(dt,dx)\mbox{ and }Y_T(H)=H.\eeas

Often BSDEs are also written in the following equivalent form:
\beas Y_t(H)= H+\int_t^T g(s,Z^f_s,Z_s,\tilde{Z}_s)ds-\int_t^T
Z^f_s dW^f_s-\int_t^T Z_s
dW_s-\int_t^T\int_{\mathbb{R}\setminus\{0\}} \tilde{Z}_s(x)
\tilde{N}(ds,dx).\eeas Since the terminal condition is given at
maturity time $T$, BSDEs have to be computed backwards in time.
As in many applications a terminal reward is specified and
solutions of BSDEs satisfy a dynamic programming principle,
BSDEs are often applied to solve problems in stochastic optimal
control and mathematical finance, see apart from the papers
mentioned above for instance
\citet{KarouiPengQuenez1997BSDE-Finance},
\citet{lazrak2003generalized}, \citet{hamadene2007starting},
and \citet{porchet2008valuation}; or for the discrete time case
\citet{madan2010valuation}.

Subsequently, we will always assume that the BSDE which we
consider has a unique solution. This is for instance the case
if $g(t,0,0,0)$ is in $L^2(d\mathbb{P}\times dt)$, and $g$ is
uniformly Lipschitz continuous; that is there exists $K>0$ such
that
$$|g(t,z^f_1,z_1,\tilde{z}_1)-g(t,z^f_0,z_0,\tilde{z}_0)|\leq
K\bigg(z^f_1-z^f_0|+|z_1-z_0|+\sqrt{\int_{\mathbb{R}\setminus\{0\}}
|\tilde{z}_1(x)-\tilde{z}_0(x)|^2\nu(dx)}\bigg),$$ see for instance \citet{royer2006backward} and the reference therein.

\begin{Example} Let $H$ be a bounded payoff and define
$Y_t(H)=\Et{H}.$ Then by the martingale representation theorem
(see e.g. \citet{jacod1987limit}, Sec. 3, Theorem 4.34) there
exist predictable square-integrable processes $Z^f,Z$ and
$\tilde{Z}$ such that $Y$ satisfies
$$dY_t(H)=Z^f_t dW^f_t+Z_t dW_t+\int_{\mathbb{R}\setminus\{0\}}\tilde{Z}_t(x) \tilde{N}(dt,dx)\mbox{ and
}Y_T(H)=H.$$
This is the simplest BSDE with $g=0$.
\end{Example}

Hence, a conditional expectation may be seen as a BSDE with
$g=0$. This is why BSDEs are being referred to as
$g$-expectations. The name should express that a BSDE may be
viewed as generalized (usually non-linear) conditional
expectation with an additional drift.

\begin{Example} Let $H$ be a bounded payoff and define
$Y_t(H)=\EQt{H}.$ Then by the martingale representation theorem
and by the Girsanov theorem
 $Y(H)$ satisfies
$$dY_t(H)= -\theta_t Z^f_t dt + Z^f_t dW^f_t+Z_t dW_t+\int_{\mathbb{R}\setminus\{0\}}\tilde{Z}_t(x) \tilde{N}(dt,dx)\mbox{ and
}Y_T(H)=H.$$
This is a linear BSDE with $g(t,z^f,z,\tilde{z})=\theta_t z^f$.
\end{Example}

Subsequently, we will write
$$\mathcal{E}^g_t(H)=Y_t(H).$$
In a Markovian setting $g$-expectations correspond to
semi-linear parabolic PDEs (or PIDEs in the case of jumps), see
for instance \citet{KarouiPengQuenez1997BSDE-Finance} in a
Brownian setting (see \citet{barles1997backward} in the case of
jumps).

It may be seen that if $g$ is convex and $g(t,0,0,0)=0$ then
the evaluation defined by
$$\Pi_\s(H):=\mathcal{E}^g_\s(H),$$
is normal, monotone, cash invariant, convex, time-consistent
and satisfies the local property. Hence, $g$-expectations give
us an abundance of time-consistent, continuous-time
evaluations. There are also certain sufficient conditions under
which, in a Brownian filtration, a time-consistent evaluation
is induced as the solution of a $g$-expectation, see
\citet{coquet2002filtration}.

The following theorem gives a complete characterization of
market-consistent evaluations given by $g$-expectations:
\begin{theorem}
\label{g-expectation} A $g$-expectation is market-consistent if
and only if $g(t,z^f,z,\tilde{z})-\theta_t z^f$ does not depend
on $z^f$ $d\mathbb{P}\times dt$ a.s.
\end{theorem}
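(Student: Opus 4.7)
The strategy is to exploit BSDE uniqueness to translate market-consistency into a pointwise identity on the driver $g$. The key observation is that a financial payoff $H^S\in L^\infty(\bar{\F}^S_T\vee \F_\s)$ has a $\mathbb{Q}$-martingale representation involving only $W^f$ (since $W$ and $\tilde{N}$ are $\mathbb{Q}$-independent of the information driving $H^S$), so that $M_t:=\mathbb{E}^{\mathbb{Q}}_{\F_t}[H^S]$ admits the $\mathbb{P}$-decomposition $dM_t=-\theta_t\eta_t\,dt+\eta_t\,dW^f_t$ by the second example preceding the theorem; in particular, $M$ itself solves the linear BSDE with driver $\theta_t z^f$ and terminal condition $H^S$.

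\emph{Sufficiency.} Suppose $g(t,z^f,z,\tilde{z})-\theta_t z^f=h(t,z,\tilde{z})$ is independent of $z^f$. Given $H\in L^\infty(\F_T)$ with BSDE data $(Y,Z^f,Z,\tilde Z)$ and a financial $H^S$ with associated $(M,\eta)$ as above, set $V_t:=Y_t+M_t$. Adding the two differentials and using the identity $g(t,Z^f_t+\eta_t,Z_t,\tilde Z_t)=g(t,Z^f_t,Z_t,\tilde Z_t)+\theta_t\eta_t$ (which is exactly the assumed structural condition on $g$), one checks that
\beas
dV_t=-g(t,Z^f_t+\eta_t,Z_t,\tilde Z_t)\,dt+(Z^f_t+\eta_t)\,dW^f_t+Z_t\,dW_t+\int \tilde Z_t(x)\,\tilde N(dt,dx),\quad V_T=H+H^S.
\eeas
Hence $V$ is the unique BSDE solution for terminal condition $H+H^S$, and evaluating at $\s$ yields $\mathcal{E}^g_\s(H+H^S)=\mathcal{E}^g_\s(H)+\mathbb{E}^{\mathbb{Q}}_{\F_\s}[H^S]$, which is market-consistency.

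\emph{Necessity.} Conversely, market-consistency together with the built-in time-consistency of any $g$-expectation gives $Y'_t=Y_t+M_t$ for all $t\in[\s,T]$, where $(Y',Z^{f'},Z',\tilde Z')$ denotes the BSDE data of $H+H^S$. Uniqueness of the semimartingale decomposition on the Brownian--Poisson filtration forces $(Z^{f'},Z',\tilde Z')=(Z^f+\eta,Z,\tilde Z)$, and matching the finite-variation parts gives
$$g(t,Z^f_t+\eta_t,Z_t,\tilde Z_t)-g(t,Z^f_t,Z_t,\tilde Z_t)=\theta_t\eta_t\quad d\mathbb{P}\times dt\text{-a.s.}$$
for every quadruple $(Z^f,\eta,Z,\tilde Z)$ that arises from some $(H,H^S)$. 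The main obstacle is promoting this to a pointwise identity in the deterministic arguments of $g$; I would handle this by the standard density argument in BSDE theory, namely, for prescribed measurable $(z^f,\eta,z,\tilde z)$ one realises them as BSDE controls on sets of positive $\mathbb{P}\times dt$-measure via terminal conditions built from stochastic exponentials in $W^f,W,\tilde N$ (and, for $H^S$, from stochastic exponentials in $W^f$ alone). The identity then reads $g(t,z^f+\eta,z,\tilde z)-g(t,z^f,z,\tilde z)=\theta_t\eta$ for all $\eta$, which is equivalent to saying that $g(t,z^f,z,\tilde z)-\theta_t z^f$ does not depend on $z^f$, $d\mathbb{P}\times dt$ a.s.
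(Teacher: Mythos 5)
Your proposal is correct in substance, and the two halves relate to the paper's proof differently. For sufficiency you argue directly: given the BSDE data $(Y,Z^f,Z,\tilde Z)$ of $H$ and the $\mathbb{Q}$-martingale $M$ of the financial claim with representation $dM_t=-\theta_t\eta_t\,dt+\eta_t\,dW^f_t$, you verify that $Y+M$ solves the $g$-BSDE with terminal value $H+H^S$ and invoke uniqueness. The paper instead only shows $\mathcal{E}^g_t(H^S)=\EQt{H^S}$ for purely financial payoffs (taking care, via a BMO argument, that the replicating integrand is square integrable under $\mathbb{P}$) and then appeals to Proposition \ref{prop} (i)$\Rightarrow$(iii). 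Your route buys a self-contained argument that does not pass through the convex-duality characterization behind Proposition \ref{prop}; what it costs is that you must justify the same integrability of $\eta$ under $\mathbb{P}$ (the paper's BMO step) and the claim that the representation of $M$ on $[\s,T]$ loads only on $W^f$ -- both true, but worth a sentence. For necessity your plan is essentially the paper's: additivity plus uniqueness of the semimartingale decomposition pins down the martingale integrands and forces $g(t,Z^f_t+\eta_t,Z_t,\tilde Z_t)-g(t,Z^f_t,Z_t,\tilde Z_t)=\theta_t\eta_t$ along attainable controls, and the remaining issue is realizing prescribed arguments as controls.

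That last step is the one place where your write-up is only a sketch, and your proposed implementation via stochastic exponentials is not the convenient choice: the controls of a $g$-BSDE whose terminal value is a stochastic exponential are not the prescribed constants, so you would still need a localization or Lebesgue-point argument to extract a pointwise identity. The paper's construction is cleaner and you should adopt something like it: for fixed $(z^f,z,\tilde z)$ take the terminal condition $H:=-\int_0^T g(s,z^f,z,\tilde z)\,ds+z^fW^f_T+zW_T+\int_0^T\int_{\mathbb{R}\setminus\{0\}}\tilde z(x)\,\tilde N(ds,dx)$, whose inclusion of the compensating drift integral makes the constant controls an exact solution of the $g$-BSDE, and take $H^S=\bar z^f W^{*,f}_T$ (a stochastic integral against $\tilde S$, so admissible as a financial payoff) to realize a constant $\eta=\bar z^f$; matching the finite-variation parts then gives $g(t,z^f+\bar z^f,z,\tilde z)-\theta_t\bar z^f=g(t,z^f,z,\tilde z)$ $d\mathbb{P}\times dt$-a.s., which is the claim. (Both your construction and the paper's produce unbounded terminal conditions, so strictly speaking one works in the $L^2$ BSDE framework at this point; that glitch is shared and not a defect of your argument relative to the paper's.)
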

\subsection{Examples of market-consistent BSDEs}
\label{g2}
 To get an interpretation of $g$ we will consider
some examples. The next proposition gives a \emph{dynamic}
market-consistent extension of the exponential principle. It
follows directly from Theorem \ref{g-expectation} above and
Theorem 2 in \cite{morlais2010new}. (\cite{morlais2010new} also
allows for a hedging set $\mathcal{C}$ and an initial capital
amount $x$. In our case this becomes $\mathcal{C}=\{0\}$ and
$x=0$.)
\begin{proposition}
\label{exp} Define the evaluation $\Pi$ as the solution of the BSDE
with driver function
$$g(t,z^f,z,\tilde{z})=\theta_t z^f+\frac{1}{2\g}|z|^2 +
 \g\int_{\mathbb{R}\setminus\{0\}}[\exp\{\frac{\tilde{z}(x)}{\g}\}-\frac{\tilde{z}(x)}{\g}-1]\nu(dx).$$
Then (i) $\Pi$ is market-consistent, and (ii) for pure insurance risk (i.e., terminal conditions independent of $S$) $\Pi$ corresponds to the exponential principle from Example \ref{exdc}.
\end{proposition}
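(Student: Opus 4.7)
The plan for part (i) is to apply Theorem \ref{g-expectation} directly. A quick inspection of the driver shows
$$g(t,z^f,z,\tilde z)-\theta_t z^f \;=\; \frac{1}{2\g}|z|^2+\g\int_{\mathbb{R}\setminus\{0\}}\bigl[\exp\{\tilde z(x)/\g\}-\tilde z(x)/\g-1\bigr]\nu(dx),$$
and the right-hand side has no $z^f$-dependence. Market-consistency is then immediate.

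For part (ii) my strategy would be to solve the BSDE explicitly by an exponential change of variables and then invoke uniqueness. Fix $H\in L^\infty(\F_T)$ independent of $\bar{\F}^S_T$. Since $\bar{\F}^S$ is generated by $W^f$ and $W^f$ is by assumption independent of $(W,N)$, the random variable $\exp\{H/\g\}$ is (up to a $\mathbb{P}$-null set) measurable with respect to the filtration generated by $(W,\tilde N)$. Applying the martingale representation theorem in that sub-filtration, and using that conditioning on $\F_t$ agrees with conditioning on the $(W,\tilde N)$-history up to $t$, yields
$$M_t:=\E{\exp\{H/\g\}\mid\F_t}=M_0+\int_0^t U_s\,dW_s+\int_0^t\!\!\int_{\mathbb{R}\setminus\{0\}} V_s(x)\,\tilde N(ds,dx),$$
for some $U\in\mathcal{H}^2_d$ and $V\in L^2(\nu(dx)\times d\mathbb{P}\times ds)$; crucially, no $dW^f$ term appears.

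Next I would set $Y_t:=\g\log M_t$, $Z_t:=\g U_t/M_{t-}$ and $\tilde Z_t(x):=\g\log\bigl(1+V_t(x)/M_{t-}\bigr)$, so that by construction $1+V_t(x)/M_{t-}=\exp\{\tilde Z_t(x)/\g\}$. Applying the Itô formula for semimartingales with jumps to $f(m)=\g\log m$, the continuous quadratic variation of $M$ contributes the drift $-\tfrac{1}{2\g}|Z_t|^2\,dt$; the linear martingale piece $f'(M_{t-})V_t(x)$ combined with the nonlinear jump correction $f(M_{t-}+V_t(x))-f(M_{t-})-f'(M_{t-})V_t(x)$, once passed to compensated form, assembles into the compensated martingale $\int \tilde Z_t(x)\tilde N(dt,dx)$ together with the drift $-\g\int[\exp\{\tilde Z_t(x)/\g\}-\tilde Z_t(x)/\g-1]\nu(dx)\,dt$. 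Hence $(Y,0,Z,\tilde Z)$ solves the BSDE of the proposition with terminal condition $H$, and the uniqueness assumption from Section \ref{g} yields $\mathcal{E}^g_t(H)=\g\log\E{\exp\{H/\g\}\mid\F_t}$, which at $t=0$ (with $\G$ trivial) is precisely the conditional exponential principle of Example \ref{exdc}.

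The main technical obstacle is the jump Itô bookkeeping: one must verify that the continuous quadratic variation contribution, the compensator arising when $f'(M_{t-})V_t(x)$ is written in compensated form, and the nonlinear jump correction combine exactly into the bracket $\g[\exp\{\tilde z/\g\}-\tilde z/\g-1]$. The key algebraic identity is $1+V_t(x)/M_{t-}=\exp\{\tilde Z_t(x)/\g\}$, which couples the exponential transform to the specific nonlinearity of the driver. An equivalent route, as already indicated in the statement, is to invoke Theorem~2 of \cite{morlais2010new} with hedging set $\mathcal{C}=\{0\}$ and initial capital $x=0$.
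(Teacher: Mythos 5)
Your part (i) is exactly the paper's argument: market-consistency is read off from Theorem \ref{g-expectation} because $g(t,z^f,z,\tilde z)-\theta_t z^f$ contains no $z^f$. For part (ii) the paper simply invokes Theorem 2 of \citet{morlais2010new} (with $\mathcal{C}=\{0\}$ and $x=0$), whereas you verify the claim directly via the exponential transform $Y_t=\g\log \Et{\exp\{H/\g\}}$ and It\^o's formula with jumps. Your bookkeeping is correct: with $Z_t=\g U_t/M_{t-}$ and $\tilde Z_t(x)=\g\log(1+V_t(x)/M_{t-})$ the continuous quadratic variation gives $-\tfrac{1}{2\g}|Z_t|^2\,dt$, and the compensator of the nonlinear jump correction gives $-\g\int[\exp\{\tilde Z_t(x)/\g\}-\tilde Z_t(x)/\g-1]\nu(dx)\,dt$, so $(Y,0,Z,\tilde Z)$ solves the BSDE with the stated driver and $Y_T=H$; together with the uniqueness assumption of Section 5.1 (which for this quadratic--exponential, non-Lipschitz driver is precisely what Morlais' result supplies) this is a legitimate and more self-contained route than the paper's citation, at the price of having to supply the uniqueness and integrability details yourself.

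The one step that does not hold as written is the measurability claim: from ``$H$ independent of $\bar{\F}^S_T$'' and ``$W^f$ independent of $(W,N)$'' you conclude that $H$ is measurable with respect to the filtration generated by $(W,\tilde N)$. Independence of a $\sigma$-algebra does not imply measurability with respect to its independent complement: with $d=1$ and no jumps, $H=\mathrm{sign}(W^f_T)\,\mathrm{sign}(W_T)$ is bounded, independent of $\sigma(W^f_t:t\le T)$, yet not $\sigma(W)$-measurable, and for this $H$ the martingale $\Et{\exp\{H/\g\}}$ does carry a nonzero $dW^f$ component, so your representation step fails and the identification with the unhedged exponential principle is no longer guaranteed. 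The repair is interpretational rather than technical: read ``pure insurance risk'' as terminal conditions measurable with respect to the $\sigma$-algebra generated by $(W,N)$ --- which is the hypothesis the Morlais citation needs as well --- instead of deriving that measurability from independence. With that reading your argument goes through, provided you also record that boundedness of $H$ gives $0<c\le M_{t-}$ and $M_{t-}+V_t(x)\ge c$ for $\nu(dx)\times d\mathbb{P}\times dt$-a.e.\ $(x,\omega,t)$, so the logarithm defining $\tilde Z$ is well defined and $\tilde Z$ is bounded, hence $(Y,0,Z,\tilde Z)$ indeed lies in $\mathcal{S}^2\times\mathcal{H}^2_n\times\mathcal{H}^2_d\times L^2(\nu(dx)\times d\mathbb{P}\times ds)$.
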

Other examples of the driver function $g$ can be obtained by
looking at one-period evaluations in discrete time defined
recursively. Namely, suppose that we have an equi-spaced time
grid $I=\{0,h,2h,\ldots,T\}$ where we assumed without loss of
generality that $T$ is a multiple of $h.$ The filtration
$(\F_{ih})_{i=0,1,\ldots,T/h}$ is generated by
 $(W^f_{ih},W_{ih},\tilde{N}((0,ih],dx))_{i=1,\ldots,T/h}.$
Define $S^{h,j}_0=s^j_0,$
$S^{h,j}_{(i+1)h}=S^{h,j}_{ih}(1+\mu^j(ih,S^h_{ih})h+\tilde{\s}^j(ih,S^h_{ih})\Delta
W^{f}_{(i+1)h})$ for $i=1,\ldots,T/h$ and $j=1,\ldots,n$, and
$S^h=(S^{h,1},\ldots,S^{h,n}).$ Denote further
$\F^{S^h}_{(i+1)h}=\bar{\F}^{S^h}_{(i+1)h}\vee \F_{ih}$, where
$\bar{\F}^{S^h}$ is the filtration generated by $S^h.$ In other
words $\F^{S^h}$ is the information of the (discrete-time)
stock process together with the previous values of the
insurance process. Let $\mathbb{Q}^h$ be the measure (with
$\bar{\F}^{S^h}$-measurable density) such that $ \Delta
W^{f,*}_{(i+1)h}:= \Delta W^{f}_{(i+1)h}-\theta_{ih}h$ is a
martingale.

Now we can use evaluations from our Examples \ref{exdc2} over
one period and glue them together recursively. Using our two
step procedures for the one-periodic evaluations could be
natural, in particular, if the stock process can be observed
before the insurance process. Suppose for instance for a moment
that we are at time $ih$ and the stock process, $S^h,$ can be observed
at time $(i+\frac{1}{2})h,$ whereas the insurance process is
revealed after the stock process at time $(i+1)h.$ Since data
from the financial market can be observed almost continuously,
while data from insurance companies are typically observed less
frequently, this may not be an unreasonable assumption, see
also Section \ref{timecon}. Of course our insurance process
will possibly be effected by the financial market through its
correlation to $W^f$. However, it will not be completely
predictable due to its dependance on the jumps and $W$. In this
situation if an evaluation $(\Pi_{\s})_{\s\in[0,T]}$ is
time-consistent, then market-consistency would imply that for
the aggregated evaluation $(\Pi_{ih})_{i=0,1,\ldots,T/h}$ we
have
$$\Pi_{ih}(H_{(i+1)h})=\Pi_{ih}(\Pi_{(i+\frac{1}{2})h}(H_{(i+1)h}))=\EQti{\Pi_{(i+\frac{1}{2})h}(H_{(i+1)h})}.$$
The last equation holds by market-consistency since
$\Pi_{(i+\frac{1}{2})h}(H_{(i+1)h}),$ is
$\F^{S^h}_{(i+\frac{1}{2})h}=\bar{\F}^{S^h}_{(i+\frac{1}{2})h}\vee
\F_{ih}$-measurable. Therefore, the setting outlined above
would indeed give rise to applying a two step market evaluation
at time-instances $ih$, to evaluate payoffs up to time
$(i+1)h$. We will calculate one example explicitly by
considering at time instances $ih$ the mean-variance two step
market evaluation
$$\Pi^v_{ih,(i+1)h}(H_{(i+1)h})=\EQh{\Enip{H_{(i+1)h}}+\frac{1}{2}\alpha
Var_{\F^{S^h}_{(i+1)h}}[H_{(i+1)h}]}, \ \ \a\geq 0
.$$

To obtain a multi-period evaluation
$(\Pi^h_{ih})_{i=0,1,\ldots,T/h}$ on the whole filtration with
$\Pi^h_{ih}:L^\infty(\F_{T})\to L^\infty(\F_{ih})$ we will
define $\Pi^h$ recursively by setting
\begin{align}
\label{rec}
\Pi^h_T(H)=H \mbox{ and }\Pi^h_{ih}(C)=\Pi^v_{\F_{ih}}(\Pi^h_{(i+1)h}(H))\mbox{ for }i=0,1,\ldots,T/h-1.
\end{align}
The following proposition is proved in the appendix:
\begin{proposition}
\label{mean-var} Suppose that the evaluation $\Pi^h$ is
constructed by (\ref{rec}), i.e., using locally the
Mean-Variance two step market evaluation. Then for every
terminal payoff $H\in L^\infty(\F_T)$ there exists predictable
$(Z^{h,f},Z^h,\tilde{Z}^h)$ and a martingale $(L^h_{ih})_i$
orthogonal (under $\mathbb{Q}^h$) to\newline
$(W^f_{ih},W_{ih},\tilde{N}((0,ih],dx))_i$ such that for all
$i$ we have
\begin{align}
\label{discretedc} \Pi^h_{ih}(H)&=H+\sum_{j=i}^{T/h-1}\bigg( \Big[\theta_{jh}Z^{h,f}_{jh}+\frac{\a}{2}\Big(|Z^{h}_{jh}|^2+
\int_{\mathbb{R}\setminus\{0\}}|\tilde{Z}^h_{jh}(x)|^2\nu(dx)\Big)\Big]h
\nonumber\\
&\hspace{0.1cm}+\frac{\a}{2}\EQj{\Big(\Delta L^h_{(j+1)h}-{\rm E}_{\F^{S^h}_{(j+1)h}}\Big[\Delta
L^h_{(j+1)h}\Big]\Big)^2}\bigg)
-\sum_{j=i}^{T/h-1} Z^{h,f}_{jh} \Delta W^f_{(j+1)h}
\nonumber\\
&\hspace{0.1cm}
-\sum_{j=i}^{T/h-1} Z^h_{jh} \Delta W_{(j+1)h}
-\sum_{j=i}^{T/h-1}\int_{\mathbb{R}\setminus\{0\}}
\tilde{Z}^h_{jh}(x) \tilde{N}((jh,(j+1)h],dx)-(L^h_T-L^h_{ih}).
\end{align}
In particular, $\Pi^h$ satisfies a discrete-time BSDE.
\end{proposition}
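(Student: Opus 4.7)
The argument is by backward induction on $i$ from $T/h$ down to $0$. The base case $\Pi^h_T(H)=H$ matches (\ref{discretedc}) with empty sums. For the induction step, set $V_j:=\Pi^h_{jh}(H)$ and apply the discrete-time Galtchouk--Kunita--Watanabe decomposition of $V_{i+1}$ under $\mathbb{Q}^h$ on the grid interval $(ih,(i+1)h]$, against the three mutually orthogonal $\mathbb{Q}^h$-martingale increments $\Delta W^{f,*}_{(i+1)h}:=\Delta W^f_{(i+1)h}-\theta_{ih}h$, $\Delta W_{(i+1)h}$ and $\tilde N((ih,(i+1)h],dx)$. This produces $\F_{ih}$-measurable coefficients $Z^{h,f}_{ih},Z^h_{ih},\tilde Z^h_{ih}$ and a $\mathbb{Q}^h$-orthogonal residual $\Delta L^h_{(i+1)h}$ such that
\begin{align*}
V_{i+1} &= \EQti{V_{i+1}} + Z^{h,f}_{ih}\Delta W^{f,*}_{(i+1)h} + Z^h_{ih}\Delta W_{(i+1)h} \\
&\quad + \int_{\mathbb{R}\setminus\{0\}}\tilde Z^h_{ih}(x)\tilde N((ih,(i+1)h],dx) + \Delta L^h_{(i+1)h}.
\end{align*}
Carrying out this decomposition at every grid point simultaneously defines the predictable $(Z^{h,f},Z^h,\tilde Z^h)$ and the $\mathbb{Q}^h$-martingale $L^h$ of the statement.

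Next, the two ingredients of $\Pi^v_{\F_{ih}}(V_{i+1})$ are computed separately. The crucial observation is that $d\mathbb{Q}^h/d\mathbb{P}|_{\F_{(i+1)h}}$ depends only on past $\Delta W^f$-increments, hence is $\F^{S^h}_{(i+1)h}$-measurable; thus $\Enip{\cdot}$ agrees with $\mathbb{E}^{\mathbb{Q}^h}_{\F^{S^h}_{(i+1)h}}[\cdot]$ on $L^1(\F_{(i+1)h})$. Moreover, under $\mathbb{Q}^h$ the pair $(\Delta W_{(i+1)h},\tilde N((ih,(i+1)h],\cdot))$ is independent of $\F^{S^h}_{(i+1)h}=\F_{ih}\vee\sigma(\Delta W^f_{(i+1)h})$. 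Applied to the decomposition above, this gives
\begin{align*}
\Enip{V_{i+1}} &= \EQti{V_{i+1}} + Z^{h,f}_{ih}\Delta W^{f,*}_{(i+1)h} + \Enip{\Delta L^h_{(i+1)h}}, \\
V_{i+1}-\Enip{V_{i+1}} &= Z^h_{ih}\Delta W_{(i+1)h} + \int\tilde Z^h_{ih}(x)\tilde N((ih,(i+1)h],dx) + \bigl(\Delta L^h_{(i+1)h}-\Enip{\Delta L^h_{(i+1)h}}\bigr).
\end{align*}
Squaring the second identity, taking $\EQti{\cdot}$, and cancelling the cross terms by the same independence together with the $\mathbb{Q}^h$-orthogonality of $L^h$ to $W$ and $\tilde N$ yields
$$\EQti{\mathrm{Var}_{\F^{S^h}_{(i+1)h}}[V_{i+1}]} = |Z^h_{ih}|^2 h + \int|\tilde Z^h_{ih}(x)|^2\nu(dx)\,h + \EQti{(\Delta L^h_{(i+1)h}-\Enip{\Delta L^h_{(i+1)h}})^2}.$$

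Substituting into $V_i=\EQti{V_{i+1}}+\tfrac{\alpha}{2}\EQti{\mathrm{Var}_{\F^{S^h}_{(i+1)h}}[V_{i+1}]}$, combining with the GKW expansion of $V_{i+1}$, and using $\Delta W^{f,*}_{(i+1)h}=\Delta W^f_{(i+1)h}-\theta_{ih}h$ to surface the $\theta_{ih}Z^{h,f}_{ih}h$ drift term, one obtains the single-step identity
\begin{align*}
V_i &= V_{i+1} + \bigl[\theta_{ih}Z^{h,f}_{ih}+\tfrac{\alpha}{2}\bigl(|Z^h_{ih}|^2+\textstyle\int|\tilde Z^h_{ih}(x)|^2\nu(dx)\bigr)\bigr]h + \tfrac{\alpha}{2}\EQti{(\Delta L^h_{(i+1)h}-\Enip{\Delta L^h_{(i+1)h}})^2} \\
&\quad - Z^{h,f}_{ih}\Delta W^f_{(i+1)h} - Z^h_{ih}\Delta W_{(i+1)h} - \int\tilde Z^h_{ih}(x)\tilde N((ih,(i+1)h],dx) - \Delta L^h_{(i+1)h}.
\end{align*}
Combined with the induction hypothesis and telescoped, this yields (\ref{discretedc}) at level $i$ and completes the induction.

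The main obstacle is the treatment of the residual $L^h$: the GKW decomposition only enforces orthogonality of $\Delta L^h_{(i+1)h}$ to $\Delta W^{f,*}_{(i+1)h}$ in the covariance sense, so the conditional mean $\Enip{\Delta L^h_{(i+1)h}}$ need not vanish. This is precisely why the driver in (\ref{discretedc}) contains the \emph{centered} residual variance $\EQti{(\Delta L^h_{(i+1)h}-\Enip{\Delta L^h_{(i+1)h}})^2}$ rather than the uncentered second moment $\EQti{|\Delta L^h_{(i+1)h}|^2}$. Once this subtlety is correctly isolated, the remaining cancellations of cross terms and the telescoping step are routine.
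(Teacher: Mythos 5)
Your proposal is correct and follows essentially the same route as the paper's proof: the orthogonal (GKW/projection) decomposition of $\Pi^h_{(i+1)h}(H)$ under $\mathbb{Q}^h$, the identification of $\mathbb{P}$- and $\mathbb{Q}^h$-conditional expectations given $\F^{S^h}_{(i+1)h}$ via the $\F^{S^h}$-measurability of the density, the vanishing of the cross/covariance terms by independence and orthogonality (including the correctly centered $\Delta L^h$ term), and the telescoping of the one-step identity. The only cosmetic difference is that you compute $\Enip{\cdot}$ and the conditional variance directly, where the paper strips the hedgeable part using cash invariance and market-consistency of the one-step evaluation before expanding the variance.
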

From Proposition \ref{mean-var} we may infer that
\begin{align*}
\Ejh{\Delta
\Pi^h_{(j+1)h}(H)}&=-\Big[\theta_{jh}Z^{h,f}_{jh}+\frac{\a}{2}\Big(|Z^{h}_{jh}|^2+
\int_{\mathbb{R}\setminus\{0\}}|\tilde{Z}^h_{jh}(x)|^2\nu(dx)\Big)\Big]h\\
&\hspace{1.5cm}
-\frac{\a}{2}\EQj{\Big(\Delta L^h_{(j+1)h}-{\rm E}_{\F^{S^h}_{(j+1)h}}\Big[\Delta
L^h_{(j+1)h}\Big]\Big)^2}.
\end{align*}
 Note that
the orthogonal martingale terms $\Delta L^h_{(j+1)h}$ arise
because the discretized Brownian Motions do not have the
representation property. However, the continuous time Brownian
motions and the Poisson random measure do have the
representation property. Therefore, if we ignore the $L^h$ then
an analogous infinitesimal way of charging the risk in
continuous time would be an evaluation which satisfies $$\Et{d
\Pi_{t}(H)}=-\Big[\theta_tZ^f_t+\frac{\a}{2}\Big(|Z_t|^2+\int_{\mathbb{R}\setminus\{0\}}|\tilde{Z}_t(x)|^2\nu(dx)\Big)\Big]dt.$$
This corresponds to an evaluation given by the solution of a
BSDE with driver function
 \beas
g(t,z,\tilde{z})=\theta_tz^f+\frac{\a}{2}\Big(|z|^2+\int_{\mathbb{R}\setminus\{0\}}|\tilde{z}(x)|^2\nu(dx)\Big).
\eeas
\begin{Remark}
The analogy stated above only corresponds to the local way of
charging risk. A global correspondence of charging risk would
involve proving a convergence result for the whole path. In the
                                           case that the
                                           driver function
                                           is Lipschitz
                                           continuous it
                                           can actually be shown in
                                           a purely
                                           Brownian setting
                                           that after an
                                           appropriate
                                           scaling the whole path of the discrete
                                           time evaluations
                                           converges to
                                           the corresponding
                                           solution of the
                                           BSDE;
                                           see
                                           \citet{stadje2010extending}.
                                           However, in
                                           Proposition 5.5.
                                           the driver
                                           function is
                                           quadratic in
                                           $z$. In this
                                           case already in
                                           the purely Brownian
                                           setting an
                                           extension by
                                           convergence may
                                           not always be possible. \citet{cheridito2010bs}
                                           give an example
                                           which
                                           shows that in a
                                           setting where
                                           the discrete
                                           time filtration
                                           is generated by
                                           a Bernoulli
                                           random walk it may
                                           happen
                                           that the
                                           discrete time
                                           $H^n$ are
                                           uniformly
                                           bounded and
                                           converge to
                                           $H\in
                                           L^\infty(\F)$
                                           but the discrete
                                           time evaluations
                                           explode.
\end{Remark}
\section{Summary \& Conclusions}
In this paper we have studied the extension of standard
actuarial principles in time-consistent and market-consistent
directions by introducing a new market-consistent evaluation
procedure which we call `two step market evaluation.' On the
one hand, two step market evaluations sometimes arise when an
agent starts with an evaluation that is not market-consistent,
such as the Average Value at Risk or the Exponential Premium
principle, and then engages in hedging. On the other hand,
market-consistent evaluations can also be defined directly by
applying a standard evaluation technique, conditional on the
stock process. In this case the structure of many standard
evaluation techniques can be preserved.

We have shown that two step market evaluations are invariant if
the stock is taken as a numeraire. In Theorem 3.8 and Theorem
3.9
 a complete axiomatic characterization for two step
market evaluations  is provided. Moreover, we have proved that
in a dynamic setting with a continuous stock prices process and
an insurance process being revealed at predictable times every
evaluation which is time-consistent and market-consistent is a
two step market evaluation up to the next predictable time,
which gives a strong argument for their use. We have also
characterized market-consistency in terms of $g$-expectations
and studied the extension of the mean-variance and the
exponential principle to continuous time, in a setting with
jumps. Our analysis shows that two step evaluations can provide
a useful, computationally tractable tool for market-consistent
valuations.

\appendix

\section{Appendix: Technical Material \& Proofs}
\subsection{Background to the Essential Supremum}
The first part of the appendix is basically a summary of the
definitions and results given in A.5
\citet{follmer2004stochastic}. Consider a family of random
variables $M$ on a given probability space
$(\Omega,\bar{\F},\mathbb{P}).$ Now if $M$ is countable then
$Z^*(\omega)=\sup_{Z\in M} Z(\omega)$ is also measurable.
However, measurability is not guaranteed if $M$ is uncountable.
Even if the pointwise supremum is measurable it might not be
the right concept when we focus on a.s. properties. For
instance if $\mathbb{P}$ is the Lesbegue measure on
$\Omega=[0,1]$ and $M=\{I_{\{x\}}|0\leq x\leq 1\}$ then
$\sup_{Z\in M}Z=1$ while $Z=0$ a.s. for all $Z\in M.$ This
suggest the following notion of an \emph{essential supremum}
defined in terms of almost sure inequalities. This result can
be found as Theorem A.32 in \citet{follmer2004stochastic}.
\begin{theorem}
\label{Schied} Let $M$ be any set of random variables on
$(\Omega,\bar{\F},\mathbb{P}).$
\begin{itemize}
\item[(a)] There exists a random variable $Z^*$ such that
$$Z^*\geq Z\,\,\, \mathbb{P}\mbox{-a.s. for all }Z\in M.\hspace{2cm}(A)$$
Moreover, $Z^*$ is a.s. unique in the following sense: Any
other random variable $\hat{Z}$ with property (A) satisfies
$\hat{Z}\geq Z^*$ $\mathbb{P}$-a.s.
\item[(b)] Suppose that $M$ is directed upwards, i.e., for
    $Z_1,Z_2\in M$ there exists $Z_3\in M$ with $Z_3\geq
    \max(Z_1,Z_2).$ Then there exists an increasing
    sequence $Z_1\leq Z_2\leq\ldots$ in $M$ such that
    $Z^*=\lim_n Z_n$ $\mathbb{P}$-a.s.
\end{itemize}
\end{theorem}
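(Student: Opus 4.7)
The plan is to follow the standard exhaustion argument that realizes $Z^*$ as the pointwise supremum of a cleverly chosen \emph{countable} subfamily. Since the conclusion is invariant under strictly increasing transformations and since measurability and a.s. inequalities are preserved by such maps, I would first reduce to the bounded case by replacing each $Z\in M$ with $\varphi(Z):=\arctan Z$, obtaining a family $M'$ of random variables with values in $[-\pi/2,\pi/2]$; once an essential supremum $Z^{*\prime}$ of $M'$ has been built, $Z^*:=\tan(Z^{*\prime})$ (with the usual convention at $\pm\pi/2$) will be the required essential supremum of $M$.

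Working now with the bounded family, let $\mathcal{C}$ denote the collection of \emph{countable} subsets $C\subset M'$. For each $C\in\mathcal{C}$ the pointwise supremum $Z_C:=\sup_{Z\in C}Z$ is measurable as a countable supremum, and $|Z_C|\leq \pi/2$, so $\mathbb{E}[Z_C]$ is well-defined. Set $\alpha:=\sup_{C\in\mathcal{C}}\mathbb{E}[Z_C]\in[-\pi/2,\pi/2]$ and pick $C_n\in\mathcal{C}$ with $\mathbb{E}[Z_{C_n}]\uparrow \alpha$. Then $C^*:=\bigcup_n C_n$ is still countable, so $Z^{*\prime}:=Z_{C^*}$ is measurable and dominates every $Z_{C_n}$, which forces $\mathbb{E}[Z^{*\prime}]=\alpha$.

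The central step is to show that $Z^{*\prime}$ already bounds every element of $M'$ a.s. If some $Z\in M'$ violated this, i.e.\ $\mathbb{P}(Z>Z^{*\prime})>0$, then $C^*\cup\{Z\}\in\mathcal{C}$ and its pointwise supremum equals $\max(Z^{*\prime},Z)\geq Z^{*\prime}$ with strict inequality on a set of positive measure, yielding $\mathbb{E}[\max(Z^{*\prime},Z)]>\alpha$ and contradicting the definition of $\alpha$. Uniqueness in part (a) is then immediate: if $\hat Z$ also satisfies (A), then $\hat Z\geq Z$ a.s.\ for every $Z\in C^*$, and countability of $C^*$ upgrades this to $\hat Z\geq \sup_{Z\in C^*}Z=Z^{*\prime}$ a.s. For part (b), enumerate $C^*=\{Y_1,Y_2,\ldots\}$ and recursively choose, using the upward-directedness of $M$, elements $Z_n\in M$ with $Z_n\geq \max(Z_{n-1},Y_n)$. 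Then $(Z_n)$ is increasing and dominated by $Z^*$, while $\lim_n Z_n\geq \sup_n Y_n=Z^*$ a.s., so $Z_n\uparrow Z^*$ a.s.

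The main subtlety is not the countable-exhaustion step itself, which is routine once one measures progress by a bounded functional such as $\mathbb{E}[\arctan(\cdot)]$, but rather ensuring that the measure-theoretic null sets across uncountably many inequalities are handled correctly; the whole argument hinges on the fact that after passing to $C^*$ every \emph{single} extra $Z\in M$ can only contribute a null-set worth of new mass without raising $\alpha$. The boundedness reduction is what makes that exhaustion via expectations rigorous.
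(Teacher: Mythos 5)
Your proof is correct, and it is essentially the canonical argument: the paper does not prove this statement itself but cites it as Theorem A.32 of F\"ollmer and Schied (2004), and your route (reduce to a bounded family via $\arctan$, maximize $\mathbb{E}[Z_C]$ over countable subfamilies, take a countable union to get $Z^{*\prime}$, derive a contradiction from any $Z$ exceeding it on a positive-measure set, and use upward-directedness to extract the increasing sequence in part (b)) is exactly the proof given in that reference. The only caveat worth recording is that $Z^{*\prime}$ may attain $\pm\pi/2$, so $Z^*=\tan(Z^{*\prime})$ must be allowed to be extended-real-valued, which is also how the cited result is formulated.
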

\begin{definition}
The random variable $Z^*$ in the the theorem above is called
the \emph{essential supremum} of $M$ and we write:
$$Z^*=\esssup_{Z\in M} Z.$$
We define the essential infimum similarly.
\end{definition}
If the probability space is finite the essential supremum
corresponds to the pointwise supremum taken over all atoms.
\begin{lemma}
\label{appendix} If $M$ satisfies the concatenation property,
i.e., for every $A\in \bar{\F}$ and $Z_1,Z_2\in  M$ we have
that $Z_1I_A+Z_2I_{A^c}\in M,$ then $M$ is directed upwards.
\end{lemma}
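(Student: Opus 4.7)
The plan is extremely short: construct the required dominating element explicitly using the concatenation property applied to the set where one of the two random variables already dominates the other.

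Given $Z_1,Z_2\in M$, I would first form the set
$$A:=\{\omega\in\Omega\mid Z_1(\omega)\geq Z_2(\omega)\}.$$
Since $Z_1,Z_2$ are $\bar{\F}$-measurable, $A\in\bar{\F}$, so the concatenation property applies to $A$.

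Next, define
$$Z_3:=Z_1 I_A+Z_2 I_{A^c}.$$
By the concatenation property, $Z_3\in M$. It remains only to check that $Z_3$ dominates both $Z_1$ and $Z_2$ pointwise: on $A$ we have $Z_3=Z_1\geq Z_2$, and on $A^c$ we have $Z_3=Z_2>Z_1$. Thus $Z_3=\max(Z_1,Z_2)$ everywhere, and in particular $Z_3\geq\max(Z_1,Z_2)$ a.s., establishing that $M$ is directed upwards.

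There is essentially no obstacle here; the only thing to be mindful of is the measurability of $A$, which is immediate from the $\bar{\F}$-measurability of $Z_1$ and $Z_2$, and the fact that the concatenation property as stated applies to any $A\in\bar{\F}$, not only to a fixed sub-$\sigma$-algebra.
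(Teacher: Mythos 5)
Your construction is exactly the paper's: take $A=\{Z_1\geq Z_2\}$ and set $Z_3=Z_1I_A+Z_2I_{A^c}$, which lies in $M$ by concatenation and dominates both variables. The proof is correct and matches the paper's argument essentially verbatim.
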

\begin{proof}
Define
$$Z^*=Z_1I_{\{Z_1\geq Z_2\}}+Z_2I_{\{Z_1< Z_2\}}.$$
By the concatenation property $Z^*\in M,$ and by definition
$Z^*\geq \max(Z_1,Z_2).$
\end{proof}
\subsection{Proofs of the results in Section \ref{markcon}}
\textit{Proof of Proposition \ref{prop}.} (i)$\Rightarrow$(ii)
By (\ref{duality}) we have that
$$
\Pi_{\G}(H)=\esssup_{\xi\in L^1(\F)} \{\EG{\xi H}-\Pi^*_{\G}(\xi)\}.$$
Furthermore,
\begin{align}
\label{restrdual}
\Pi^*_{\G}(\xi)&=\esssup_{H\in L^\infty(\F)} \{\EG{\xi H}-\Pi_{\G}(H)\}\nonumber\\
&\geq \esssup_{H\in L^\infty(\F^S)} \{\EG{\xi H}-\Pi_{\G}(H)\}
\nonumber\\
&= \esssup_{H\in L^\infty(\F^S)} \{\EG{\EGS{\xi} H}-\Pi_{\G}(H)\}.
\end{align}
The last term in (\ref{restrdual}) is the dual of $\Pi_{\G}$
restricted to $\F^S$ evaluated at $\EGS{\xi}$. Now by
assumption $\Pi_{\G}(H^S)=\EQG{H^S}$ is linear. Thus, its dual
penalty function,
$(\Pi_\G|\F^S)^*:L^1(\F^S)\to\mathbb{R}\cup\{\infty\},$ must be
equal to the indicator function which is zero if the input
argument is $\frac{d\mathbb{Q}_{\G}}{d\mathbb{P}_{\G}},$ and
infinity else. But then, by the inequality above,
$\Pi^*_{\G}(\xi)$ must be equal to infinity as well if
$\EGS{\xi}\ne \frac{d\mathbb{Q}_{\G}}{d\mathbb{P}_{\G}}.$ Thus,
it is sufficient to consider $\xi$ of the form
$\xi=\frac{d\mathbb{Q}_{\G}}{d\mathbb{P}_{\G}}Z$ for a $Z:=
\frac{d\mathbb{Q}_{\G}}{d\mathbb{P}_{\G}}\in
\mathcal{Q}_{\F^S}.$ Defining for $Z\in \mathcal{Q}_{\F^S},$
$c_\G(Z)=\Pi^*_\G(\frac{d\mathbb{Q}_{\G}}{d\mathbb{P}_{\G}}Z),$
we have indeed that
\begin{align*}
\Pi_{\G}(H)&=\esssup_{\xi\in L^1(\F)} \{\EG{\xi H}-\Pi^*_{\G}(\xi)\}\\
&
=\esssup_{Z\in \mathcal{Q}_{\F^S}} \{\EG{\frac{d\mathbb{Q}_{\G}}{d\mathbb{P}_{\G}}ZH}-\Pi^*_{\G}(\frac{d\mathbb{Q}_{\G}}{d\mathbb{P}_{\G}}Z)\}=
\esssup_{Z\in \mathcal{Q}_{\F^S}} \{\EQG{ZH}-c_\G(Z)\}.
\end{align*}

(ii)$\Rightarrow$(iii): It is by (ii)
\begin{align*}
\Pi_{\G}(H^S+H)&=\esssup_{Z\in \mathcal{Q}_{\F^S}} \Big\{\EG{\frac{d\mathbb{Q}_{\G}}{d\mathbb{P}_{\G}}Z (H^S+H)}-c_\G(Z)\Big\}\\
&=\esssup_{Z\in \mathcal{Q}_{\F^S}}\Big\{ \EG{\frac{d\mathbb{Q}_{\G}}{d\mathbb{P}_{\G}}Z H^S}+\EG{\frac{d\mathbb{Q}_{\G}}{d\mathbb{P}_{\G}}Z H}-c_\G(Z)\Big\}\\
&=\esssup_{Z\in \mathcal{Q}_{\F^S}} \Big\{\EG{\EGS{\frac{d\mathbb{Q}_{\G}}{d\mathbb{P}_{\G}}ZH^S}}+\EG{\frac{d\mathbb{Q}_{\G}}{d\mathbb{P}_{\G}}Z H}-c_\G(Z)\Big\}
\\
&=\esssup_{Z\in \mathcal{Q}_{\F^S}} \Big\{\EG{\frac{d\mathbb{Q}_{\G}}{d\mathbb{P}_{\G}} H^S\EGS{Z}}+\EG{\frac{d\mathbb{Q}_{\G}}{d\mathbb{P}_{\G}}Z H}-c_\G(Z)\Big\}
\\
&=\esssup_{Z\in \mathcal{Q}_{\F^S}} \Big\{\EG{\frac{d\mathbb{Q}_{\G}}{d\mathbb{P}_{\G}} H^S}+\EG{\frac{d\mathbb{Q}_{\G}}{d\mathbb{P}_{\G}}Z H}-c_\G(Z)\Big\}
\\
&=\EG{\frac{d\mathbb{Q}_{\G}}{d\mathbb{P}_{\G}} H^S}+\esssup_{Z\in \mathcal{Q}_{\F^S}} \Big\{\EG{\frac{d\mathbb{Q}_{\G}}{d\mathbb{P}_{\G}}Z H}-c_\G(Z)\Big\}
=\EQG{H^S}+\Pi_{\G}(H),
\end{align*}
where we have used in the fourth equation that
$d\mathbb{Q}_{\G}/d\mathbb{P}_{\G}$ and $H^S,$ by assumption,
are $\F^S$-measurable. In the fifth equation we have used that
$\EGS{Z}=1$. This proves (ii)$\Rightarrow$(iii). The direction
(iii)$\Rightarrow$(i) is clear using that by normalization
$\Pi_{\G}(0)=0$.

Finally let us show that if $\Pi_{\G}$ is additionally assumed
to be monotone, then market-consistency is equivalent to any
payoff being evaluated between its super- and sub-replication
price. Clearly, if any payoff is evaluated between its super-
and sub-replication price then for any $H^S\in L^\infty(\F^S)$
we must have that $\Pi_\G(H^S)=\EQG{H^S},$ as the financial
market is assumed to be complete. Hence, by the direction
(i)$\Rightarrow$(iii) shown above, indeed $\Pi_\G$ is
market-consistent. On the other hand if $\Pi_\G$ is
market-consistent and monotone then by standard duality results
the penalty function in (ii) must have a domain in $
\mathcal{Q}^+_{\F^S}.$ Also note that the set defined by
$M:=\dfrac{d\mathbb{Q}_{\G}}{d\mathbb{P}_{\G}}\mathcal{Q}^+_{\F^S}=\{\dfrac{d\mathbb{Q}_{\G}}{d\mathbb{P}_{\G}}Z|Z\in
\mathcal{Q}^+_{\F^S}\}$ is equal to the set of all local
martingale measures $\mathcal{M}_\G$. (Actually for our proof
we only need that $M\subset \mathcal{M}_\G$.) This yields that
$$\Pi_{\G}(H)= \esssup_{Z\in \mathcal{Q}^+_{\F^S}}
\{\EQG{ZH}-c_\G(Z)\}\leq \esssup_{Z\in \mathcal{Q}^+_{\F^S}}
\EQG{ZH}= \esssup_{\bar{\mathbb{P}}_{\G}\in
\mathcal{M}_\G}\Ebarr{H},$$ where we have used in the first
equality that (ii) holds as $\Pi_\G$ is market-consistent. In
the inequality we applied that $c_{\G}\geq 0.$ In the last
equality we used that $M= \mathcal{M}_{\G}.$ Hence, indeed
$\Pi_{\G}(H)$ is smaller than the super-replication price of
$H.$ To show that $\Pi_{\G}(H)$ is greater than the
sub-replication price, note that as $c$ is a penalty function
we must have $\essinf_{Z\in \mathcal{Q}^+_{\F^S}} c_\G(Z)=0.$
Now clearly, $\mathcal{Q}^+_{\F^S}$ satisfies the concatenation
property from Lemma A.3. Thus, there exists a sequence $Z^n\in
\mathcal{Q}^+_{\F^S}$ such that a.s. $\lim_n c_{\G}(Z^n)=0.$
This entails that
\begin{align*}\Pi_{\G}(H)&= \esssup_{Z\in \mathcal{Q}^+_{\F^S}}
\{\EQG{ZH}-c_\G(Z)\} \\
&\geq\lim_n\{\EQG{Z^nH}-c_\G(Z^n)\}\\
&\geq
\liminf_n\EQG{Z^nH}+\liminf_n -c_\G(Z^n)
=\liminf_n\EQG{Z^nH}\geq \essinf_{\bar{\mathbb{P}}_{\G}\in
\mathcal{M}_\G}\Ebarr{H}.
\end{align*}
Hence, $\Pi_{\G}(H)$ is greater than the sub-replication price.
The proposition is proved.
{\unskip\nobreak\hfill$\Box$\par\addvspace{\medskipamount}}

Theorems \ref{main} and \ref{main2} may be seen as versions of
the Radon-Nikodyn theorem with a non-linear part $\Pi_{\F^S}$
and without assumptions like  monotonicity or continuity. We
will need the following Lemma. Its proof is straight-forward
doing an induction over $r.$
\begin{lemma}
\label{additive} Suppose that $\Pi_\G$ satisfies
market-consistency and the local property. Then for disjoint
sets $C_{1},\ldots,C_{r}\in\F^S$ and payoffs $H_1,\ldots,H_r\in
L^\infty(\F),$ we have
$$ \Pi_{\G}(H_1 I_{C_{1}}+\ldots+H_r I_{C_{r}})=\Pi_{\G}(H_1 I_{C_{1}})+\ldots+\Pi_{\G}(H_r I_{C_{r}}).$$
\end{lemma}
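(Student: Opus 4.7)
The plan is to proceed by induction on $r$, with the market local property providing the one-step split at each stage.

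For the base case $r=2$, apply the market local property with $A := C_1 \in \F^S$ to the payoff $H_1 I_{C_1} + H_2 I_{C_2} \in L^\infty(\F)$:
\[
\Pi_{\G}(H_1 I_{C_1} + H_2 I_{C_2}) = \Pi_{\G}\bigl(I_{C_1}(H_1 I_{C_1} + H_2 I_{C_2})\bigr) + \Pi_{\G}\bigl(I_{C_1^c}(H_1 I_{C_1} + H_2 I_{C_2})\bigr).
\]
Since $C_1 \cap C_2 = \emptyset$, the first term inside $\Pi_{\G}$ collapses to $H_1 I_{C_1}$ and the second to $H_2 I_{C_2}$, giving the claim for $r=2$.

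For the inductive step, assume the identity holds for $r$. Given disjoint $C_1,\ldots,C_{r+1} \in \F^S$, apply the local property with $A := C_1$ to the sum $\sum_{i=1}^{r+1} H_i I_{C_i}$. Disjointness of the $C_i$ yields
\[
\Pi_{\G}\Bigl(\sum_{i=1}^{r+1} H_i I_{C_i}\Bigr) = \Pi_{\G}(H_1 I_{C_1}) + \Pi_{\G}\Bigl(\sum_{i=2}^{r+1} H_i I_{C_i}\Bigr),
\]
and then the inductive hypothesis, applied to the second term (with the $r$ disjoint sets $C_2,\ldots,C_{r+1} \in \F^S$), completes the induction.

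The only technical subtlety is that the market local property was formally stated for $H \in L^\infty_+(\F)$, whereas the payoffs $H_i I_{C_i}$ here need not be non-negative. This is handled as follows: for arbitrary bounded $H$, pick a constant $c$ with $H + c \geq 0$; apply the local property to $H+c$, then subtract the financial payoffs $I_A c, I_{A^c} c \in L^\infty(\F^S)$ using market-consistency (noting $\EQG{I_A c} + \EQG{I_{A^c} c} = c$). This extends the local property to all of $L^\infty(\F)$ at essentially no cost and legitimizes the induction above. This is really the only step that could be called an obstacle, and it is purely bookkeeping; the combinatorial recursion itself is immediate.
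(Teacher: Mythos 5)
Your proof is correct and follows essentially the same route the paper intends: the paper gives no details beyond noting that the lemma follows by a straightforward induction over $r$ using the market local property, which is exactly your argument. Your additional bookkeeping step — extending the local property from $H\in L^\infty_+(\F)$ to all of $L^\infty(\F)$ by shifting with a constant $c$ and removing $I_Ac,\,I_{A^c}c\in L^\infty(\F^S)$ via market-consistency — is valid and in fact addresses a point the paper glosses over, since the lemma is later invoked for payoffs that need not be non-negative.
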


The next lemma will also be useful.
\begin{lemma}
\label{lemmalocal} A $\bar{\G}$-conditional evaluation
satisfies $\Pi_{\bar{\G}}(I_A H)=I_A\Pi_{\bar{\G}}( H)$ for
every $A\in\bar{\G}.$
\end{lemma}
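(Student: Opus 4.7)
The plan is to derive this as a direct consequence of the $\bar{\G}$-local property together with normalization. Both axioms are built into the definition of a $\bar{\G}$-conditional evaluation, so no additional hypotheses should be needed.

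First I would apply the $\bar{\G}$-local property to the pair $(H_1,H_2)=(H,0)$, which is legitimate since $0\in L^\infty(\F)$ and $A\in\bar{\G}$. This gives
\[
\Pi_{\bar{\G}}(I_A H + I_{A^c}\cdot 0) = I_A\,\Pi_{\bar{\G}}(H) + I_{A^c}\,\Pi_{\bar{\G}}(0).
\]
Next I would invoke normalization, $\Pi_{\bar{\G}}(0)=0$, to eliminate the second term on the right, and note that on the left $I_A H + I_{A^c}\cdot 0 = I_A H$. Combining these observations yields $\Pi_{\bar{\G}}(I_A H)=I_A\,\Pi_{\bar{\G}}(H)$, as claimed.

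There is really no obstacle here: the statement is a one-line corollary of the local property once one absorbs the $I_{A^c}$-branch into zero via normalization. I would simply present the two displayed equalities and conclude.
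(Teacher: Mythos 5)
Your proposal is correct and is exactly the paper's own argument: write $I_A H = I_A H + I_{A^c}\cdot 0$, apply the $\bar{\G}$-local property, and use normalization to kill the $I_{A^c}\Pi_{\bar{\G}}(0)$ term. Nothing is missing.
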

\begin{proof}
It is
\begin{align*}
\Pi_{\bar{\G}}(I_A H)=\Pi_{\bar{\G}}(I_A H+I_{A^c}0)=I_A\Pi_{\bar{\G}}( H)+I_{A^c}\Pi_{\bar{\G}}(0)=I_A\Pi_{\bar{\G}}( H),
\end{align*}
by normalization and the local property of $\Pi_{\bar{\G}}.$
\end{proof}
 For
the proof of Theorem \ref{main} we will also need the following
lemma:
\begin{lemma}
\label{concatenation} In the setting of Theorem \ref{main}, the
set $M'$ defined by (\ref{M}) has the concatenation property in
the sense that $Z_1,Z_2\in M'$ implies that for any $A\in \F^S$
we have that $Z_1I_A+Z_2I_{A^c}\in M'.$ In particular,
$I_AM'+I_{A^c}M':=\{Z_1I_A+Z_2I_{A^c}|Z_1\in M', Z_2\in
M'\}=M'.$
\end{lemma}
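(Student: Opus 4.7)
My plan is to reduce the claim to showing that the set $M$ appearing in (\ref{coherent}) itself has the $\F^S$-concatenation property, because $\frac{d\mathbb{Q}_{\G}}{d\mathbb{P}_{\G}}$ is $\F^S$-measurable and positive, so the identification $M' = (\frac{d\mathbb{Q}_{\G}}{d\mathbb{P}_{\G}})^{-1} M$ preserves concatenation by $\F^S$-measurable indicators. It therefore suffices to prove that for every $\xi_1, \xi_2 \in M$ and every $A \in \F^S$, the element $\hat\xi := \xi_1 I_A + \xi_2 I_{A^c}$ still lies in $M$.

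The main tool is the bipolar characterization of the representing set of a coherent cash-invariant functional, namely
\[
M = \Big\{\xi \in L^1(\F)\ \Big|\ \EG{\xi}=1 \text{ and } \EG{\xi H} \leq \Pi_{\G}(H) \text{ for every } H\in L^\infty(\F)\Big\},
\]
which follows from the uniqueness of $M$ stated in (\ref{coherent}). Given this, I only need to verify the two defining properties on $\hat\xi$. For the normalization, I would use that (as established in the proof of Proposition \ref{prop}) each $\xi_i\in M$ can be written as $\xi_i = \frac{d\mathbb{Q}_{\G}}{d\mathbb{P}_{\G}} Z_i$ with $Z_i \in \mathcal{Q}_{\F^S}$, so $\EGS{Z_i}=1$; since $A\in\F^S$, factoring $I_A$ out of $\EGS{\cdot}$ yields $\EGS{\hat\xi} = \frac{d\mathbb{Q}_{\G}}{d\mathbb{P}_{\G}}$, and tower gives $\EG{\hat\xi}=1$.

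For the support inequality I would split
\[
\EG{\hat\xi\, H} = \EG{\xi_1 (I_A H)} + \EG{\xi_2 (I_{A^c} H)} \leq \Pi_{\G}(I_A H) + \Pi_{\G}(I_{A^c} H),
\]
using $\xi_i \in M$ on each summand, and then invoke Lemma \ref{additive} (with $H_1=H_2=H$, $C_1=A$, $C_2=A^c$), whose proof rests on the market local property and market-consistency, to collapse the right-hand side to $\Pi_{\G}(H)$. Combined with the normalization, the bipolar characterization forces $\hat\xi \in M$, which is exactly the claimed concatenation property. The identity $I_A M' + I_{A^c} M' = M'$ is then immediate: the inclusion $\subseteq$ is just the concatenation we proved, while $\supseteq$ follows trivially from $Z = ZI_A + ZI_{A^c}$ for every $Z\in M'$.

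The only delicate point is justifying the bipolar identification of $M$ in the conditional setting; this is standard once one pairs $L^1(\F)$ with $L^\infty(\F)$ via $\EG{\cdot}$ and uses conditional cash-invariance to enforce $\EG{\xi}=1$. After that, the argument is a direct application of Lemma \ref{additive} together with the $\F^S$-measurability of $A$, so no further substantial obstacle arises.
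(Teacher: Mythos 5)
Your proposal is correct and follows essentially the same route as the paper: the paper also splits $\Pi_{\G}(H)=\Pi_{\G}(I_AH)+\Pi_{\G}(I_{A^c}H)$ via Lemma \ref{additive} and uses membership of $\xi_1,\xi_2$ in $M$, only phrasing the final step through the penalty function $\Pi^*_{\G}$ (showing it vanishes at $\frac{d\mathbb{Q}_{\G}}{d\mathbb{P}_{\G}}(Z_1I_A+Z_2I_{A^c})$), which is the same fact as your ``maximal representing set'' characterization of $M$.
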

\begin{proof}
For $Z_1,Z_2\in M'$ we have
\begin{align}
\label{pstarconc}
\Pi^*_{\G}\Big(\frac{d\mathbb{Q}_{\G}}{d\mathbb{P}_{\G}}&\big(Z_1I_A+Z_2I_{A^c}\big)\Big)\nonumber\\
&=\esssup_{H\in L^\infty(\F)} \Big\{\EG{\frac{d\mathbb{Q}_{\G}}{d\mathbb{P}_{\G}}(Z_1I_A+Z_2I_{A^c})H}-\Pi_{\G}(H)\Big\}\nonumber\\
&=\esssup_{H\in L^\infty(\F)} \Big\{\EG{\frac{d\mathbb{Q}_{\G}}{d\mathbb{P}_{\G}}Z_1(I_AH)}+\EG{\frac{d\mathbb{Q}_{\G}}{d\mathbb{P}_{\G}}Z_2(I_{A^c}H)}-\Pi_{\G}(HI_A+HI_{A^c})\Big\}
\nonumber\\
&=\esssup_{HI_A,HI_{A^c}\in L^\infty(\F)} \Big\{\EG{\frac{d\mathbb{Q}_{\G}}{d\mathbb{P}_{\G}}Z_1(I_AH)}-\Pi_{\G}(HI_A)
\nonumber\\&\hspace{3.5cm}
+\EG{\frac{d\mathbb{Q}_{\G}}{d\mathbb{P}_{\G}}Z_2(I_{A^c}H)}-\Pi_{\G}(HI_{A^c})\Big\}\nonumber\\
&=
\esssup_{HI_A\in L^\infty(\F)} \Big\{\EG{\frac{d\mathbb{Q}_{\G}}{d\mathbb{P}_{\G}}Z_1 (HI_A)}-\Pi_{\G}(HI_A)\Big\}
\nonumber\\&\hspace{0.5cm}+\esssup_{HI_{A^c}\in L^\infty(\F)} \Big\{\EG{\frac{d\mathbb{Q}_{\G}}{d\mathbb{P}_{\G}}Z_2(HI_{A^c})}-\Pi_{\G}(HI_{A^c})\Big\},
\end{align}
where the first equation holds by the definition of
 $\Pi^*.$ The third equation holds because of Lemma \ref{additive} for $r=2$.
(\ref{pstarconc}) yields that
\begin{align*}
\Pi^*_{\G}\Big(\frac{d\mathbb{Q}_{\G}}{d\mathbb{P}_{\G}}\big(Z_1I_A+Z_2I_{A^c}\big)\Big)
&\leq \esssup_{H\in L^\infty(\F)} \Big\{\EG{\frac{d\mathbb{Q}_{\G}}{d\mathbb{P}_{\G}}Z_1 H}-\Pi_{\G}(H)\Big\}\nonumber\\
&\hspace{0.5cm}+\esssup_{H\in L^\infty(\F)} \Big\{\EG{\frac{d\mathbb{Q}_{\G}}{d\mathbb{P}_{\G}}Z_2H}-\Pi_{\G}(H)\Big\}\nonumber\\
&=\Pi^*_{\G}(\frac{d\mathbb{Q}_{\G}}{d\mathbb{P}_{\G}}Z_1)+\Pi^*_{\G}(\frac{d\mathbb{Q}_{\G}}{d\mathbb{P}_{\G}}Z_2)=0+0=0,
\end{align*}
where in the last line we have used that
$\Pi^*_{\G}(\frac{d\mathbb{Q}_{\G}}{d\mathbb{P}_{\G}}Z_i)=0.$
The reason for this is that $\Pi^*$ is zero on $M$ and infinity
else. Therefore, the fact that $Z_i\in M'$ implies by (\ref{M})
that $\frac{d\mathbb{Q}_{\G}}{d\mathbb{P}_{\G}}Z_i\in \frac{d\mathbb{Q}_{\G}}{d\mathbb{P}_{\G}}M'=M$ for $i=1,2.$\\

Since $\Pi^*_{\G}$ is only takes the values zero and infinity
we must have that
$\Pi^*_{\G}(\frac{d\mathbb{Q}_{\G}}{d\mathbb{P}_{\G}}(Z_1I_A+Z_2I_{A^c}))=0.$
Thus, we can conclude that indeed
$\frac{d\mathbb{Q}_{\G}}{d\mathbb{P}_{\G}}\big(Z_1I_A+Z_2I_{A^c}\big)\in
M.$ Therefore,
$$Z_1I_A+Z_2I_{A^c}\in \Big(\frac{d\mathbb{Q}_{\G}}{d\mathbb{P}_{\G}}\Big)^{-1}M=M'.$$
\end{proof}
\textit{Proof of Theorem \ref{main}.} (ii)$\Rightarrow$(i): It
is
$$
\Pi_{\G}(H^S+H)=\EQG{\Pi_{\F^S}(H^S+H)}
=\EQG{H^S}+\EQG{\Pi_{\F^S}(H)}=\EQG{H^S}+\Pi_{\G}(H),
$$
where we have used $\F^S$-conditional cash invariance in the
second equation. This shows market-consistency.
Moreover,
\begin{align*}
\Pi_{\G}(I_AH_1+I_{A^c} H_2)&=\EQG{\Pi_{\F^S}(I_AH_1+I_{A^c} H_2)}\\
&=\EQG{I_A\Pi_{\F^S}(H_1)+I_{A^c}\Pi_{\F^S}(H_2)}\\
&=\EQG{\Pi_{\F^S}(I_AH_1)+\Pi_{\F^S}(I_{A^c} H_2)}
=\Pi_{\G}(I_AH_1)+\Pi_{\G}(I_{A^c}H_2),
\end{align*}
where we used the $\F^S$-local property in the second and Lemma
\ref{lemmalocal} in the third equation.

(i)$\Rightarrow$(ii): By Proposition \ref{prop}
(iii)$\Rightarrow$(ii) and positive homogeneity
market-consistency imply that
\begin{align}
\label{first}
\Pi_{\G}(H)&=\esssup_{Z\in M'} \EG{\frac{d\mathbb{Q}_{\G}}{d\mathbb{P}_{\G}}Z H}\nonumber\\
&=\esssup_{Z\in M'} \EG{\EGS{\frac{d\mathbb{Q}_{\G}}{d\mathbb{P}_{\G}}Z H}}=\esssup_{Z\in M'} \EG{\frac{d\mathbb{Q}_{\G}}{d\mathbb{P}_{\G}}\EGS{Z H}},
\end{align}
where we used in the third equation that
$\frac{d\mathbb{Q}_{\G}}{d\mathbb{P}_{\G}}$ is
$\F^S$-measurable. Define
$$\Pi_{\F^S}(H):=\esssup_{Z\in M'}\EGS{Z H}.$$
Clearly, $\Pi_{\F^S}$ is normal, $\F^S$-convex, $\F^S$-cash
invariant and $\F^S$-positively homogeneous. The $\F^S$-local
property is satisfied because for $A\in \F^S$ we have
\begin{align*}
\Pi_{\F^S}(I_AH_1+I_{A^c} H_2)&=\esssup_{Z\in M'}\EGS{ZI_A H_1+ZI_{A^c}H_2}\\
&=\esssup_{Z\in M'}\EGS{ZI_A H_1}+\EGS{ZI_{A^c}H_2}\\
&=\esssup_{Z_1\in M',Z_2\in M'}\EGS{Z_1I_A H_1}+\EGS{Z_2I_{A^c}H_2}\\
&=\esssup_{Z_1\in M'}\EGS{Z_1I_A H_1}+\esssup_{Z_2\in M'}\EGS{Z_2I_{A^c}H_2}
\\&=I_A\esssup_{Z_1\in M'}\EGS{Z_1 H_1}+I_{A^c}\esssup_{Z_2\in M'}\EGS{Z_2H_2}\\
&=I_A\Pi_{\F^S}(H_1)+I_{A^c}\Pi_{\F^S}(H_2),
\end{align*}
where we used in the third equation that Lemma
\ref{concatenation} implies that $M'=\{Z_1I_A+Z_2I_{A^c}|Z_1\in
M', Z_2\in M'\}.$ Hence, indeed $\Pi_{\F^S}$ is an
$\F^S$-conditional evaluation. Finally, let us prove that
$$\Pi_{\G}(H)=\EQG{\Pi_{\F^S}(H)}.$$
Notice that if we could show that \be \label{second}
\esssup_{Z\in M'}
\EG{\frac{d\mathbb{Q}_{\G}}{d\mathbb{P}_{\G}}\EGS{Z H}}
=\EG{\frac{d\mathbb{Q}_{\G}}{d\mathbb{P}_{\G}}\esssup_{Z\in
M'}\EGS{Z H}} \ee then we are done, since the left-hand side of
(\ref{second}) is equal to $\Pi_{\G}(H)$ by (\ref{first}),
while the righthand-side is equal to $\EQG{\Pi_{\F^S}(H)}$ by
the definition of $\Pi_{\F^S}.$ So let us show (\ref{second}).
Clearly,
$$\esssup_{Z\in M'} \EG{\frac{d\mathbb{Q}_{\G}}{d\mathbb{P}_{\G}}\EGS{Z H}}
\leq \EG{\frac{d\mathbb{Q}_{\G}}{d\mathbb{P}_{\G}}\esssup_{Z\in M'}\EGS{Z
H}}.$$ Let us prove `$\geq$'. It is well known, see also
Theorem \ref{Schied} and Lemma \ref{appendix}, that the
concatenation property implies that there exists a sequence
$Z_n\in M'$ with $\EGS{Z_1 H}\leq \EGS{Z_2 H}\leq\ldots$ such
that $\lim_n\EGS{Z_n H}=\esssup_{Z\in M'}\EGS{Z H}.$ Therefore,
by the monotone convergence theorem
\begin{align*}
\EG{\frac{d\mathbb{Q}_{\G}}{d\mathbb{P}_{\G}}\esssup_{Z\in M'}\EGS{Z H}}&=
\lim_n\EG{\frac{d\mathbb{Q}_{\G}}{d\mathbb{P}_{\G}}\EGS{Z_n H}}\\
&\leq \esssup_{Z\in M'} \EG{\frac{d\mathbb{Q}_{\G}}{d\mathbb{P}_{\G}}\EGS{Z H}}.
\end{align*}
This shows (\ref{second}). This proves Theorem \ref{main}.
{\unskip\nobreak\hfill$\Box$\par\addvspace{\medskipamount}}

For the proof of Theorem \ref{main2} we will need the following
Corollary of Lemma \ref{additive}:
\begin{corollary}
\label{infiniteadditive} Suppose that $\Pi_\G$ is continuous,
market-consistent and satisfies the local property. Then for
disjoint sets $C_{1},C_2,\ldots\in\F^S$ and a payoff $H\in
L^\infty(\F)$ we have
$$ \Pi_{\G}(H \sum_{i=1}^\infty I_{C_{i}})=\sum_{i=1}^\infty\Pi_{\G}(H I_{C_i}).$$
In particular, if $\Pi_\G$ is additionally assumed to be
monotone (or $p$-norm bounded) then for every $H\in
L^\infty(\F)$ the mapping $C\to \Pi_{\G}(H I_{C})$ is a
real-valued, (signed) measure on $\F^S$.
\end{corollary}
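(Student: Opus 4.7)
The plan is to bootstrap from the finite additivity in Lemma \ref{additive} to countable additivity using the continuity axiom. Specifically, I would first apply Lemma \ref{additive} with $H_1=\ldots=H_r=H$ to obtain, for every finite $r$,
$$\Pi_{\G}\Big(H\sum_{i=1}^r I_{C_i}\Big)=\sum_{i=1}^r \Pi_{\G}(H I_{C_i}).$$
Then I would pass $r\to\infty$. Because the $C_i$ are pairwise disjoint, $\sum_{i=1}^r I_{C_i}(\omega)\uparrow I_{\cup_i C_i}(\omega)\in\{0,1\}$ for every $\omega$, so the sequence $H\sum_{i=1}^r I_{C_i}$ is uniformly bounded by $\|H\|_\infty$ and converges a.s.\ to $H\sum_{i=1}^\infty I_{C_i}$. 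Continuity of $\Pi_\G$ with respect to a.s.\ bounded convergence then yields
$$\sum_{i=1}^\infty \Pi_{\G}(H I_{C_i})=\lim_{r\to\infty}\Pi_{\G}\Big(H\sum_{i=1}^r I_{C_i}\Big)=\Pi_{\G}\Big(H\sum_{i=1}^\infty I_{C_i}\Big),$$
which is the claimed countable additivity.

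For the second assertion, I would verify the axioms of a signed measure. Normalization gives $\Pi_\G(H I_\emptyset)=\Pi_\G(0)=0$, and countable additivity was just proved. It remains to show that the values are finite. Under monotonicity, the bound $-\|H\|_\infty\leq H I_C\leq \|H\|_\infty$ combined with $\G$-cash invariance yields $|\Pi_\G(H I_C)|\leq\|H\|_\infty$. Under $p$-norm boundedness, the axiom directly gives $\Pi_\G(H I_C)\leq \lambda \int |H|^p\,d\bar{\mathbb{P}}_\G\leq \lambda \|H\|_\infty^p$; for the lower bound, I would use $\G$-convexity with $\lambda=\tfrac12$ applied to $H I_C$ and $-H I_C$ together with normalization to obtain $0=\Pi_\G(0)\leq \tfrac12\Pi_\G(H I_C)+\tfrac12\Pi_\G(-H I_C)$, so $\Pi_\G(H I_C)\geq -\Pi_\G(-H I_C)\geq -\lambda\|H\|_\infty^p$.

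There is no real obstacle here: the proof is essentially a one-line consequence of Lemma \ref{additive} and continuity. The only subtlety worth flagging is that the continuity axiom is formulated for a.s.\ convergence of \emph{bounded} sequences, which is why I emphasize that the partial sums are dominated by $\|H\|_\infty$ uniformly in $r$ thanks to the disjointness of the $C_i$. Thus the entire argument reduces to finite additivity plus a bounded-convergence step, followed by standard one-line verifications of normalization and finiteness of the measure's range.
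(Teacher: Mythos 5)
Your proof is correct and follows exactly the route the paper intends: the corollary is stated as an immediate consequence of Lemma \ref{additive} (finite additivity with $H_1=\cdots=H_r=H$) combined with the continuity axiom applied to the uniformly bounded partial sums, and your finiteness checks for the measure property (via monotonicity plus cash invariance, or via $p$-norm boundedness plus convexity at $\pm H I_C$) are the standard verifications the paper leaves implicit.
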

\textit{Proof of Theorem \ref{main2}.} (ii)$\Rightarrow$(i):
Continuity, convexity, and monotonicity in the case that
$\Pi_{\F^S}$ is monotone, are straight-forward. The other
properties are seen analogously as in the proof of Theorem
\ref{main}.

(i)$\Rightarrow$(ii): By assumption
$\F^S=\sigma(A_1,A_2,\ldots)$ and we will assume without loss
of generality that $A_i\ne A_j$ if $i\ne j.$ For
$n\in\mathbb{N}$ we define the finite filtration $\F^n$ as the
smallest $\sigma$-algebra containing the the events
$A_1,\ldots,A_n.$ Now let us define the partitions
corresponding to $\F^n$ recursively in a standard way. For
$n=1$, $\F^1$ is generated by the partition given by
$B^1_1:=A_1$, $B^1_2:=A^c_1.$ Moreover, $\F^{n+1}$ is generated
by the partition \beas  B^{n+1}_i:=B^{n}_i\cap A_{n+1} \mbox{
and } B^{n+1}_{2^n+i}:=B^{n}_i\cap
A^c_{n+1}\,\,\,\,\,i=1,\ldots,2^n. \eeas
Of course, $\F^1\subset \F^2\subset \ldots \subset \F^S$.

Set $q^n_k=\mathbb{Q}_\G(B^n_k).$ Note that $q^n_k$ are
$\G$-measurable random variables summing up to one for a.s. all
fixed $\omega.$ Define \be \label{defcond} \Pi_{\F^n}(H):=
\sum_{k=1}^{2^n}\frac{I_{B^n_k}}{q^n_k}\Pi_{\G}(HI_{B^n_k}),
\ee where we we set $0/0=0$. Note that if
$q^n_k=\mathbb{Q}_\G[B^n_k]=0$ on a non-zero set $C\in \G$,
then $\mathbb{P}_\G[B^n_k]=0$ on $C$ as well (as
$\mathbb{Q}_{\G}$ is equivalent to $\mathbb{P}_{\G}$)
 and therefore, $\Pi_\G(HI_{B^n_k})=0$ on $C$, too. In particular, $\Pi_{\F^n}$ is well defined.

Next, every set $A\in \F^n$ can be written as
$A=B^n_{k_1}\cup\ldots\cup B^n_{k_r},$ for a
$r\in\{1,\ldots,2^n\}$ and $1\leq k_1<k_2<\ldots<k_r\leq 2^n.$
As $\Pi_\G(0)=0$, it is straight-forward to check using
Definition (\ref{defcond}) that
\begin{align}
\label{local}
\Pi_{\F^n}(I_A H)
=I_A\Pi_{\F^n}(H).
\end{align}
Furthermore, by Lemma \ref{additive} for every $H\in
L^\infty(\F)$
\begin{align*}
\EQG{\Pi_{\F^n}( H)}&=\sum_{k=1}^{2^n}\frac{\EQG{I_{B^n_{k}}}}{q^n_{k}}\Pi_{\G}( HI_{B^n_{k}})
=\Pi_{\G}(H).
\end{align*}
We will need the following lemma which is a version of
Proposition A.12 in \citet{follmer2004stochastic}.
\begin{lemma}
\label{A.12} Suppose that $\hat{\mathbb{P}}$ and
$\bar{\mathbb{P}}$ share the same zero sets, and that
$\F_0\subset \F$. Then for any bounded $\F$-measurable $H$
$$\Ehatf{H}=\frac{1}{\Ebarf{d\hat{\mathbb{P}}/d\bar{\mathbb{P}}}}\Ebarf{\frac{d\hat{\mathbb{P}}}{d\bar{\mathbb{P}}} H}\quad \bar{\mathbb{P}}\mbox{ and }
\hat{\mathbb{P}}\mbox{ a.s.}$$
\end{lemma}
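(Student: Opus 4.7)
The plan is to verify directly the defining property of the conditional expectation. Write $Z := d\hat{\mathbb{P}}/d\bar{\mathbb{P}}$, which is strictly positive $\bar{\mathbb{P}}$-a.s.\ (and hence $\hat{\mathbb{P}}$-a.s.) since the two measures share the same zero sets. I would show that the candidate
$$X := \frac{1}{\Ebarf{Z}}\Ebarf{ZH}$$
is the unique $\F_0$-measurable random variable such that
$$\int_A X\,d\hat{\mathbb{P}} = \int_A H\,d\hat{\mathbb{P}} \quad\text{for every } A\in\F_0,$$
which is exactly the characterization of $\Ehatf{H}$.

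First I would check the technical preliminaries: $X$ is $\F_0$-measurable by construction, and the denominator $\Ebarf{Z}$ is strictly positive $\bar{\mathbb{P}}$-a.s., so $X$ is well defined on a set of full measure under either $\bar{\mathbb{P}}$ or $\hat{\mathbb{P}}$. Second, for arbitrary $A\in\F_0$, I would compute the integral against $\hat{\mathbb{P}}$ by passing to $\bar{\mathbb{P}}$:
$$\int_A X\,d\hat{\mathbb{P}}
= \int_A X Z\,d\bar{\mathbb{P}}
= \int_A \Ebarf{XZ}\,d\bar{\mathbb{P}}
= \int_A X\,\Ebarf{Z}\,d\bar{\mathbb{P}},$$
where the second equality uses that $I_A$ is $\F_0$-measurable (so the outer integral is unchanged by inserting the conditional expectation) and the third uses that $X$ is $\F_0$-measurable. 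Substituting the definition of $X$ and cancelling $\Ebarf{Z}$, this equals $\int_A \Ebarf{ZH}\,d\bar{\mathbb{P}} = \int_A ZH\,d\bar{\mathbb{P}} = \int_A H\,d\hat{\mathbb{P}}$, again by tower and the definition of $Z$.

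Since $A\in\F_0$ was arbitrary and $X$ is $\F_0$-measurable, uniqueness of conditional expectation gives $\Ehatf{H}=X$ $\hat{\mathbb{P}}$-a.s.; equivalence of the measures then upgrades this to $\bar{\mathbb{P}}$-a.s.\ as well. There is no real obstacle here: the argument is the standard Bayes formula for conditional expectations, and the only point requiring care is confirming that $\Ebarf{Z}>0$ a.s., which is immediate from the equivalence assumption.
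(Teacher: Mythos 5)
Your proof is correct: the paper itself gives no proof of this lemma but simply cites Proposition A.12 of F\"ollmer and Schied (2004), and your verification of the defining property of conditional expectation (the generalized Bayes formula), together with the observation that $\Ebarf{d\hat{\mathbb{P}}/d\bar{\mathbb{P}}}>0$ a.s.\ by equivalence of the measures, is exactly the standard argument behind that cited result. Nothing is missing; the boundedness of $H$ ensures all conditional expectations are well defined, and equivalence of $\hat{\mathbb{P}}$ and $\bar{\mathbb{P}}$ justifies the a.s.\ statement under both measures.
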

Next we will show the following Lemma:
\begin{lemma}
Under our assumptions, for every $H\in L^\infty(\F)$ the
process $M_n=\EQG{\Pi_{\F^n}(H)}$ is a uniformly integrable
martingale.
\end{lemma}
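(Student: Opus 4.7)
My working reading of the lemma, consistent with its intended use (invoking martingale convergence to define $\Pi_{\F^S}(H)$ as an a.s.\ limit), is that $M_n := \Pi_{\F^n}(H)$ forms a uniformly integrable martingale on the increasing filtration $(\F^n \vee \G)_{n \geq 1}$ under $\mathbb{Q}_\G$. I carry out the proof in two steps.

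\textbf{Step 1 (Martingale identity).} By construction $\F^{n+1}$ refines $\F^n$ through the binary split $B^n_i = B^{n+1}_i \sqcup B^{n+1}_{2^n+i}$, so that $q^n_i = q^{n+1}_i + q^{n+1}_{2^n+i}$. Under $\mathbb{Q}_\G$ I have
$$\mathbb{E}^{\mathbb{Q}}[I_{B^{n+1}_j}\mid \F^n \vee \G] = I_{B^n_i}\,\frac{q^{n+1}_j}{q^n_i}\quad\text{whenever }B^{n+1}_j \subset B^n_i,$$
with convention $0/0=0$. Substituting this into the definition~\eqref{defcond} of $\Pi_{\F^{n+1}}(H)$ and invoking Lemma~\ref{additive} on the $\F^S$-disjoint siblings,
$$\Pi_\G(HI_{B^{n+1}_i}) + \Pi_\G(HI_{B^{n+1}_{2^n+i}}) = \Pi_\G(HI_{B^n_i}),$$
the conditional expectation collapses to $\sum_i \frac{I_{B^n_i}}{q^n_i}\Pi_\G(HI_{B^n_i}) = \Pi_{\F^n}(H)$, proving the martingale property.

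\textbf{Step 2 (Uniform integrability).} In the monotone case, $-\|H\|_\infty I_{B^n_k} \leq HI_{B^n_k} \leq \|H\|_\infty I_{B^n_k}$ combined with monotonicity and market-consistency applied to the $\F^S$-measurable indicator (which gives $\Pi_\G(c I_{B^n_k}) = c q^n_k$ for constants $c$) yields $|\Pi_\G(HI_{B^n_k})| \leq \|H\|_\infty q^n_k$. Plugging into~\eqref{defcond} gives the pointwise bound $|M_n|\leq\|H\|_\infty$, so $(M_n)$ is bounded in $L^\infty$ and a fortiori UI. In the $p$-norm bounded case (without monotonicity), $\G$-convexity and normalization imply $-\Pi_\G(-X)\leq \Pi_\G(X)$, so the $p$-norm bound is automatically two-sided: $|\Pi_\G(HI_{B^n_k})|\leq \lambda \int |H|^p I_{B^n_k}\,d\bar{\mathbb{P}}_\G$. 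Setting $\rho := d\bar{\mathbb{P}}_\G/d\mathbb{Q}_\G$ (well defined since $\mathbb{Q}_\G\sim\mathbb{P}_\G$ and $\bar{\mathbb{P}}$ has the same zero sets as $\mathbb{P}$) and applying Bayes's formula, the dominating expression reassembles into
$$|M_n| \leq \lambda\, \mathbb{E}^{\mathbb{Q}}[\rho |H|^p\mid \F^n \vee \G].$$
The right-hand side is the $\mathbb{Q}_\G$-conditional expectation of the fixed $\mathbb{Q}_\G$-integrable random variable $\lambda\rho\|H\|_\infty^p$, and conditional expectations of a fixed integrable random variable form a uniformly integrable family; hence $(M_n)$ is UI.

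\textbf{Main obstacle.} Step 1 is essentially bookkeeping once one writes out $\mathbb{E}^{\mathbb{Q}}[I_{B^{n+1}_j}\mid \F^n\vee\G]$. The delicate point is Step 2 in the $p$-norm bounded case, where the a priori estimate naturally lives under $\bar{\mathbb{P}}$ rather than $\mathbb{Q}_\G$; converting through the Radon--Nikodym derivative $\rho$ and then invoking the UI of conditional expectations of a fixed integrable random variable is the decisive manoeuvre.
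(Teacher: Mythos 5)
Your proof is correct and follows essentially the same route as the paper: the paper also reads $M_n=\Pi_{\F^n}(H)$, dismisses the martingale identity as ``standard arguments'' (exactly the sibling-splitting computation you spell out via Lemma \ref{additive}), bounds $|\Pi_{\F^n}(H)|\leq \|H\|_\infty$ in the monotone case via market-consistency, and in the $p$-norm bounded case dominates $|\Pi_{\F^n}(H)|$ by $\l\|H\|_\infty^p\,\EQG{\frac{d\bar{\mathbb{P}}_\G}{d\mathbb{Q}_\G}\mid\F^n}$, invoking uniform integrability of conditional expectations of a fixed integrable variable. Your only additions are cosmetic: you keep $|H|^p$ inside the conditional expectation and you justify the two-sided bound from the one-sided $p$-norm axiom via convexity and normalization, a detail the paper leaves implicit.
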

\begin{proof}
It may be seen from standard arguments that $M_n=\Pi_{\F^n}(H)$
is a martingale. Let us see that $M_n$ is uniformly integrable
under the measure $\mathbb{Q}_\G$. First of all note that in
the case that $\Pi_\G$ is monotone we have that
$$\Pi_{\F^n}(H) \leq
\sum_{k=1}^{2^n}\frac{I_{B^n_k}}{q^n_k}\Pi_{\G}(||H||_\infty I_{B^n_k})
=\sum_{k=1}^{2^n}\frac{I_{B^n_k}}{q^n_k}||H||_\infty q^n_k= ||H||_\infty,
$$
where we used market-consistency in the first equation.
Similarly, it is seen that $\Pi_{\F^n}(H)\geq -||H||_\infty.$
Hence, if $\Pi_{\F^n}$ is monotone then for fixed $H,$
$\Pi_{\F^n}(H)$ is bounded uniformly. In particular, $M_n$ is
uniformly integrable.

In the case that $\Pi_\G$ is $p$-norm bounded notice that,
\begin{align*}
\big|\Pi_{\F^n}(H) \big|&\leq
\l\Big|\sum_{k=1}^{2^n}\frac{I_{B^n_k}}{q^n_k}\int (|H|+|H|^p) I_{B^n_k}d\bar{\mathbb{P}}_\G\Big|\\
&
\leq \l(||H||_\infty+ ||H||^{p}_\infty) \Big|\sum_{k=1}^{2^n}I_{B^n_k}\frac{\bar{\mathbb{P}}_\G(B^n_k)}{\mathbb{Q}_\G(B^n_k)} \Big|=
\l(||H||_\infty+ ||H||^{p}_\infty)\EQG{\frac{d\bar{\mathbb{P}}_\G}{d\mathbb{Q}_\G}|\F^n}.
\end{align*}
Since the last term is uniformly integrable, $\Pi_{\F^n}(H)$ is
uniformly integrable as well.
\end{proof}
Hence, if $\Pi_\G$ is monotone or $p$-norm bounded we may
conclude by the martingale convergence theorem that
$M_n=\Pi_{\F^n}(H)$ converges a.s. and in $L^1(\mathbb{Q}_\G)$
to a random variable $M_\infty.$ Set
$$\Pi_{\F^S}(H):=M_\infty=\lim_n\Pi_{\F^n}(H).$$
Now $\Pi_{\G}(H)=\EQG{\Pi_{\F^S}(H)}$ will follow from the
following lemma:
\begin{lemma}
\label{lemmaone} For all $A\in \F^S$ and $H\in L^\infty(\F),$
$\Pi_{\F^S}(H)$ satisfies the \emph{characteristic equation}:
\be \label{characteristic} \Pi_\G(I_A H)=\EQG{I_A
\Pi_{\F^S}(H)}. \ee In particular,
$\Pi_\G(H)=\EQG{\Pi_{\F^S}(H)}.$ Furthermore, for every $H\in
L^\infty(\F^S)$, $\Pi_{\F^S}(H)$ is the unique a.s.
$\mathbb{Q}_\G$-integrable random variable satisfying
(\ref{characteristic}).
\end{lemma}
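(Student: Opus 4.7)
The plan is to first establish the characteristic equation \eqref{characteristic} on the generating algebra $\bigcup_n\F^n$, using the local property \eqref{local} of $\Pi_{\F^n}$ together with the identity $\EQG{\Pi_{\F^n}(\cdot)}=\Pi_\G(\cdot)$ already derived, and then to extend it to all of $\F^S$ by a $\pi$--$\lambda$ argument. The uniqueness claim will follow from a Radon--Nikodym-type observation once existence is in hand.

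For the first step, I would fix $H\in L^\infty(\F)$, $n_0\in\mathbb{N}$ and $A\in\F^{n_0}$. For every $n\geq n_0$ we have $A\in\F^n$, so \eqref{local} gives $\Pi_{\F^n}(I_AH)=I_A\Pi_{\F^n}(H)$, and taking $\mathbb{Q}_\G$-expectations and using $\EQG{\Pi_{\F^n}(\cdot)}=\Pi_\G(\cdot)$ yields
$$\EQG{I_A\Pi_{\F^n}(H)}=\EQG{\Pi_{\F^n}(I_AH)}=\Pi_\G(I_AH).$$
Letting $n\to\infty$ and invoking the $L^1(\mathbb{Q}_\G)$ convergence $\Pi_{\F^n}(H)\to\Pi_{\F^S}(H)$ established in the preceding lemma, we obtain \eqref{characteristic} for every $A\in\bigcup_n\F^n$.

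For the extension step, I would consider $\mathcal{D}:=\{A\in\F^S\mid\Pi_\G(I_AH)=\EQG{I_A\Pi_{\F^S}(H)}\}$ and verify that it is a $\lambda$-system containing the $\pi$-system $\bigcup_n\F^n$. Closure under proper differences follows on the left from Lemma \ref{additive} (whereby $\Pi_\G(I_{B\setminus A}H)=\Pi_\G(I_BH)-\Pi_\G(I_AH)$ for $A\subset B$ in $\F^S$) and on the right from linearity of $\EQG{\cdot}$. Closure under countable increasing unions uses dominated convergence on the right and Corollary \ref{infiniteadditive} on the left: writing an increasing sequence $A_k\uparrow A$ as $\bigsqcup_k(A_k\setminus A_{k-1})$, the Corollary delivers the required $\sigma$-additivity of $C\mapsto\Pi_\G(I_CH)$. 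Dynkin's $\pi$--$\lambda$ theorem then gives $\mathcal{D}=\F^S$, which yields \eqref{characteristic} and, as the special case $A=\Omega$, the identity $\Pi_\G(H)=\EQG{\Pi_{\F^S}(H)}$.

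For uniqueness, I would take $\tilde\Pi\in L^1(\mathbb{Q}_\G)$ that is $\F^S$-measurable and satisfies $\Pi_\G(I_AH)=\EQG{I_A\tilde\Pi}$ for every $A\in\F^S$. Then $\EQG{I_A(\Pi_{\F^S}(H)-\tilde\Pi)}=0$ for all $A\in\F^S$; since the integrand is itself $\F^S$-measurable, choosing $A=\{\Pi_{\F^S}(H)>\tilde\Pi\}\in\F^S$ and its complement forces $\Pi_{\F^S}(H)=\tilde\Pi$ $\mathbb{Q}_\G$-a.s. The principal obstacle in the whole argument is the closure of $\mathcal{D}$ under increasing unions: it relies on $C\mapsto\Pi_\G(I_CH)$ being an honest (signed) measure on $\F^S$, which is precisely where monotonicity or $p$-norm boundedness, via Corollary \ref{infiniteadditive}, is indispensable.
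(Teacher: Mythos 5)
Your proposal is correct and follows essentially the same route as the paper: the characteristic equation is first verified on $\bigcup_n\F^n$ via the local property of $\Pi_{\F^n}$ and the $L^1(\mathbb{Q}_\G)$ convergence, and then extended to $\F^S$ because both sides are (signed) measures agreeing on a generating $\pi$-system — your explicit Dynkin $\pi$--$\lambda$ argument is just the spelled-out form of that extension, and your uniqueness argument is the "standard argument" the paper alludes to.
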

\begin{proof}
Notice that for every $A\in \F^{n_0}\subset
\F^{n_0+1}\subset\ldots $ by (\ref{local})
\begin{align}
\label{fnsets}
\Pi_{\G}(I_A H)=\lim_n \EQG{\Pi_{\F^n}(I_A H)}=
\lim_n \EQG{I_A\Pi_{\F^n}(H)}
=\EQG{I_A \Pi_{\F^S}(H)},
\end{align}
where we have used in the third equality that $\Pi_{\F^n}(H)$
converges to $\Pi_{\F^S}(H)$ in $L^1(\mathbb{Q}_\G).$ Now by
Corollary \ref{infiniteadditive} the left hand-side and the
right hand-side are signed measures (measuring sets $A\in
\F^S$). As by (\ref{fnsets}) they both agree on $\bigcup_n
\F^n,$ which is closed under intersection and generates $\F^S$,
they must agree on the entire filtration $\F^S.$ Uniqueness
follows from equation (\ref{characteristic}) using standard
arguments.
\end{proof}

Now all what is left to prove is the following lemma:
\begin{lemma}
\label{lemmatwo} $\Pi_{\F^S}$ is a continuous,
$\F^S$-conditional evaluation which is monotone if $\Pi_\G$ is
monotone.
\end{lemma}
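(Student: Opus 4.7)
The plan is to verify each property of $\Pi_{\F^S}$ (normalization, $\F^S$-cash invariance, $\F^S$-convexity, $\F^S$-local property, Fatou, monotonicity under the corresponding hypothesis, and continuity) by using the characteristic equation $\Pi_\G(I_A H) = \EQG{I_A \Pi_{\F^S}(H)}$ for all $A \in \F^S$ and $H \in L^\infty(\F)$ from Lemma \ref{lemmaone}. The key tool is the uniqueness principle already exploited in the proof of that lemma: since $\mathbb{Q}_\G$ and $\mathbb{P}_\G$ are equivalent, an $\F^S$-measurable $X \in L^1(\mathbb{Q}_\G)$ is non-negative a.s.\ iff $\EQG{I_A X} \geq 0$ for every $A \in \F^S$ (test with $A = \{X < 0\} \in \F^S$). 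Every $\F^S$-measurable identity or inequality for $\Pi_{\F^S}$ is thereby reduced to a statement about $\EQG{I_A \cdot}$, and via the characteristic equation to a known property of $\Pi_\G$.

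The routine axioms then fall out uniformly. Normalization passes to the a.s.\ limit since $\Pi_{\F^n}(0) = 0$. For $\F^S$-cash invariance with $m \in L^\infty(\F^S)$ and $A \in \F^S$, market-consistency of $\Pi_\G$ (applied to $I_A m \in L^\infty(\F^S)$) combined with the characteristic equation for $H$ gives
\[
\Pi_\G(I_A(H+m)) = \EQG{I_A m} + \Pi_\G(I_A H) = \EQG{I_A(m + \Pi_{\F^S}(H))},
\]
so $\Pi_{\F^S}(H+m) = \Pi_{\F^S}(H)+m$ by uniqueness. The $\F^S$-local property follows from Lemma \ref{additive} applied with $C_1 = B \cap A$, $C_2 = B \cap A^c$ (both in $\F^S$) together with the characteristic equation. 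Monotonicity of $\Pi_{\F^S}$ under the hypothesis that $\Pi_\G$ is monotone is immediate: $H_1 \leq H_2$ gives $\Pi_\G(I_A H_1) \leq \Pi_\G(I_A H_2)$ for every $A \in \F^S$.

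Convexity is more delicate, because a general $\lambda \in L^\infty(\F^S)$ need not be $\G$-measurable, so $\G$-convexity of $\Pi_\G$ cannot be applied directly. I first establish the inequality for simple $\F^S$-measurable $\phi = \sum_i c_i I_{C_i}$ with constant coefficients $c_i \in [0,1]$ and disjoint $C_i \in \F^S$: Lemma \ref{additive} rewrites $\Pi_\G(I_A(\phi H_1 + (1-\phi)H_2))$ as $\sum_i \Pi_\G(I_{A \cap C_i}(c_i H_1 + (1-c_i)H_2))$, to which $\G$-convexity applies atom by atom (the $c_i$ being constants); inserting the characteristic equation into each summand and regrouping gives, after invoking uniqueness, the required inequality $\Pi_{\F^S}(\phi H_1 + (1-\phi) H_2) \leq \phi \Pi_{\F^S}(H_1) + (1-\phi)\Pi_{\F^S}(H_2)$ a.s. For arbitrary $\lambda \in L^\infty(\F^S)$ with $0 \leq \lambda \leq 1$ I then approximate by simple $\phi^{(n)} \to \lambda$ a.s., uniformly bounded, and pass to the limit using the continuity of $\Pi_{\F^S}$ established next.

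The main obstacle is continuity. Fix $H_n \to H$ a.s.\ with $(H_n)$ uniformly bounded. In the monotone case I would use the bracketing sequences $\underline{H}_n := \essinf_{k \geq n} H_k$ and $\bar{H}_n := \esssup_{k \geq n} H_k$, both uniformly bounded with $\underline{H}_n \uparrow H$ and $\bar{H}_n \downarrow H$ a.s. Monotonicity of $\Pi_{\F^S}$ gives the sandwich $\Pi_{\F^S}(\underline{H}_n) \leq \Pi_{\F^S}(H_n) \leq \Pi_{\F^S}(\bar{H}_n)$, and the outer sequences are monotone and uniformly bounded, hence converge a.s.\ to some $\underline{L}, \bar{L}$. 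Applying $\EQG{I_A \cdot}$ for $A \in \F^S$, then combining monotone convergence on the expectation side with the continuity of $\Pi_\G$ together with the characteristic equation on the other (using $I_A \bar{H}_n \to I_A H$ bounded), yields $\EQG{I_A \bar{L}} = \lim_n \Pi_\G(I_A \bar{H}_n) = \Pi_\G(I_A H) = \EQG{I_A \Pi_{\F^S}(H)}$; uniqueness forces $\bar{L} = \Pi_{\F^S}(H)$ a.s., likewise $\underline{L} = \Pi_{\F^S}(H)$, and the sandwich closes. The $p$-norm bounded case without monotonicity is the real technical difficulty since no sandwich is available; here I plan to exploit convexity, cash-invariance and the $p$-norm bound on $\Pi_\G$ to obtain a local modulus-of-continuity estimate on $L^\infty$-balls in terms of the $L^p(\bar{\mathbb{P}}_\G)$-norm, transfer it via the characteristic equation to $L^1(\mathbb{Q}_\G)$-convergence of $\Pi_{\F^S}(H_n)$, and upgrade to a.s.\ convergence through a subsequence argument controlled by the uniform dominating bound $|\Pi_{\F^n}(H)| \leq \l \|H\|_\infty^p \, \EQG{d\bar{\mathbb{P}}_\G/d\mathbb{Q}_\G \mid \F^n}$ that was already used above and which is inherited in the limit. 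Continuity then subsumes the Fatou axiom, completing the verification.
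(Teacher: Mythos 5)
Most of your verification is sound and runs close to the paper's own route: normalization, monotonicity, the $\F^S$-local property, $\F^S$-cash invariance and convexity for simple weights all follow from the characteristic equation of Lemma \ref{lemmaone} plus the testing/uniqueness device, and your bracketing (sandwich) argument for continuity in the monotone case is precisely the ``standard argument'' the paper alludes to. The genuine gap is your treatment of continuity in the $p$-norm bounded, non-monotone case, which you yourself only sketch as a plan. Transferring a $\G$-conditional modulus of continuity for $\Pi_\G$ through the characteristic equation can at best give $\EQG{|\Pi_{\F^S}(H_n)-\Pi_{\F^S}(H)|}\to 0$, i.e.\ (conditional) $L^1(\mathbb{Q}_\G)$-convergence; a subsequence argument then upgrades this only to convergence in probability, never to a.s.\ convergence of the whole sequence, which is what the continuity axiom demands (think of typewriter-type sequences). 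The dominating bound $|\Pi_{\F^S}(H)|\leq \l\,\|H\|_\infty^p\,\EQG{\frac{d\bar{\mathbb{P}}_\G}{d\mathbb{Q}_\G}\Big|\F^S}$ controls magnitude, not oscillation, so it cannot rescue this step. Nor can you localize your $\G$-conditional estimate over sets $A\in\F^S$ to get an $\F^S$-conditional one: a modulus expressed through the norm $\big(\int I_A|H_1-H_2|^p d\bar{\mathbb{P}}_\G\big)^{1/p}$ is not additive in $A$, so the signed-measure/uniqueness machinery does not convert it into a pointwise statement about $\Pi_{\F^S}$.

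This is exactly where the paper works harder: using its Lemma \ref{A.12} it converts the $p$-norm bound on $\Pi_{\F^n}$ into the \emph{$\F^S$-conditional} estimate $|\Pi_{\F^S}(H)|\leq \l\,\EQG{\frac{d\bar{\mathbb{P}}_\G}{d\mathbb{Q}_\G}\Big|\F^S}\int |H|^p d\bar{\mathbb{P}}_{\F^S}$, extends $\Pi_{\F^S}$ $\omega$-wise to $L^p(\bar{\mathbb{P}}_{\F^S})$, and invokes the convex-analysis fact (Theorem 2.2.9 in \citet{zalinescu2002convex}) that a real-valued convex functional bounded near the origin is continuous, which yields an $\omega$-wise modulus with respect to the $\F^S$-conditional $L^p$-norm; a.s.\ bounded convergence $H_n\to H$ then forces $\int|H_n-H|^p d\bar{\mathbb{P}}_{\F^S}\to 0$ a.s.\ and hence a.s.\ convergence of $\Pi_{\F^S}(H_n)$ directly, with no passage through integrated quantities. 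To complete your proof you would have to reproduce an estimate of this conditional-on-$\F^S$ type; note also that since your $\F^S$-convexity for general $\lambda\in L^\infty(\F^S)$ leans on continuity, the gap propagates to that step in the $p$-norm bounded case.
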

\begin{proof}
Clearly  for every $n$ $\Pi_{\F^n}$ is $\F^n$- and hence also
$\F^S$-measurable. This entails that its limit, $\Pi_{\F^S},$
is $\F^S$-measurable. Furthermore, by construction,
$\Pi_{\F^S}$ is normalized as $\Pi_{\F^n}$ are. If $\Pi_{\G}$
is monotone then the $\Pi_{\F^n}$ are monotone as well which
implies that $\Pi_{\F^S}$ is monotone.

Next, let us check that $\Pi_{\F^S}$ satisfies the $\F^S$-local
property. It is necessary and sufficient that for every $A\in
\F^S$ and $H\in L^\infty(\F)$
$$\Pi_{\F^S}(HI_A)=I_A \Pi_{\F^S}(H).$$
We will prove the equality by showing that the right-hand side
satisfies, the characteristic equation (\ref{characteristic})
for the left hand-side. So let $A' \in \F^S$. It is
$$\Pi_\G(I_{A'} (I_A H))=\Pi_\G(I_{A'\cap A} H)=\EQG{I_{A'\cap A} \Pi_{\F^S}(H)}=\EQG{I_{A'} \Big(I_A\Pi_{\F^S}(H)\Big)},$$
where we have used (\ref{characteristic}) in the second
equation. This shows that $I_A \Pi_{\F^S}(H)$ satisfies the
characteristic equation of $I_A H$ and hence by the uniqueness
stated in Lemma \ref{lemmaone} indeed $\Pi_{\F^S}(HI_A)=I_A
\Pi_{\F^S}(H).$

To see $\F^S$-cash invariance of $\Pi_{\F^S}$ assume that for a
$m\in L^\infty(\F^S)$ and $H\in L^\infty(\F)$ we have that the
set $A=\{\Pi_{\F^S}(H+m)\stackrel{(<)}{>} \Pi_{\F^S}(H)+m\}$
has positive measure under $\mathbb{P}_\G$. Then $A$ has also
positive measure under $\mathbb{Q}_\G$ and
\begin{align*}
\Pi_{\G}(HI_A +mI_A)&=\EQG{\Pi_{\F^S}((H+m)I_A)}=
\EQG{\Pi_{\F^S}(H+m)I_A}\\
&\stackrel{(<)}{>}\EQG{(\Pi_{\F^S}(H )+m) I_A }\\
&=\EQG{\Pi_{\F^S}(HI_A)}+\EQG{mI_A} =\Pi_{\G}(HI_A)+\EQG{mI_A},
\end{align*}
where we have used in the second equation the $\F^S$-local
property of $\Pi_{\F^S},$ which we have proved above. This is a
contradiction to the market-consistency of $\Pi_\G$. Thus,
indeed $\Pi_{\F^S}$ is $\F^S$-cash invariant.

Next, for $H\in L^\infty(\F),$ and $\l\in\mathbb{R}$ with
$0\leq \l\leq 1$
\begin{align}
\label{fnconv}
\Pi_{\F^n}(\l H_1+(1-\l)H_2)&=
\sum_{k=1}^{2^n}\frac{I_{B^n_k}}{q^n_k}\Pi_{\G}(\l I_{B^n_k} H_1+(1-\l)I_{B^n_k} H_2)\nonumber\\
&\leq
\sum_{k=1}^{2^n}\frac{I_{B^n_k}}{q^n_k}\Big(\l \Pi_{\G}(I_{B^n_k} H_1)+(1-\l)\Pi_{\G}(I_{B^n_k} H_2)\Big)\nonumber\\
&=\l\Pi_{\F^n}(H_1)+(1-\l)\Pi_{\F^n}H_2),
\end{align}
where we have used the convexity of $\Pi_\G$ in the inequality.
In particular, $\Pi_{\F^S}$ is convex as limit of convex
functionals. (Note that `convexity' is weaker than
`$\F^S$-convexity' since for `convexity' $\l\in[0,1]$ is
assumed to be deterministic.) Before we move on to show the
$\F^S$-convexity of $\Pi_{\F^S}$ let us prove continuity.

In the case that $\Pi_{\G}$ is monotone, $\Pi_{\F^S}$ is
monotone as well and the continuity of $\Pi_{\F^S}$ follows
from the characteristic equation using standard arguments.

On the other hand if $\Pi_{\G}$ is $p$-norm bounded we get for
$H\in L^\infty(\F)$
\begin{align*}
\big|\Pi_{\F^n}(H) \big|&\leq
\l\Big|\sum_{k=1}^{2^n}\frac{I_{B^n_k}}{q^n_k}\int (|H|+|H|^p) I_{B^n_k}d\bar{\mathbb{P}}_\G\Big|\\
&=\l\Big|\sum_{k=1}^{2^n}\frac{I_{B^n_k}}{q^n_k}\int (|H|+|H|^p) I_{B^n_k}
\frac{d\bar{\mathbb{P}}_\G}{d\mathbb{Q}_\G}d\mathbb{Q}_\G\Big|=\l\EQG{(|H|+|H|^p)\frac{d\bar{\mathbb{P}}_\G}{d\mathbb{Q}_\G}\Big|\F^n}.
\end{align*}
Therefore,
\begin{align}
\label{pseudop}
\big|\Pi_{\F^S}(H) |&=\lim_n\big|\Pi_{\F^n}(H) \big|\nonumber\\
&\leq \l\EQG{(|H|+|H|^p) \frac{d\bar{\mathbb{P}}_\G}{d\mathbb{Q}_\G}\Big|\F^S}
\nonumber\\&=\l\EQG{\frac{d\bar{\mathbb{P}}_\G}{d\mathbb{Q}_\G}|\F^S}
\frac{\EQG{(|H|+|H|^p) \frac{d\bar{\mathbb{P}}_\G}{d\mathbb{Q}_\G}\Big|\F^S}}{\EQG{\frac{d\bar{\mathbb{P}}_\G}{d\mathbb{Q}_\G}|\F^S}}
\nonumber\\
&=
\l
\EQG{\frac{d\bar{\mathbb{P}}_\G}{d\mathbb{Q}_\G}|\F^S}
\int (|H|+|H|^p) d\bar{\mathbb{P}}_{\F^S}.
\end{align}
The third equation holds by Lemma \ref{A.12}. 
Now by (\ref{pseudop}) we can
extend $\Pi_{\F^S}$ to $L^p(\Omega,\F,\bar{\mathbb{P}})$
by setting 
$$\tilde{\Pi}_{\F^S}(H)=\limsup_{N\to\infty}\limsup_{m\to\infty}
\Pi_{\F^S} (-m\vee H \wedge N).$$ Note that we have that
$\tilde{\Pi}_{\F^S}$ is convex (as the $\limsup$ of convex
functionals), and agrees with $\Pi_{\F^S}$ on $
L^\infty(\F).$
Define 
$L^p(\Omega,\F,\bar{\mathbb{P}}_{\F^S})$ as all random variables $H$ such that
$\Big(\int |H(\omega')|^p(\omega')\bar{\mathbb{P}}_{\F^S})(d\omega')\Big)^{1/p}:=\Big(\EFS{|H|^p \bar{\xi}}\Big)^{1/p}<\infty,$ 
where $\bar{\xi}$ is the conditional density corresponding to $\bar{\mathbb{P}}_{\F^S}.$
By (\ref{pseudop}), for every $H$ we have that $\tilde{\Pi}_{\F^S}(H)$ is real-valued and uniformly
and bounded in any $L^p(\Omega,\F,\bar{\mathbb{P}}_{\F^S})$
environment around $H.$
It follows then from standard arguments for convex functionals, see for instance Theorem 2.2.9 in
\citet{zalinescu2002convex}, that convergence of $H_n$ to $H$ in 
$L^p(\Omega,\F,\bar{\mathbb{P}}_{\F^S})$ implies that 
$\tilde{\Pi}_{\F^S}(H_n)$ converges to $\tilde{\Pi}_{\F^S}(H)$.
Since $\tilde{\Pi}_{\F^S}$ and
$\Pi_{\F^S}$ agree on $L^\infty(\F)$ we may concluded that indeed 
$\Pi_{\F^S}$ is continuous with respect to bounded a.s. convergence.

Finally let us show that $\Pi_{\F^S}$ is $\F^S$-convex. First
of all, let $\l^{\F^n}\in L^\infty(\F^n)$ with $0\leq
\l^{\F^n}\leq 1.$ Then there exists disjoint sets
$A^n_{1},\ldots, A^n_{r}\in \F^n$ and constants
$\l^n_{1},\ldots, \l^n_{r}\in[0,1]$ with
$\l^{\F^n}=\sum_{j=1}^r \l^n_{i}I_{A^n_{i}}.$ By adding an
additional set with an additional constant equal to zero if
necessary, we may assume without loss of generality that
$\Omega=\bigcup_{i=1}^r A^n_{i}.$ By Lemma \ref{additive} for
sets $C_1,\ldots,C_r\in \F^S$
\begin{align}
\label{fnlocal}
\Pi_{\F^n}(\sum_{i=1}^r I_{C_i} H_i)&=
\sum_{k=1}^{2^n}\frac{I_{B^n_k}}{q^n_k}\Big(\Pi_{\G}(\sum_{i=1}^r I_{B^n_k} I_{C_i}H_i)\Big)\nonumber\\
&=
\sum_{i=1}^r \sum_{k=1}^{2^n}\frac{I_{B^n_k}}{q^n_k}\Big(\Pi_{\G}(I_{B^n_k} I_{C_i}H_i)\Big)
=\sum_{i=1}^r \Pi_{\F^n}(I_{C_i} H_i).
\end{align}
Therefore,
\begin{align*}
\Pi_{\F^n}(\l^{\F^n} H_1+(1-\l^{\F^n})H_2)&=\Pi_{\F^n}\Big(\sum_{i=1}^rI_{A^n_i}( \l^n_iH_1+(1-\l^{n}_i)H_2)\Big)\\
&=\sum_{i=1}^r\Pi_{\F^n}( \l^n_iI_{A^n_i}H_1+(1-\l^{n}_i)I_{A^n_i}H_2)
\\&\leq \sum_{i=1}^r \l^n_i \Pi_{\F^n}(I_{A^n_i}H_1)+(1-\l^{n}_i)\Pi_{\F^n}(I_{A^n_i}H_2)\\
&=\sum_{i=1}^r \l^n_i I_{A^n_i}\Pi_{\F^n}(H_1)+\sum_{i=1}^r(1-\l^{n}_i)I_{A^n_i}\Pi_{\F^n}(H_2)\\
&=\l^{\F^n} \Pi_{\F^n}(H_1)+(1-\l^{\F^n})\Pi_{\F^n}(H_2),
\end{align*}
where we have used (\ref{fnlocal}) in the second equation,
(\ref{fnconv}) in the inequality, and the $\F^n$-local property
for $\Pi_{\F^n},$ proved in (\ref{local}), in the third
equation. Hence, indeed $\Pi_{\F^n}$ is $\F^n$-convex.

Next, if $\l^{\F^m}\in L^\infty(\F^m)$ with $0\leq
\l^{\F^m}\leq 1,$ and $m\in\mathbb{N}$ then clearly
$\l^{\F^m}\in \F^n$ for every $n\geq m.$ This entails
\begin{align}
\label{fsmconv}
\Pi_{\F^S}(\l^{\F^m} H_1+(1-\l^{\F^m})H_2)&=\lim_n \Pi_{\F^n}(\l^{\F^m} (H_1+(1-\l^{\F^m})H_2)\nonumber\\
&\leq
\lim_n \l^{\F^m} \Pi_{\F^n}(H_1)+(1-\l^{\F^m})\Pi_{\F^m}(H_2)\nonumber\\
&=\l^{\F^m} \Pi_{\F^S}(H_1)+(1-\l^{\F^m})
\Pi_{\F^m}(H_2).
\end{align}
Finally, to see that $\Pi_{\F^S}$ is $\F^S$-convex let
$\l^{\F^S}\in L^\infty(\F^S)$ with $0\leq \l^{\F^S}\leq 1.$
Then also $0\leq \EFn{\l^{\F^S}}\leq 1$. Furthermore, by the
martingale convergence theorem $\EFn{\l^S}$ converges to
$\EFS{\l^S}=\l^S$ a.s. Hence, by the continuity of $\Pi_{\F^S}$
we obtain that
\begin{align*}
\Pi_{\F^S}(\l^{\F^S} H_1+(1-\l^{\F^S})H_2)&=\lim_n
\Pi_{\F^S}\Big(\EFn{\l^{\F^S}} H_1+\big(1-\EFn{\l^{\F^S}}\big)H_2\Big)\\
&\leq \lim_n
\EFn{\l^{\F^S}} \Pi_{\F^S}(H_1)+\big(1-\EFn{\l^{\F^S}}\big)
\Pi_{\F^S}(H_2)\\
&=
\l^{\F^S} \Pi_{\F^S}(H_1)+(1-\l^{\F^S})
\Pi_{\F^S}(H_2),
\end{align*}
where we have used (\ref{fsmconv}) in the inequality. The lemma
is proved.
\end{proof}
Lemma \ref{lemmaone} and Lemma \ref{lemmatwo} imply the
theorem.
{\unskip\nobreak\hfill$\Box$\par\addvspace{\medskipamount}}
\subsection{Proofs of the results in Section \ref{timecon}}
For the proof of Theorem \ref{predictable} we will need the
following Lemma:
\begin{lemma}
\label{Q} In the setting of Theorem \ref{predictable}, let $\t$
be a stopping time such that $\s\leq \t<\t_\s.$ Let $H$ be a
bounded, $\F_{\t}$-measurable payoff. Then
$\Pi_{\s}(H)=\EQsigma{H}.$
\end{lemma}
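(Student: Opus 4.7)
The plan is to reduce the statement to the market-consistency axiom. Market-consistency with $H = 0$ and $H^S \in L^\infty(\bar{\F}^S_T \vee \F_\s)$ gives, using normalization,
\[\Pi_\s(H^S) \;=\; \EQsigma{H^S} + \Pi_\s(0) \;=\; \EQsigma{H^S}.\]
So it suffices to show that any bounded $\F_\t$-measurable $H$ with $\s \le \t < \t_\s$ is in fact $\bar{\F}^S_T \vee \F_\s$-measurable.

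The key observation is that on the interval $[\s,\t_\s)$ no jump of $Y$ has occurred. More precisely, let $N_\s := \sup\{i : \t_i \le \s\}$ be the number of jumps of $Y$ that have arrived by time $\s$; since jumps happen only at the predictable times $\t_1 \le \t_2 \le \cdots$ and no additional randomness is added between them (the standing assumption $\bar{\F}^Y_{\t_{i+1}-} = \bar{\F}^Y_{\t_i}$), one has $\bar{\F}^Y_\s = \bar{\F}^Y_{\t_{N_\s}}$. By definition of $\t_\s$ we have $\t_\s = \t_{N_\s+1}$, so
\[\bar{\F}^Y_{\t_\s-} \;=\; \bar{\F}^Y_{\t_{N_\s+1}-} \;=\; \bar{\F}^Y_{\t_{N_\s}} \;=\; \bar{\F}^Y_\s.\]
Thus, for any stopping time $\t$ with $\s \le \t < \t_\s$,
\[\bar{\F}^Y_\s \;\subseteq\; \bar{\F}^Y_\t \;\subseteq\; \bar{\F}^Y_{\t_\s-} \;=\; \bar{\F}^Y_\s,\]
i.e.\ $\bar{\F}^Y_\t = \bar{\F}^Y_\s$. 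Combining this with $\bar{\F}^S_\t \subseteq \bar{\F}^S_T$, we obtain
\[\F_\t \;=\; \bar{\F}^S_\t \vee \bar{\F}^Y_\t \;\subseteq\; \bar{\F}^S_T \vee \bar{\F}^Y_\s \;\subseteq\; \bar{\F}^S_T \vee \F_\s.\]
Hence $H \in L^\infty(\bar{\F}^S_T \vee \F_\s)$, and the initial observation yields $\Pi_\s(H) = \EQsigma{H}$.

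The only delicate point is the identification $\bar{\F}^Y_\t = \bar{\F}^Y_\s$: this is essentially a consequence of the predictability of the jump times together with the ``no information between jumps'' hypothesis, but since $\t$ is itself a (possibly random) stopping time, one has to argue measurably, splitting the sample space according to the value of $N_\s$, or equivalently, by using the sequence of announcing stopping times for each $\t_i$ to see that $\bar{\F}^Y$ is constant in the interval $[\t_{N_\s},\t_{N_\s+1})$ pathwise. This is the only step that really invokes the structural assumption on $Y$; everything else is a direct application of market-consistency.
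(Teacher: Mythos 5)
Your proposal is correct and follows essentially the same route as the paper: the paper's proof likewise notes that, because $\bar{\F}^Y_{\t_{i+1}-}=\bar{\F}^Y_{\t_i}$, the payoff $H$ is $\bar{\F}^S_\t\vee\F_\s$-measurable (hence in $L^\infty(\bar{\F}^S_T\vee\F_\s)$) and then invokes the definition of market-consistency together with normalization. You merely spell out in more detail the measurability step that the paper states in one line.
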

\begin{proof}
As $\bar{\F}^Y_{\t_{i+1}-}=\bar{\F}^Y_{\t_{i}}$ for all $i$,
$H$ is $\bar{\F}^S_\t \vee \F_\s$-measurable. Consequently, the
lemma follows directly from the definition of
market-consistency.
\end{proof}

\textit{Proof of Theorem \ref{predictable}.} As the jump times
of $Y$ for the $i$-th jump, $(\t_i),$ are predictable
respectively, $\t_\s$ has to be predictable as well. This is
seen as follows: First of all note that by the predictability
of $\t_i$ there exists sequences of stopping times $(\t_i^n)$
with $\t^n_i<\t_i$ and $\t^n_i\uparrow \t_i$ as $n\to\infty.$
Define $\sigma^n:=\sum_{i=1}^\infty
I_{\{\t_i=\t_{\sigma}\}}(\t^n_i \vee \sigma).$ Let $\t_{0}:=0.$
Then $\{\t_i=\t_{\sigma}\}= \{\t_{i-1}\leq \sigma<\t_i\}\in
\F_{\sigma}$ for $i=1,2,\ldots$ (Since $\t_\s$ is the first
jump after time $\s$, we know at time $\s$, if we have observed
$i-1$ jumps so far, so that the next jump will be the $i$-th
one.) Therefore,
$$\{\sigma^n\leq t\}=\bigcup_{i=1}^\infty \Big( \{\t^n_i\leq t\}\cap \{\sigma\leq t\}\cap  \{\t_i=\t_{\sigma}\}\Big)\in \F_t  .$$
Thus, $\sigma^n$ is indeed a stopping time. Furthermore,
clearly $\sigma^n<\tau_\sigma$ and
$$\sigma^n\uparrow \sum_{i=1}^\infty
I_{\{\t_i=\t_{\sigma}\}}(\t_i \vee \sigma)=\sum_{i=1}^\infty
I_{\{\t_i=\t_{\sigma}\}}(\t_{\sigma} \vee \sigma)
=\sum_{i=1}^\infty I_{\{\t_i=\t_{\sigma}\}}\t_{\sigma}=
\tau_\sigma,$$ where we have used that $\s<\t_\s.$ Hence,
$\t_\s$ is indeed predictable.

Next, let $H\in L^\infty(\F_{\t_{\s}})$ and $A\in \F_{\s^m}$
for an $m\in\mathbb{N}.$ By time-consistency and the local
property of $(\Pi_\s)$ we get for all $n\geq m$
$$
\Pi_{\s}(H)= \Pi_{\s}( \Pi_{\s^n}(H I_A+HI_{A^c}))= \Pi_{\s}( I_A\Pi_{\s^n}(H) +I_{A^c}\Pi_{\s^n}(H)).
$$
Next observe that for $n\geq m,$ $I_A\Pi_{\s^n}(H)
+I_{A^c}\Pi_{\s^n}(H)$ is $\F_{\s^n}$-measurable. By Lemma
\ref{Q} this implies that
\begin{align*}
\Pi_{\s}( I_A\Pi_{\s^n}(H) +I_{A^c}\Pi_{\s^n}(H))&=
\EQsigma{I_A\Pi_{\s^n}(H) +I_{A^c}\Pi_{\s^n}(H)}\\
&=
\EQsigma{I_A\Pi_{\s^n}(H)} +\EQsigma{I_{A^c}\Pi_{\s^n}(H)}
\\
&=
\EQsigma{\Pi_{\s^n}(I_AH)} +\EQsigma{\Pi_{\s^n}(I_{A^c}H)}\\
&=
\Pi_{\s}( \Pi_{\s^n}(I_AH)) +\Pi_{\s}(\Pi_{\s^n}( I_{A^c}H))
=
\Pi_{\s}( I_AH) +\Pi_{\s}( I_{A^c}H),
\end{align*}
where we used Lemma \ref{lemmalocal} in the third and
time-consistency in the last equation. Hence, for every $A\in
\F_{\s^m}$ we have that \be \label{last3} \Pi_{\s}(H)=\Pi_{\s}(
I_AH) +\Pi_{\s}( I_{A^c}H). \ee Since by assumption $S$ is
continuous and $\bar{\F}^Y_{\t_{i+1}-}=\bar{\F}^Y_{\t_i}$ for
all $i$, $\bigcup_m \F_{\s^m}$ is a generating system for
$\F^S_{\t_\s}.$ By the continuity of $\Pi_\s$ this entails that
(\ref{last3}) holds for all $A\in \F^S_{\t_\s}.$ Thus, indeed
$\Pi_{\s}$ restricted to $\F_{\t_\s}$ satisfies the
market-local property. This proves the first part of the
theorem. The second part follows from Theorem \ref{main} and
Theorem \ref{main2}.
{\unskip\nobreak\hfill$\Box$\par\addvspace{\medskipamount}}

\subsection{Proofs of the results in Section \ref{g}} \textit{Proof of Theorem
\ref{g-expectation}.} First assume that
$g(t,z^f,z,\tilde{z})-\theta_tz^f$ does not depend on $z^f.$ We
will prove that for every $\t$, $\mathcal{E}^g_{\t}$ is
market-consistent, by using Proposition \ref{prop}
(i)$\Rightarrow$ (iii). Without loss of generality assume that
$\t=0.$ Let $H^S\in L^\infty(\bar{\F}^S_T).$ Denote by
$W^{*,f}$ the Brownian motion under $\mathbb{Q}$, i.e.,
$W^{*,f}_s=W^f_s-\int_0^s \theta_u du,$ where the integral is
defined componentwise. Since the financial market is complete
there exists a predictable $n$-dimensional process $Z^f\in
L^2(\bar{\F}^S ,d\mathbb{Q}\times du)$ such that $H^S=\int_0^T
Z^f_s dW^{*,f}_s.$ Define $Y_s=\EbarQSs{H^S}=\int_0^s Z^f_u
dW^{*,f}_u.$ Clearly, for every stopping time $\s$ we have that
$$\EbarStau{\int_\s^T |Z^f_u|^2 dW^{*,f}_u}\leq
2||Y||^2_{S^\infty}\leq 2||H||^2_\infty.$$ Predictable
processes satisfying such a boundedness property are also
called BMOs, see \citet{kazamaki1994continuous}. As $\theta$ is
bounded it follows from Theorem 3.24 in
\citet{barrieu:elkaroui:2009:carmona} that $Z^f$ is a BMO under
$\mathbb{P}$. In particular, $Z^f\in
L^2(\bar{\F}^S,d\mathbb{P}\times du).$ Now we get
\begin{align*}
dY_s
&=-g(s,0,0,0)ds+Z^f_s
dW^{*,f}_s\\
&=-(g(s,Z^f_s,0,0)-\theta_s Z^f_s)ds+Z^f_s
dW^{*,f}_s=-g(s,Z^f_s,0,0)ds+Z^f_s
dW^{f}_s,
\end{align*}
where we used that $g(s,0,0,0)=0$ in the first equation. In the
second equation we applied that
$g(s,0,0,0)=g(s,z^f,0,0)-\theta_sz^f$, as by assumption
$g(s,z^f,0,0)-\theta_sz^f$ does not depend on $z^f.$ In the
last equation we used the definition of $W^{*,f}$. This entails
that $Y_s=\EbarQSs{H}$ solves the BSDE with terminal condition
$H$ and driver $g.$ Therefore, indeed for every $H^S$ we have
that $\EQz{H^S}=\mathcal{E}^g_0(H^S),$ and it follows from
Proposition \ref{prop} (i)$\Rightarrow$(iii) (with
$\G=\{\Omega,\emptyset\}$) that $\mathcal{E}^g_0$ is
market-consistent. For general $\t$ the argument is similar.

Now let us prove the other direction. For arbitrary
$\bar{z}^f\in\mathbb{R}^n$ define
$$\bar{g}(t,z^f,z,\tilde{z}):=g(t,z^f+\bar{z}^f,z,\tilde{z})-\theta_t \bar{z}^f.$$
We need to show that $\bar{g}=g.$ Let
$(\mathcal{E}^{g}_t(H+\bar{z}^f W^{*,f}_T),Z^f,Z,\tilde{Z})$ be
the solution of the BSDE with terminal condition $H+\bar{z}^f
W^{*,f}_T$ and driver function $g.$ Note that the process
$Y^*_t:=\mathcal{E}^{g}_t(H+\bar{z}^f W^{*,f}_T)-\bar{z}^f
W^{*,f}_t$ is equal to $H$ at time $T.$ On the other hand we
have
\begin{align*}
dY^*_t&= -(g(t,Z^f_t,Z_t,\tilde{Z}_t)-\theta_t \bar{z}^f)dt+ (Z^f_t-\bar{z}^f)dW^f_t+ Z_t dW_t+\int_{\mathbb{R}\setminus\{0\}}\tilde{Z}_t(x) \tilde{N}(dt,dx)\\
&=-\bar{g}(t,Z^f_{new,t},Z_t,\tilde{Z}_t)dt+ Z^f_{new,t} dW^f_t+ Z_t
dW_t+\int_{\mathbb{R}\setminus\{0\}}\tilde{Z}_t(x) \tilde{N}(dt,dx),\end{align*} where
$Z^f_{new}:=Z^f-\bar{z}^f.$ Therefore, $Y^*$ solves the BSDE
with terminal condition $H$ and driver $\bar{g}.$ Hence, for
every $t$ we have $\mathcal{E}^{g}_t(H+\bar{z}^f
W^{*,f}_T)-\bar{z}^f
W^{*,f}_t=Y^*_t=\mathcal{E}^{\bar{g}}_t(H)$. Denote by
$\tilde{S}$ the vector of the discounted stock prices. By
market-consistency we can conclude that for every $H\in
L^\infty(\F_T)$
\begin{align}
\label{bar=}
\mathcal{E}^{g}_t(H)&=\mathcal{E}^{g}_t\Big(H+\int_0^T
\bar{z}^f \tilde{\s}^{-1}(t,e^{rt}\tilde{S}_t)d\tilde{S}_t
\Big)- \int_0^t
\bar{z}^f\tilde{\s}^{-1}(t,e^{rt}\tilde{S}_t)d\tilde{S}_t\nonumber\\&=\mathcal{E}^{g}_t(H+\bar{z}^f
W^{*,f}_T)-\bar{z}^f W^{*,f}_t
=\mathcal{E}^{\bar{g}}_t(H),
\end{align}
for all $t.$ Next choose $z^f\in\mathbb{R}^n,z\in\mathbb{R}^d$
and $\tilde{z}\in L^2(\nu(dx)).$ Set $$H:=-\int_0^T
g(s,z^f,z,\tilde{z})ds+z^f W^f_T+zW_T+\int_0^T
\int_{\mathbb{R}\setminus\{0\}}\tilde{z}(x)\tilde{N}(ds,dx).$$
Notice that $-\int_0^tg(s,z^f,z,\tilde{z})ds +\int_0^t z^f
dW^f_s+\int_0^tzdW_s+\int_0^t
\int_{\mathbb{R}\setminus\{0\}}\tilde{z}(x)\tilde{N}(ds,dx)$ by
definition is the solution of the BSDE with terminal condition
$H$ and driver $g.$ In particular, $\mathcal{E}^g_0(H)=0.$ By
(\ref{bar=}) this yields
$\mathcal{E}^{\bar{g}}_0(H)=\mathcal{E}^g_0(H)=0$ and
\begin{align*}
-\int_0^t &g(s,z^f,z,\tilde{z})ds+\int_0^t z^f
dW^f_s+\int_0^tzdW_s+\int_0^t
\int_{\mathbb{R}\setminus\{0\}}\tilde{z}(x)\tilde{N}(ds,dx)\\
&=\mathcal{E}^g_t(H)=\mathcal{E}^{\bar{g}}_t(H)\\
&=\mathcal{E}^{\bar{g}}_0(H)
-\int_0^t \bar{g}(s,Z^f_s,Z_s,\tilde{Z}_s)ds+\int_0^t Z^f_s
dW^f_s+\int_0^tZ_s dW_s+\int_0^t
\int_{\mathbb{R}\setminus\{0\}}\tilde{Z}_s(x)\tilde{N}(ds,dx)
\\
&=
-\int_0^t \bar{g}(s,Z^f_s,Z_s,\tilde{Z}_s)ds+\int_0^t Z^f_s
dW^f_s+\int_0^tZ_s dW_s+\int_0^t
\int_{\mathbb{R}\setminus\{0\}}\tilde{Z}_s(x)\tilde{N}(ds,dx),
\end{align*}
where $(Z^f,Z,\tilde{Z})$ belong to the solution of the
$\bar{g}$-expectation with terminal condition $H.$ By the
uniqueness of the decomposition of semi-martingales this
entails that \begin{align} \label{one}\int_0^t z^f
dW^f_s&+\int_0^tzdW_s+\int_0^t
\int_{\mathbb{R}\setminus\{0\}}\tilde{z}(x)\tilde{N}(ds,dx)\nonumber\\
&=\int_0^t Z^f_s dW^f_s+\int_0^tZ_s dW_s+\int_0^t
\int_{\mathbb{R}\setminus\{0\}}\tilde{Z}_s(x)\tilde{N}(ds,dx),\end{align}
and \be \label{two} \int_0^t g(s,z^f,z,\tilde{z})ds= \int_0^t
\bar{g}(s,Z^f_s,Z_s,\tilde{Z}_s)ds.\ee Taking for instance the
quadratic covariation with respect to the components of $W^f$,
$W$, and with respect to $\tilde{N}$ in (\ref{one})
respectively, we may conclude that $Z^f_t=z^f,$ $Z_t=z,$
$d\mathbb{P}\times dt$ a.s., and $\tilde{Z}_t=\tilde{z},$
$\nu(dx)\times d\mathbb{P}\times dt$. But then (\ref{two})
yields that for a.s. all $\omega$ \beas \int_0^t
g(s,z^f,z,\tilde{z})ds= \int_0^t
\bar{g}(s,z^f,z,\tilde{z})ds,\mbox{ for all }t\in[0,T]\eeas and
therefore $g(t,z^f,z,\tilde{z})=\bar{g}(t,z^f,z,\tilde{z})$ for
a.s. all $\omega$ for Lesbegue a.s. all $t.$
{\unskip\nobreak\hfill$\Box$\par\addvspace{\medskipamount}}

\subsection{Proofs of the results in Section \ref{g2}}
{\it Proof of Proposition \ref{mean-var}.} First of all note
that since $\frac{d\mathbb{Q}^h}{d\mathbb{P}}$ is
$\sigma(W^{f,*}_t|0\leq t\leq T)$-measurable, we have that $W$
and $\tilde{N}$ have the same joint distribution under
$\mathbb{Q}^h$ as under $\mathbb{P}$ (since they are
independent of $W^{f,*}$).

By well known projection results, there exists adapted
$Z^{h,f}_{ih}:\Omega\to \mathbb{R}^n,$ $Z^h_{ih}:\Omega\to
\mathbb{R}^d,$ measurable with respect to $\F_{ih},$
$\tilde{Z}^h_{ih}:\Omega\times \mathbb{R}\setminus\{0\}\to
\mathbb{R},$ measurable with respect to
$\F_{ih}\otimes\mathcal{B}(\mathbb{R}\setminus\{0\}),$ and a
real-valued $\mathbb{Q}^h$-martingale $(L^h_{ih})_i$ which is
orthogonal (under $\mathbb{Q}^h$) to $W^{f,*}_{ih},$ $W_{ih},$
and $ \tilde{N}((0,ih],dx),$ such that
\begin{align*}
\Pi_{(i+1)h}
(H)&=\EQh{\Pi_{(i+1)h}(H)}+ Z^{h,f}_{ih} \Delta W^{f,*}_{(i+1)h}+ Z^h_{ih} \Delta W_{(i+1)h}\\
&\hspace{0.5cm}+\int_{\mathbb{R}\setminus\{0\}}
\tilde{Z}^h_{ih}(x) \tilde{N}((ih,(i+1)h],dx)+\Delta L^h_{(i+1)h}.
\end{align*}
For the sake of simplicity we will omit the superscript $h$ for
the $Z^{h,f},Z^h$, and $\tilde{Z}^h$ in the sequel.

It follows that
\begin{align}
\label{delta} \Delta& \Pi_{(i+1)h}(H)\nonumber\\
&= \Pi_{(i+1)h}
(H)- \Pi_{ih}(H)\nonumber\\
&=
 \Pi_{(i+1)h}
(H)- \Pi^v_{ih,(i+1)h}(\Pi_{(i+1)h}(H))
\nonumber\\
&= Z^f_{ih} \Delta W^{f,*}_{(i+1)h}+ Z_{ih} \Delta W_{(i+1)h}+\int_{\mathbb{R}\setminus\{0\}}
\tilde{Z}_{ih}(x) \tilde{N}((ih,(i+1)h],dx)+\Delta L_{(i+1)h}\nonumber
\\
&\hspace{0.3cm}-\Pi^v_{ih}
\bigg( Z^f_{jh} \Delta W^{f,*}_{(i+1)h}+ Z_{ih} \Delta W_{(i+1)h}+\int_{\mathbb{R}\setminus\{0\}}
\tilde{Z}_{ih}(x) \tilde{N}((ih,(i+1)h],dx)+\Delta L_{(i+1)h}\bigg)
\nonumber\\
&= Z^f_{jh} \Delta W^{f,*}_{(i+1)h}+Z_{ih} \Delta W_{(i+1)h}+\int_{\mathbb{R}\setminus\{0\}}
\tilde{Z}_{ih}(x) \tilde{N}((ih,(i+1)h],dx)+\Delta L_{(i+1)h}
\nonumber\\
&\hspace{0.3cm}-\Pi^v_{ih,(i+1)h}
\bigg( Z_{ih} \Delta W_{(i+1)h}+\int_{\mathbb{R}\setminus\{0\}}
\tilde{Z}_{ih}(x) \tilde{N}((ih,(i+1)h],dx)+\Delta L_{(i+1)h}\bigg)
\nonumber\\
&\stackrel{!}{=}Z^f_{jh} \Delta W^{f}_{(i+1)h}+Z_{ih} \Delta W_{(i+1)h}+\int_{\mathbb{R}\setminus\{0\}}
\tilde{Z}_{ih}(x) \tilde{N}((ih,(i+1)h],dx)+\Delta L_{(i+1)h}
\nonumber\\
&\hspace{0.3cm}-\Big[\theta_{ih}Z^f_{ih}+\frac{\a}{2}|Z_{ih}|^2 +\frac{\a}{2}
\int_{\mathbb{R}\setminus\{0\}}|\tilde{Z}_{ih}(x)|^2\nu(dx)
\Big]h\nonumber\\
&\hspace{0.3cm}-\frac{\a}{2}\EQti{(\Delta L_{(i+1)h}-\Enip{\Delta
L_{(i+1)h}})^2},
\end{align}
where we have used (\ref{rec}) in the third equation.
Furthermore, we applied cash invariance in the third and
market-consistency in the fourth equation. To see that the last
equation holds denote by
$Cov_{\mathbb{P}_{\F^{S^h}_{(i+1)h}}}(X_1,X_2)$ the covariance
of $X_1$ and $X_2$ with respect to
$\mathbb{P}_{\F^{S^h}_{(i+1)h}}.$ Since all random variables
are $\F_{(i+1)h}$-measurable we may assume that
$\frac{d\mathbb{Q}^h}{d\mathbb{P}}$ is
$\F^{S^h}_{(i+1)h}$-measurable.

It is
\begin{align}
\label{last}
\Pi^v_{ih,(i+1)h}\bigg( Z_{ih} &\Delta W_{(i+1)h}+\int_{\mathbb{R}\setminus\{0\}}
\tilde{Z}_{ih}(x) \tilde{N}((ih,(i+1)h],dx)+\Delta L_{(i+1)h}\bigg)\nonumber\\
&=\frac{\a}{2}\EQti{Var_{\F^{S^h}_{(i+1)h}}\bigg( Z_{ih} \Delta W_{(i+1)h}+\int_{\mathbb{R}\setminus\{0\}}
\tilde{Z}_{ih}(x) \tilde{N}((ih,(i+1)h],dx)+\Delta L_{(i+1)h}\bigg)}\nonumber\\
&=\frac{\a}{2}\bigg(h|Z_{ih}|^2 +h\int_{\mathbb{R}\setminus\{0\}}
|\tilde{Z}_{ih}(x)|^2 \nu(dx)+\EQti{\big(\Delta
L_{(i+1)h}-\Enip{\Delta
L_{(i+1)h}}\big)^2}
\nonumber\\&\hspace{0.5cm}+ 2\sum_{j=1}^d Z^j_{ih}\EQti{
Cov_{\mathbb{P}_{\F^{S^h}_{(i+1)h}}}( \Delta W^j_{(i+1)h},\Delta L_{(i+1)h})}\nonumber\\
&\hspace{0.5cm}
+2\EQti{Cov_{\mathbb{P}_{\F^{S^h}_{(i+1)h}}}(\int_{\mathbb{R}\setminus\{0\}}\tilde{Z}_{ih}(x)
\tilde{N}((ih,(i+1)h],dx),\Delta L_{(i+1)h})}
\nonumber\\&\hspace{0.5cm}+ 2\sum_{j=1}^dZ^j_{ih}\EQti{
Cov_{\mathbb{P}_{\F^{S^h}_{(i+1)h}}}( \Delta W^j_{(i+1)h},\int_{\mathbb{R}\setminus\{0\}}\tilde{Z}_{ih}(x)
\tilde{N}((ih,(i+1)h],dx))}\bigg),
\end{align}
where we have used that $W$ and $\tilde{N}$ are independent of
$S^h$ and that $\frac{d\mathbb{Q}^h}{d\mathbb{P}}$ is
$\F^{S^h}_{(i+1)h}$-measurable.

Hence, to prove (\ref{delta}) the only thing what remains left
is to show that the the covariance terms in (\ref{last}) are
zero. By Lemma \ref{A.12} (with $\hat{\mathbb{P}}=\mathbb{Q}^h$
and $\bar{\mathbb{P}}=\mathbb{P}$) we get for $j=1,\ldots,d$
$$\ESip{\Delta
W^j_{(i+1)h}}=\Enip{\Delta
W^j_{(i+1)h}}=\Ei{\Delta
W^j_{(i+1)h}}=0,
$$
where we used the independence of $S^h$ and $W$ in the second
equation. Hence, since $L_{(i+1)h}$ and $\Delta W_{(i+1)h}$ are
orthogonal under $\mathbb{Q}^h_{\F_{ih}},$ we may conclude that
 \begin{align*}
 \EQti{
Cov_{\mathbb{P}_{\F^{S^h}_{(i+1)h}}}\Big( \Delta W^j_{(i+1)h},\Delta
L_{(i+1)h}\Big)}&=\EQti{
 \Delta W^j_{(i+1)h}\Delta
L_{(i+1)h}}\\
&=Cov_{\mathbb{Q}^h_{\F_{ih}}}( \Delta
W^j_{(i+1)h},\Delta L_{(i+1)h})=0.
\end{align*} Similarly, it may be seen
that the second and third covariance terms in (\ref{delta}) are
zero. Thus, (\ref{delta}) is proved. From (\ref{delta}) we may
finally conclude
\begin{align*}
\Pi_T&(H)-\Pi_{ih}(H)\\
&=\sum_{j=i}^{T/h-1}
\Delta\Pi_{(j+1)h}(H)\nonumber\\
&=\sum_{j=i}^{T/h-1}\bigg(Z^f_{jh} \Delta W^{f}_{(j+1)h}+Z_{jh} \Delta W_{(j+1)h}+\int_{\mathbb{R}\setminus\{0\}}
\tilde{Z}_{jh}(x) \tilde{N}((jh,(j+1)h],dx)+\Delta L_{(j+1)h}\bigg)
\nonumber\\
&\hspace{0.3cm}-\sum_{j=i}^{T/h-1}\Big[\theta_{jh}Z^f_{jh}
+\frac{\a}{2}|Z_{jh}|^2 +\frac{\a}{2}\int_{\mathbb{R}\setminus\{0\}}|\tilde{Z}_{jh}(x)|^2\nu(dx)\Big]h
\\
&\hspace{0.3cm}+\frac{\a}{2}\EQj{\big(\Delta L_{(j+1)h}-\ESjp{\Delta
L_{(j+1)h}}\big)^2}.
\end{align*}
Since by construction $\Pi_T(H)=H$ the proposition is proved.
{\unskip\nobreak\hfill$\Box$\par\addvspace{\medskipamount}}

\bibliographystyle{apalike}
\bibliography{APbib}

\end{document}